\newcommand{\Real}{\mathbb{R}}
\newcommand{\fpr}[1]{\left(#1\right)}
\newcommand{\tpr}[1]{\left[#1\right]}
\newcommand{\spr}[1]{\left\{#1\right\}}
\newcommand{\norm}[1]{\left\lVert#1\right\rVert}
\newcommand{\V}[1]{{\bm{\mathbf{\MakeLowercase{#1}}}}} 
\newcommand{\M}[1]{{\bm{\mathbf{\MakeUppercase{#1}}}}} 
\newcommand{\tr}{\operatorname{tr}} 
\newcommand{\abs}[1]{\left\lvert#1\right\rvert}
\def\bA{\mathbf{A}}
\def\bG{\mathbf{G}}
\def\bI{\mathbf{I}}
\def\bH{\mathbf{H}}
\def\bM{\mathbf{M}}
\def\bX{\mathbf{X}}
\def\bY{\mathbf{Y}}
\def\bZ{\mathbf{Z}}
\def\bB{\mathbf{B}}
\def\bC{\mathbf{C}}
\def\bU{\mathbf{U}}
\def\bV{\mathbf{V}}
\def\btZ{\widetilde{\bZ}}
\def\bG{\mathbf{G}}
\def\bI{\mathbf{I}}
\def\bH{\mathbf{H}}
\def\bM{\mathbf{M}}
\def\bX{\mathbf{X}}
\def\bY{\mathbf{Y}}
\def\bR{\mathbf{R}}
\def\bZ{\mathbf{Z}}
\def\bB{\mathbf{B}}
\def\bC{\mathbf{C}}
\def\bU{\mathbf{U}}
\def\bV{\mathbf{V}}
\def\bAMo{\mathbf{A}_{(M_{\text{o}})}}
\def\bHMo{\mathbf{H}_{(M_{\text{o}})}}
\def\bXMo{\mathbf{X}_{M_{\text{o}}\,\cdot}}
\def\bBMo{\mathbf{B}_{\cdot\,M_{\text{o}}}}
\def\bVMo{\mathbf{V}_{(M_{\text{o}})}}
\newcommand{\vc}{\textrm{vec}}
\def\bV{\mathbf{V}}
\def\bU{\mathbf{U}}
\def\bGamma{\boldsymbol{\Gamma}}
\def\bSigma{\boldsymbol{\Sigma}}
\def\bhatY{\widehat{\mathbf{Y}}}
\def\bhatB{\widehat{\mathbf{B}}}
\def\bhatSigma{\widehat{\bSigma}}
\DeclareMathOperator*{\argmin}{arg\,min}
\newcommand{\ubar}[1]{\underaccent{\bar}{#1}}
\newcommand*{\bigchi}{\mbox{\Large$\chi$}}
\newcommand{\sighat}[1]{\widehat{\M{\Sigma}}_{#1}}
\theoremstyle{plain}
\newtheorem{theorem}{Theorem}[section]
\newtheorem{lemma}[theorem]{Lemma}
\newtheorem{condition}[theorem]{Condition}
\theoremstyle{remark}
\newtheorem{definition}[theorem]{Definition}
\begin{document}

\begin{frontmatter}
\title{The EAS approach to variable selection for multivariate response data in high-dimensional settings}
\runtitle{EAS for variable selection in multivariate regression}

\begin{aug}
\author[a]{\fnms{Salil}  \snm{Koner}\ead[label=e1,mark]{Salil.Koner@duke.edu}}
\and
\author[b]{\fnms{Jonathan P}  \snm{Williams}\ead[label=e2]{jwilli27@ncsu.edu}}
  
\runauthor{Koner and Williams}
\address[a]{Duke University, Durham, NC. \printead{e1}}
\address[b]{North Carolina State University, Raleigh, NC. \printead{e2}}
\end{aug}

\begin{abstract}
In this paper, we develop an {\em epsilon admissible subsets} (EAS) model selection approach for performing group variable selection in the high-dimensional multivariate regression setting.  This EAS strategy is designed to estimate a posterior-like, generalized fiducial distribution over a parsimonious class of models in the setting of correlated predictors and/or in the absence of a sparsity assumption.  The effectiveness of our approach, to this end, is demonstrated empirically in simulation studies, and is compared to other state-of-the-art model/variable selection procedures.  Furthermore, assuming a matrix-Normal linear model we show that the EAS strategy achieves {\em strong model selection consistency} in the high-dimensional setting if there does exist a sparse, true data generating set of predictors.  In contrast to Bayesian approaches for model selection, our generalized fiducial approach completely avoids the problem of simultaneously having to specify arbitrary prior distributions for model parameters and penalize model complexity; our approach allows for inference directly on the model complexity. \textcolor{black}{Implementation of the method is illustrated through yeast data to identify significant cell-cycle regulating transcription factors.}

\end{abstract}

\begin{keyword}[class=MSC2020]
\kwd[Primary ]{62H12}
\end{keyword}

\begin{keyword}
\kwd{generalized fiducial inference}
\kwd{model selection}
\kwd{regularized regression}
\kwd{asymptotic consistency}

\end{keyword}

\end{frontmatter}
 

\section{Introduction}\label{sec: Intro}

With the advent of modern data collection technologies in many real-life applications, multiple responses are simultaneously collected that are characterized by a set of explanatory variables. Data of this structure falls under the scope of multivariate regression. Examples arise in chemometrics \citep{frank1993statistical}, genome-wide association studies (GWAS) \citep{boulesteix2005predicting}, etc. More often than not, the number of predictors $p$ is much larger than the number of observed multivariate response vectors $n$. Parsimoniously modeling the variability in the response without overfitting is necessary to enhance prediction accuracy. For example, in GWAS, identification of key genetic markers out of millions that are associated with a univariate or multivariate phenotype is of scientific interest \citep{vounou2010discovering}. Model/variable selection is a statistical framework that has been widely popular in this context. A naive way to approach this multivariate problem is to model the components of the response as separate univariate regressions on the predictors, and to employ existing selection techniques available in the univariate setting.  However, since the multiple responses for each subject are often correlated, prediction error can be minimized substantially if one uses the inherent association between the responses effectively \citep{breiman1997predicting}. Finding the best model in the context of multivariate linear regression (MLR) without ignoring the correlation between the responses, especially in a high-dimensional setting is a challenging task and has received much attention in the literature over the last decade. 

We develop an EAS approach for {\em group} variable selection in high-dimensional MLR settings. The EAS procedure was originally developed for high-dimensional univariate regression settings of \cite{williams2019nonpenalized}, and has been extended to the vector auto-regression setting in \cite{williams2019eas}.  However, an EAS procedure has not been constructed for the multivariate regression setting, nor has there been built a group selection mechanism for EAS selection. In contrast to Bayesian model selection approaches, we consider a generalized fiducial (GF) inference approach \citep{hannig2016generalized} that explicitly estimates the GF distribution over the class of all subsets of predictors; whereas frequentist and most Bayesian approaches exclusively focus on coefficient estimation to perform variable selection. In a high-dimensional setting, it is very problematic for a variable selection procedure to over-rely on the magnitude of the estimated regression coefficients because they lack identifiability and are numerically unstable. Moreover, most variable selection procedures for multivariate regression do {\em not} account for the correlation structure of the multivariate responses. Our multivariate group EAS procedure is designed to inherently accommodate the arbitrary covariance structure of the response. As explained in \cite{lee2012simultaneous}, accounting for the correlation is important because, for example, in the case of the least absolute shrinkage and selection operator (LASSO) estimator the shrinkage criterion is affected by the magnitude and sign of the correlation between the responses. 

Mathematically, under a standard sparsity and Gaussian errors assumption with a general covariance structure we prove that our proposed EAS procedure achieves {\em strong model selection} consistency, as defined in \cite{narisetty2014bayesian}. That is, we show that over the class of all $\epsilon${\em -admissible} subsets of the predictors/groups, the GF probability of a true sparse model converges to one in probability as the sample size goes to infinity, and the number of predictors/groups is allowed to grow sub-exponentially as a function of the sample size. Additionally, as a next paper in the series of papers to investigate EAS model selection strategies in various settings, the algorithm we propose improves on the computational efficiency and stability of the algorithms proposed in \cite{williams2019nonpenalized, williams2019eas}. We provide user-friendly \verb1R1 software to implement our EAS procedure, available at \url{https://github.com/SalilKoner/EAS}.

Next, in various simulation scenarios, we demonstrate that our EAS procedure is either competitive with or outperforms the state-of-the-art Bayesian or frequentist approaches, based on various metrics such as prediction error, misclassification rate, and average proportion of correct model selections. Moreover, consistent with our theory, our EAS method does an excellent job in assigning a very high probability mass to the true model compared to the other Bayesian methods. 


\textcolor{black}{There are four distinct elements of novelty of our EAS procedure for the MLR setting compared to the univariate case \cite{williams2019nonpenalized}. To start with the methodological novelties: 1) Our EAS method is designed to inherently accommodate an arbitrary covariance structure of the response in performing variable selection. 2) The second added difficulty, specific to GF inference, is that the so-called GF ``Jacobian term'' for the MLR model (derived in Appendix~\ref{sec: suppGFI}) is non-standard and exerts substantial influence on the resulting GF model probabilities that we construct, whereas the Jacobian term in the univariate linear regression setting is concise and involves components readily relatable to the likelihood function. 3) The theoretical novelties, primarily imparted by accounting for an arbitrary covariance structure, constitute derivation of non-asymptotic concentration bounds on ratios of determinants (or eigenvalues) of error covariance matrices for different models (i.e., subsets of covariates) in the multivariate setting, for which much less standard theory exists. 4) The key computational improvement of our EAS method is that the user does not need to scale the tuning parameter $\epsilon$ as a function of the sample size to select the best model, in contrast to previous EAS developments such as \cite{williams2019nonpenalized}.}

Prior choice/specification in contemporary Bayesian approaches are typically not chosen because they reflect true prior knowledge/beliefs, but they are tailored to simplify computational complexities and/or achieve desirable large sample/frequentist properties.  While this practice is pragmatic, it is a violation of fundamental Bayesian principles.  In contrast, GF inference is an equally principled framework that has an appeal to objective Bayesian perspectives, but it does not suffer from the arbitrary choice of prior specification.  \textcolor{black}{The GF approach is to solve an inverse problem resulting in parameter values most consistent in reconciling the observed data with random draws from the distribution of the auxiliary variable in the data generating equation.  Holding the data fixed, these parameter values then inherit a probability distribution via the distribution of the auxiliary variable.  It can be argued that this procedure effectively has implicit prior knowledge built in, but the subjectivist Bayesian approach (assuming a likelihood function) goes a step further by imposing/requiring additional prior knowledge (in the form of a prior distribution specification) that is exogenous to the data generating model.} Moreover, it has been shown that GF inference exhibits large sample Bernstein-von Mises type properties that guarantee the nominal coverage of credible sets, similar to such theory for Bayesian posteriors \citep{sonderegger2014fiducial}.  See \cite{hannig2016generalized} for a full introduction of GF inference.  


\textcolor{black}{Standard model selection techniques for univariate linear regression via Mallow's Cp and other types of information criterion have been extended to multivariate regression; see \cite{sparks1983multivariate, fujikoshi1997modified} and the references therein.  
Since the inception of LASSO \citep{tibshirani1996regression}, penalized methods introducing sparsity in the regression coefficients have engulfed the MLR literature. Notable contributions are the simultaneous variable selection using $L_\infty$SVS from \citep{turlach2005simultaneous} and $L_2$SVS from \cite{simila2006common}; the remMap procedure in \cite{peng2010regularized} employs an elastic net type penalty to identify {\em master predictors}. Estimation of the regression coefficients taking into account the correlation between the responses was pioneered in  \cite{rothman2010sparse}, which was later extended in \cite{lee2012simultaneous}. Assuming that the groups are known, a multivariate sparse group LASSO strategy is proposed in \cite{li2015multivariate} to impose the group structure, which was later augmented in \cite{wilms2018algorithm} for simultaneous covariance estimation. Very recently, de-sparsified/de-biased LASSO was developed in \cite{chevalier2020statistical, bellec2021chi} to overcome the bias induced by the penalty, by extending the existing techniques developed for the univariate setting \citep{van2014asymptotically}.
From a Bayesian perspective, a stochastic search variable selection (SSVS) procedure for MLR is employed in \cite{brown1998multivariate, brown2002bayes}. The multivariate Bayesian group LASSO using a spike-and-slab prior is developed in \cite{liquet2017bayesian} to perform the variable selection. To mitigate the computational issues of spike-and-slab priors for the large $p$ scenario, continuous global-local shrinkage priors are introduced in \cite{bai2018high}. Recently, an expectation-maximization based maximum a posteriori (MAP) estimation procedure was formulated for fast simultaneous variable and covariance selection using continuous shrinkage priors in \cite{deshpande2019simultaneous}}.

So far the NP-hard problem of best subset selection has been seemingly conveniently handled by assuming sparsity in the true data-generating model. However, in a high-dimensional setting, especially when there is a strong degree of collinearity amongst the predictors, there may not be a unique model that fits the data best, and so the concept of a {\em true model} is somewhat vague. Moreover, the typical $\ell_1$ and $\ell_2$ regularization methods shrink coefficients to zero only based on their magnitude, which is again unreliable in the presence of multicollinearity. Our EAS procedure provides a fresh perspective on variable selection in the MLR setting by defining an admissibility condition for candidate models. The admissibility criterion relates to the idea that any candidate model, as defined by a set of predictors, is redundant if there exists a subset model that explains the variation in the response as well as the candidate model. Thus, while we prove important mathematical properties of the EAS approach under a sparsity assumption, in finite sample data analyses the key functionality of our EAS approach is not to determine the model that is necessarily the true set of predictors but to identify a parsimonious model that explains the data as well as the true model, if it exists.  This criterion is meaningful even in the absence of a sparsity assumption.  We refer to the criterion as $\epsilon${\em -admissibility}, defined in Section \ref{sec: EASmethods}.  The key characteristic of the EAS-based GF distribution is that it assigns very negligible probability to models that fail the $\epsilon$-admissibility criterion, and in doing so significantly reduces the class of candidate models to choose from. This is the intuition for why, assuming sparsity, the procedure achieves strong model selection consistency.

\textcolor{black}{An advantage of our GF-based variable selection procedure over the frequentist counterparts is that it provides an estimate of the model probabilities, derived from the posterior-like GF distribution of the model parameters. Even many recent standard Bayesian approaches, such as Multivariate Bayesian model with Shrinkage Priors (MBSP) from \cite{bai2018high}, multivariate spike-and-slab prior (mSSL) from \cite{deshpande2019simultaneous}, and Spike-and-Slab Group LASSO (SSGL) from \cite{bai2020spike} are not developed to compute relative model probabilities; rather they are designed to estimate the MAP probability model parameters. A notable exception is the Bayesian spike-and-slab prior group LASSO from \cite{liquet2017bayesian} that is capable of providing relative model probabilities, but is unfortunately not suitable for high-dimensional settings.  In a finite sample scenario, relative model probabilities are useful because they give a certain degree of confidence in choosing one model over another, and they reflect a useful discrimination between competing models, especially in the high-dimensional setting.  Moreover, the $\epsilon$-admissibility criterion in our EAS framework takes into account the covariance structure of the multivariate response. In a multi-response setting, the noise associated with a particular component of the response may be significantly higher than the other components. Many of the variable selection procedures such as {remMap} \cite{peng2010regularized}, $L_2$SVS  \cite{simila2006common}, or reduced rank regression \cite{chen2012sparse, velu2013multivariate}, neither consider this difference in noise levels nor the intra-dependence in the multivariate response.} 

The rest of the paper is organized as follows. In Section~\ref{sec: EASmethods} we layout the EAS methodology and highlight differences from other EAS approaches.  Next, the computational algorithm to implement the method is presented in Section~\ref{sec: EAScomputation}.  A few essential non-asymptotic results characterizing the meaningfulness of the EAS procedure along with the main consistency result are stated in Section~\ref{sec: EAStheory}. The proof of all the results are relegated to the \textcolor{black}{Section~\ref{sec: suppPROOF} of the Appendix}. Finite sample numerical results covering both $n > p$ and $p > n$ cases are illustrated in Section~\ref{sec: EASsimstudy}. \textcolor{black}{Section~\ref{sec: EASrealdata} provides an illustration of the practical application of the procedure using yeast cell cycle data.}  Computer codes to reproduce all empirical results are available at \textcolor{black}{\url{https://github.com/SalilKoner/EAS}}. 

\section*{Notations} Throughout the course of the paper we will use the following notations. For an vector $\V{a} \in \Real^n$, $\norm{\V{a}} =: \sqrt{\sum_{i=1}^n a_i^2}$ denotes vector $2$-norm; for any matrix $\bA$, $\norm{\bA}$ refers to the spectral norm (i.e., $\norm{\bA} := \sup_{\V{x}, \norm{\V{x}}=1} \norm{\bA \V{x}}$); $\norm{\bA}_{\textrm{F}}$ denotes the Frobenius norm; and $\norm{\bA}_{\max} := \max_{i,j} \abs{a_{ij}}$ denotes the max norm. For any symmetric matrix $\bA$, $\lambda_{\min}(\bA)$ and $\lambda_{\max}(\bA)$ denotes the minimum and maximum eigenvalues, respectively, of the matrix $\bA$. For an $m \times r$ random matrix $\bX$, $\bX \sim \textrm{Matrix-Normal}_{m,r}(\bM, \bU, \bV)$ is equivalent to $\textrm{vec}(\bM) \sim \textrm{N}_{m,r}(\textrm{vec}(\bM), \bV \otimes \bU)$ \citep[see][definition 2.2.1]{gupta2018matrix}. For a $m \times r$ random matrix $\bX$, $\bX \sim \textrm{T}_{m,r}(\nu, \bM, \bU, \bV)$ means $\bX$ follows a matrix $t$-distribution with mean $\bM$, scale matrices $\bU$, $\bV$, and degrees of freedom $\nu$ \citep[see][definition 4.2.1]{gupta2018matrix}. For any set $M$, $\abs{M}$ denotes the cardinality of $M$. For any event $A$, $\mathrm{I}(A)$ denotes the indicator function that the event happens. Lastly, for two sequences $a_n$ and $b_n$, $a_n = o(b_n)$ means $\lim_{n \to \infty} a_n/b_n = 0$.  

\section{Methodology} \label{sec: EASmethods}
In MLR, $n$ pairs of examples $(\M{x}_i, \M{Y}_i)$ are observed, where $\M{Y}_i := (Y_{i1}, \dots, Y_{iq})^\top$ is the $q$-dimensional multivariate response for $i$-th subject, and $\M{X}_i := (X_{1i}, \dots, X_{pi})^\top$ contains the values of $p$ predictor variables that are presumed to be associated with the response. The response is expressed as the linear model,
$$\M{Y}_i = \sum_{j=1}^p \bB_{j}X_{ji} + \bA\bU_i,$$ 
where $\bB_j := (b_{1j}, \dots, b_{qj})^\top$  is a $q$-dimensional regression coefficient vector where $b_{\ell j}$ captures the effect of $j$-th predictor on $\ell$-th coordinate of the multivariate response, $\bA$ is $q \times q$ matrix, and $\bU_i \in \Real^{q\times 1}$ is the $i$-th error vector \textcolor{black}{with mean $\V{0}$ and $\textrm{Var}(\bU_i) = \bI_q$}, so that $\textrm{Var}(\bY_i) = \bA\bA^\top$. Further denoting $\bY := [\bY_1, \dots, \bY_n] \in \Real^{q \times n}$ as the horizontal column stacked response, and $\bU := [\bU_1, \dots, \bU_n]$ as the corresponding $q \times n$ dimensional error matrix, the multivariate regression model with a sample of size $n$ is summarized as,
\begin{equation} \label{eqn: EASdgm}
\bY = \Big[\sum_{j=1}^p \bB_j X_{j1}, \cdots, \sum_{j=1}^p \bB_j X_{jn}\Big] + \bA\bU = \bB\bX + \bA\bU,
\end{equation}
where $\bB := [\bB_1, \dots, \bB_p] \in \Real^{q \times p}$ is the coefficient matrix and $\bX := [\bX_1, \dots, \bX_n] \in \Real^{p \times n}$ is the design matrix.

 \textcolor{black}{In the context of variable selection, for any index set $M \subseteq \{1,\dots,p\}$, let $\mathbf{X}_{M\,\boldsymbol\cdot}$ denote the matrix with {\em rows} comprised of the rows of $\mathbf{X}$ corresponding to the indices in $M$.  In subsetting the rows of $\mathbf{X}$ for variable selection, the {\em columns} of $\mathbf{B}$ must be subset to only those corresponding to the indices in $M$; take $\mathbf{B}_{\boldsymbol\cdot\,M}$ to be the column-subsetted coefficient matrix.  The subscript $(M)$, as in the $q \times q$ matrix $\mathbf{A}_{(M)}$, simply denotes an association with the index set $M$, rather than any subsetting of the rows/columns.}  Accordingly, conditional on index set/model $M$, expression (\ref{eqn: EASdgm}) reduces to,
\begin{equation}\label{eqn: EASgroupedmodel}
\M{Y} = \bB_{\boldsymbol\cdot\,M}\bX_{M\,\boldsymbol\cdot} + \bA_{(M)}\bU.
\end{equation}
Observe that variable selection in this multivariate model is in fact a group selection problem because the active columns (i.e., the groups) of the coefficient matrix $\bB$ are being selected.  This fact establishes the need for the a group selection mechanism within the variable selection procedure.  Nonetheless, the EAS variable selection procedure that we develop seamlessly accommodates the additional problem of selecting among known/posited groups of predictors, rather than the power set of the predictors $1,\dots,p$ (i.e., the {\em natural grouping}).  Our presentation will focus on EAS methodology for solving the problem of grouped variable selection for MLR under the natural grouping, but the methodology and theoretical results cover (as a sub-case) the simpler case when a class of predictor groups are known.  In that case, simply restrict the class of candidate models that the algorithm is allowed to choose from. 

In the case of Gaussian error, we introduce the notion of a {\em true} data generating model as the assumption that,
\begin{equation} \label{eqn: truemodel}
\bY \sim \textrm{Matrix-Normal}_{q,n}\fpr{\bB_{\boldsymbol\cdot\,M_{\text{o}}}^{0}\bX_{M_{\text{o}}\,\boldsymbol\cdot}, \bV_{(M_{\text{o}})}^{0}, \bI_n},
\end{equation}
for some fixed (but unknown) $M_{\text{o}} \subseteq \spr{1,2,\dots,p}$, and some fixed (but unknown) parameter matrices $\bB_{\boldsymbol\cdot\,M_{\text{o}}}^{0}$ and $\bV_{(M_{\text{o}})}^{0} =  \bA_{(M_{\text{o}})}^{0}\bA_{(M_{\text{o}})}^{0\top}$, for some positive definite matrix $\bA_{(M_{\text{o}})}^{0}$. Note that in the case of the covariance matrix $\bV_{(M_{\text{o}})}^{0}$, the subscript $M_{\text{o}}$ simply denotes the association with the index set $M_{\text{o}}$ but it is {\em not} constructed by subsetting some more general matrix $\bV^0$.  Analogous to \cite{williams2019nonpenalized}, the index `o' in $M_{\text{o}}$ is in reference to the term `oracle', and the superscript `0' emphasizes that the quantities $\bB_{\boldsymbol\cdot\,M_{\text{o}}}^{0}$ and $\bV_{(M_{\text{o}})}^{0}$ are fixed quantities (in contrast to their GF/Bayesian-like random variable counterparts $\bB_{\boldsymbol\cdot\,M_{\text{o}}}$ and $\bV_{(M_{\text{o}})}$, to be introduced shortly).  The matrix normal notation in~(\ref{eqn: truemodel}) is a compact way of saying that the multivariate responses $\bY_1, \dots, \bY_n$ are independent and identically distributed multivariate normal random vectors with mean $\bB_{\boldsymbol\cdot\,M_{\text{o}}}^{0}\bX_{M_{\text{o}}\,\boldsymbol\cdot}$ and $q \times q$ covariance matrix $\bV_{(M_{\text{o}})}^{0}$.

The objective of our paper is to develop a methodology that identifies {\em a non-redundant} (i.e., $\epsilon$-admissible) set $M_{\text{o}}$ (the true model or possibly a related sub-model of the true model) out of the $2^p$ candidate sets in the power set of $\spr{1,2,\dots,p}$.  In the special case that the true model is sparse, the objective is to show that the method will identify {\em the} true model $M_{\text{o}}$, as the sample size is taken to infinity.  The notion of  $\epsilon$-admissibility is defined in Definition \ref{defn: hfunction}.  To build up to this definition, the following model-based perspective is required. 

For any index set $M \subseteq \spr{1,\dots,p}$, we assume the conditional distribution,
\begin{equation} \label{eqn: EAStruesampdistY}
    \bY \vert \bB_{\boldsymbol\cdot\,M}, \bV_{(M)} \sim \textrm{Matrix-Normal}_{q,n}\fpr{\bB_{\boldsymbol\cdot\,M}\bX_{M\,\boldsymbol\cdot}, \bV_{(M)}, \bI_n},
\end{equation}
where $\bB_{\boldsymbol\cdot\,M}$ and $\bV_{(M)} = \bA_{(M)}\bA_{(M)}^\top$ are random matrices that reflect the uncertainty in not knowing the true data generating model nor true values of its parameter matrices.  Under the Gaussian error assumption, the quantities $\bB_{\boldsymbol\cdot\,M}$ and $\bV_{(M)}$ are expected to be centered, respectively, around the least squares estimator $\widehat{\bB}_{\boldsymbol\cdot\,M} := \bY\bX_{M\,\boldsymbol\cdot}^\top\fpr{\bX_{M\,\boldsymbol\cdot}\bX_{M\,\boldsymbol\cdot}^\top}^{-1}$ and the restricted maximum likelihood estimator $\widehat{\bV}_M := \widehat{\bSigma}_{(M)}/(n- \abs{M})$, with $\widehat{\bSigma}_{(M)} := \bY(\bI_n - \bH_{(M)})\bY^\top$ and $\bH_{(M)} := \bX_{M\,\boldsymbol\cdot}^\top\fpr{\bX_{M\,\boldsymbol\cdot}\bX_{M\,\boldsymbol\cdot}^\top}^{-1}\bX_{M\,\boldsymbol\cdot}$.  Note that the matrix $\bH_{(M)}$ is the orthogonal projection onto the row space of $\bX_{M\,\boldsymbol\cdot}$.  Moreover, given the true data generating model (\ref{eqn: truemodel}), it follows that $\mathbb{E}(\bhatY) = \mathbb{E}(\bhatB_{\boldsymbol\cdot\,M}\bX_{M\,\boldsymbol\cdot}) = \bB_{\boldsymbol\cdot\,M_{\text{o}}}^{0}\bX_{M_{\text{o}}\,\boldsymbol\cdot}\bH_{(M)}$, and so if $M \supseteq M_{\text{o}}$ then $\mathbb{E}(\bhatY) = \bB_{\boldsymbol\cdot\,M_{\text{o}}}^{0}\bX_{M_{\text{o}}\,\boldsymbol\cdot}$. This means that any collection of predictors with linear span containing the oracle predictors, $M_{\text{o}}$, is as good at explaining variation in the data as the true model (in terms of residual sum of squares). However, such a large/redundant set of predictors lacks efficiency in terms of prediction accuracy.  Exploiting this idea, the critical definition supporting our methodology is presented next.
\begin{definition} \label{defn: hfunction}
A $q \times \abs{M}$ regression coefficient matrix $\bB_{\boldsymbol\cdot\,M}$ coupled with an index set $M \subseteq \spr{1,\dots,p}$ is said to be {\em $\epsilon$-admissible} if $h_\epsilon(\bB_{\boldsymbol\cdot\,M}) = 1$ where,
\begin{align} \label{eqn: EAShfuncdefn}
    h_\epsilon(\bB_{\boldsymbol\cdot\,M}) = 
        \mathrm{I}\spr{\frac{1}{2}\left \lVert \bhatSigma_{(M)}^{-1/2} \fpr{\bB_{\boldsymbol\cdot\,M}\bX_{M\,\boldsymbol\cdot} - \bB_{\min}\bX} \right \rVert_\textrm{F}^2 \geq \epsilon} \textrm{I}\{\abs{M} < n - q\},
\end{align}
where $\bB_{\min}$ is the solution to the optimization problem,
\begin{align*}
\underset{\bB \in \Real^{q \times p}}{\argmin}  \left \lVert \bhatSigma_{(M)}^{-1/2} \fpr{\bB_{\boldsymbol\cdot\,M}\bX_{M\,\boldsymbol\cdot} - \bB\bX} \right \rVert_\textrm{F}^2 \ \ \text{ subject to } \ \ \abs{\spr{j: \norm{\bB_j} \neq 0}} \leq \abs{M}-1.
\end{align*}
\end{definition}

Definition~\ref{defn: hfunction} characterizes a notion of redundancy for any set of predictors, indexed by $M$. The quantity $\left \lVert  \bB_{\boldsymbol\cdot\,M}\bX_{M\,\boldsymbol\cdot} - \bB\bX \right \rVert_\textrm{F}^2$ captures the difference in prediction of the model $M$ from all models with fewer predictors.  Any model $M$ that is not $\epsilon$-admissible is redundant in the sense that there exists a subset of fewer predictors that approximately linearly spans the same subspace.  This very notion of redundancy makes the EAS method different from the traditional regularization-based approaches, where redundancy is expressed as a model containing negligible or zero magnitude regression coefficients.  Nonetheless, Definition~\ref{defn: hfunction} encompasses the traditional notion of redundancy because, if one column of $\bB_{\boldsymbol\cdot\,M}$ is equal to zero then $h_\epsilon(\bB_{\boldsymbol\cdot\,M})$ is $0$ for all $\epsilon > 0$. Additionally, as a consequence of the rows of $\bX_{M\,\boldsymbol\cdot}$ spanning a finite-dimensional vector space, $h_\epsilon(\bB_{\boldsymbol\cdot\,M})$ assigns value zero to all models $M$ with $\abs{M} > n$, by definition.  
As a consequence, the EAS procedure inherently reduces the difficultly of the model selection problem from $2^{p}$ candidate models to $2^{n}$, a fact that is fundamental to the scalability of the EAS procedure for high-dimensional settings.  Furthermore, if $\bX_{M\,\boldsymbol\cdot}$ does not have full row rank, then $h_\epsilon(\bB_{\boldsymbol\cdot\,M})$ is again zero by its construction.

The above definition of the $h$-function is well-defined in the sense that for $M$ with $\abs{M} < n - q$, the matrix $\bhatSigma_{(M)}$ is invertible with probability $1$. Lemma~\ref{Lemma: MinEigenofRSSMatrix} in Section~\ref{sec: EAStheory} justifies that the minimum eigenvalue of $\bhatSigma_{(M)}$ diverges away from $0$ for large $n$.  Next, for identifiability of a sparse true model, $M_{\text{o}}$, the choice of $\epsilon$ must not be so large that it classifies $M_{\text{o}}$ as redundant.  Conversely, if $\epsilon$ is chosen too small, then many redundant models might also satisfy the $\epsilon$-admissibility criterion.  It will be seen throughout the remainder of the paper that this trade-off analysis is the crux of the theoretical underpinnings of the EAS approach.

A distinction between the definition of the $h$-function defined in~(\ref{eqn: EAShfuncdefn}) and the one defined in \cite{williams2019nonpenalized} is the introduction of the empirical error covariance matrix $\bhatSigma_{(M)}$. Normalization by the square-root of error covariance matrix is common in LASSO-type model selection strategies such as concomitant multi-task regression \citep{massias2018generalized}, multivariate square-root LASSO \citep{van2016chi, bertrand2019handling} to name a few. For any fixed model $M$, a common assumption is that the quantity $\left \lVert  \bB_{\boldsymbol\cdot\,M}\bX_{M\,\boldsymbol\cdot} - \bB\bX \right \rVert_\textrm{F}^2$ is on the order of $n$ (for large $n$). As a result, the optimal choice of $\epsilon$ as derived in \cite{williams2019nonpenalized} turned out to be a function of both $n$ and $|M_{\text{o}}|$. Moreover, the form of their suggested $\epsilon$ is somewhat unintuitive as it was derived purely from a theoretical result. On the other hand, since $\bhatSigma_{(M)}$ is also on the order of $n$ (for large $n$), adjusting for the inverse square-root of $\bhatSigma_{(M)}$, as in (\ref{eqn: EAShfuncdefn}), proportionately scales $\left \lVert  \bB_{\boldsymbol\cdot\,M}\bX_{M\,\boldsymbol\cdot} - \bB\bX \right \rVert_\textrm{F}^2$. This enables us to choose the threshold $\epsilon$ via simple grid search based on some metric such as cross-validation (CV) technique or information criterion (IC), independently of $n$ and $|M|$. We implement a computationally efficient version of the EAS algorithm that bypasses the original repeated sampling based pseudo-marginal Markov chain Monte Carlo (MCMC) algorithm. A detailed description of the computational procedure we propose is presented in Section \ref{sec: EAScomputation}.

With our $\epsilon$-admissibility notion of redundancy now defined, we are ready to build the statistical framework that will facilitate its use in model selection.  The GF inference approach adopted in \cite{williams2019nonpenalized} remains an advantageous pathway for constructing a posterior-like probability distribution over the class of candidate models; one that is principled in the sense of Bernstein-von Mises asymptotics, but also avoids the problem of prior choice/specification. 

\textcolor{black}{To introduce the mechanics of GF inference, assume a random variable $Z$ has a forward data generation equation that can be expressed as $Z = G(U, \theta)$, where $G$ is some known deterministic function, $U$ is a pivotal quantity whose distribution is known, and $\theta$ is some fixed but unknown parameter(s) of interest in the space $\Theta$.  Given an observed data set $\mathbf{z} = (z_1, \dots, z_n)^\top$ of independent instances of the random variable $Z$, GF inference aims to find the best fitting $\vartheta$ such that $\lVert \mathbf{z} - G(\mathbf{U}, \vartheta) \rVert $ is minimized, resulting in a random variable $\theta^* := \theta^*(\bU)$. The distribution of the random variable $\theta^*(\bU)$ is termed the GF distribution of the unknown parameter $\theta$. When $Z$ is a continuous random variable, under certain regularity conditions \citep[as stated in][mostly dealing with the smoothness of $G$]{hannig2016generalized} the GF distribution of $\theta$ can be expressed as 
\begin{equation}
 r(\vartheta \mid \mathbf{z}) = \frac{f(\mathbf{z}, \vartheta)J(\mathbf{z}, \vartheta)}{\int_{\Theta} f(\mathbf{z}, \widetilde{\vartheta})J(\mathbf{z}, \widetilde{\vartheta})\;d\widetilde{\vartheta}},
\end{equation}
where $f(\mathbf{z}, \vartheta)$ is the likelihood function and 
\begin{equation} \label{eqn: JacobianForm}
J(\mathbf{z}, \vartheta) := D\bigg(\frac{d}{d\vartheta}G(\mathbf{u}, \vartheta)\Big|_{\mathbf{u}=G^{-1}(\mathbf{z},\vartheta)} \bigg),
\end{equation}
with $D(\bC) := \sqrt{\det\bC^\top \bC}$ for a matrix argument $\bC$.  The Jacobian-like quantity $J(\mathbf{z}, \vartheta)$ results from inverting the data generating equation assuming the inverse $G^{-1}(\mathbf{z}, \vartheta)$ exists. 
}

\textcolor{black}{
To illustrate, if $Z_1, \dots, Z_n \overset{iid}{\sim} N(\mu,\sigma^2)$, then $G(U, (\mu, \sigma^2)) = \mu + \sqrt{\sigma^2} U$ where $U \sim N(0,1)$.  In this case, 
\[
J(\mathbf{z}, (\mu, \sigma^2)) = D([\mathbf{1}_n \,\,-\sigma^{-2}(\mathbf{z} - \mu \mathbf{1}_n)/2]) = \sqrt{n}\sigma^{-2}s_z/2,
\]
where $\bar{z} := \sum_{i=1}^n z_i$ and $s_z^2 := \sum_{i=1}^n (z_i-\bar{z})^2$. Then the GF distribution takes the form
\begin{equation*}
r(\mu, \sigma^2 \mid \mathbf{z}) \propto (\sigma^2)^{-n/2-1}e^{-\sigma^{-2}\sum_{1}^{n}(z_i-\mu)^2/2},
\end{equation*}
which implies that the conditional GF distribution of $(\mu \mid \sigma)$ is $N(\bar{z}, \sigma^2/n)$, and the marginal GF distribution of $\sigma^2$ is Inverse-Gamma$\{(n-1)/2, s_z^2/2\}$.  This solution is consistent with the posterior distribution of $(\mu, \sigma^2)$ constructed via flat prior specification \citep[page 65,][]{gelman1995bayesian}.
}

In contrast to the univariate regression model (as in \cite{williams2019nonpenalized}), however, construction of the GF distribution in the MLR setting is {\em not} a simple extension and is accompanied by unique challenges, especially so for dealing with the arbitrary covariance matrix of the multivariate response vectors. 

The difficulty in deriving and studying an expression for a GF distribution, for most continuous data models, is that it requires deriving a complicated function of partial derivatives with respect to the unknown model parameters.  Interesting data models for which the GF density can be expressed analytically (up to a normalizing constant), such as we will show for the MLR with arbitrary coefficient and covariance matrices, are interesting in their own right for their contribution to the growing literature on GF inference.  The assumptions and materials for deriving/computing a GF distribution are provided in \cite{hannig2016generalized}.  In the remainder of this section we provide the details relevant to our methodology.

Given an index set, $M$, the unknown parameters in model~(\ref{eqn: EAStruesampdistY}) are $\bB_{\boldsymbol\cdot\,M}$ and $\bA_{(M)}$. As in the GF inference setup, re-express the data generating equation (\ref{eqn: EAStruesampdistY}) as,
\begin{equation} \label{eqn: EASdatageneqn}
\bY = \bB_{\boldsymbol\cdot\,M}\bX_{M\,\boldsymbol\cdot} + \bA_{(M)}\bU =: G\big(\bU, (\bB_{\boldsymbol\cdot\,M}, \bA_{(M)})\big),
\end{equation}
where $\bU \sim \textrm{Matrix-Normal}_{q,n}(\V{0}, \bI_q, \bI_n)$.  As prescribed in Theorem 1 of \cite{hannig2016generalized}, the GF density of the parameters $(\bB_{\boldsymbol\cdot\,M}, \bA_{(M)})$ can be expressed as,
\[
r(\bB_{\boldsymbol\cdot\,M}, \bA_{(M)} \mid \bY) := \frac{  f\big(\bY,(\bB_{\boldsymbol\cdot\,M}, \bA_{(M)})\big) J\big(\bY,(\bB_{\boldsymbol\cdot\,M}, \bA_{(M)})\big)  }{  \int f\big(\bY,(\widetilde{\bB}_{\boldsymbol\cdot\,M}, \widetilde{\bA}_{(M)})\big) J\big(\bY,(\widetilde{\bB}_{\boldsymbol\cdot\,M}, \widetilde{\bA}_{(M)})\big) \ d(\widetilde{\bB}_{\boldsymbol\cdot\,M}, \widetilde{\bA}_{(M)})  },
\]
where $f$ is the matrix normal likelihood function, and the {\em Jacobian} term defined in (\ref{eqn: JacobianForm}).  After some routine matrix calculations the Jacobian term reduces to, 
\begin{equation*}
J\big(\bY,(\bB_{\boldsymbol\cdot\,M}, \bA_{(M)})\big) = \fpr{\det \M{A}_{(M)}\M{A}_{(M)}^\top}^{-q/2} \fpr{\det \bX_{M\,\boldsymbol\cdot}\bX_{M\,\boldsymbol\cdot}^\top}^{q/2} \fpr{\det \widehat{\M{\Sigma}}_{(M)}}^{q/2}.
\end{equation*}
Accordingly, restricting $M$ to the class of $\epsilon$-admissible models yields the GF density, 
\begin{align*}
r_\epsilon(\bB_{\boldsymbol\cdot\,M}, \bA_{(M)} \mid \bY) \propto \frac{ e^{- \frac{1}{2}\tr\tpr{\bR_{(M)}\bV_{(M)}^{-1}}} }{ \fpr{\det \bV_{(M)}}^{(n+q)/2} } \fpr{\det \bX_{M\,\boldsymbol\cdot}\bX_{M\,\boldsymbol\cdot}^\top}^{q/2} \fpr{\det \widehat{\M{\Sigma}}_{(M)}}^{q/2} h_\epsilon\fpr{\bB_{\boldsymbol\cdot\,M}},
\end{align*}
where $\bR_{(M)} := \fpr{\M{Y}- \bB_{\boldsymbol\cdot\,M}\bX_{M\,\boldsymbol\cdot}} \fpr{\M{Y}- \bB_{\boldsymbol\cdot\,M}\bX_{M\,\boldsymbol\cdot}}^\top$.  Note the dependence of $r_\epsilon(\cdot \mid \bY)$ on the choice of $\epsilon$.  Further, $r_\epsilon(\cdot \mid \bY)$ should not be confused with the notation for a conditional probability density function, but should be understood to reflect the fact that the GF distribution is a function of the observed data $\bY$. 

Moving along, analogous to a Bayesian model selection approach, we construct a probability distribution over all $\epsilon$-admissible index sets as the marginal distribution,
{\small
\begin{align*}
r_\epsilon(M \mid \bY) &\propto \int_{\Real^{q \times \abs{M}}} \int_{\Real^{q \times q}} r_\epsilon(\bB_{\boldsymbol\cdot\,M}, \bA_{(M)} \mid \bY) \ d\M{A}_{(M)} \ d\bB_{\boldsymbol\cdot\,M} \\
&= \fpr{\det \bX_{M\,\boldsymbol\cdot}\bX_{M\,\boldsymbol\cdot}^\top}^{\frac{q}{2}} \fpr{\det \widehat{\M{\Sigma}}_{(M)}}^{\frac{q}{2}} \int h_\epsilon\fpr{\bB_{\boldsymbol\cdot\,M}} \int \frac{ e^{- \frac{1}{2}\tr\tpr{\bR_{(M)}\bV_{(M)}^{-1}}} }{ \fpr{\det \bV_{(M)}}^{(n+q)/2} } \ d\bA_{(M)} \ d\bB_{\boldsymbol\cdot\,M}.
\end{align*}
}\noindent \textcolor{black}{We simplify this expression as equation (\ref{eqn: EASMassfunctionmodelM}), stated next, and provide a detailed account of the intermediate steps in Section~\ref{sec: suppGFI} of the Appendix.} The derivation of GF model involve integration over the domain of positive definite matrices and non-trivial matrix algebra that are far more complex than the case of univariate linear regression setting. 


\begin{equation}\label{eqn: EASMassfunctionmodelM}
r_\epsilon(M \mid \bY) \propto \Gamma_q\fpr{\frac{n-\abs{M}}{2}}\pi^{\frac{q\abs{M}}{2}}\fpr{\det \widehat{\M{\Sigma}}_{(M)}}^{-\fpr{\frac{n-\abs{M}-q}{2}}}\mathbb{E}\tpr{h_\epsilon\fpr{\bB_{\boldsymbol\cdot\,M}}},
\end{equation}
where the expectation is taken with respect to the density of matrix $t$-distribution, i.e.,
\begin{equation}\label{eqn: matrixt}
\bB_{\boldsymbol\cdot\,M} \sim \textrm{T}_{q,\abs{M}}\fpr{n-\abs{M}-q+1, \widehat{\bB}_{\boldsymbol\cdot\,M}, \bhatSigma_{(M)}, \fpr{\bX_{M\,\boldsymbol\cdot}\bX_{M\,\boldsymbol\cdot}^\top}^{-1}}.
\end{equation}
Note that the GF distribution of $\bB_{\boldsymbol\cdot\,M}$ is concentrated around the least squared estimator, $\widehat{\bB}_{\boldsymbol\cdot\,M}$, defined previously.

Observe in~(\ref{eqn: EASMassfunctionmodelM}) that for models with $\abs{M} > n - q$, the $h$-function is zero by definition, and thus $r_\epsilon(M \mid \bY)$ is trivially zero. This probability mass function has the interpretation as the relative likelihood of the model $M$ versus that of all other candidate models in the class of $\epsilon$-admissible models.  It becomes clear from the expression (\ref{eqn: EASMassfunctionmodelM}) that $r_\epsilon(M \mid \bY)$ is largely driven by the inverse of the empirical error covariance matrix, and that the $h$-function delivers a multiplicative effect on the probability.  For a large redundant model, $M$, we expect that the determinant of the empirical error covariance is small relative to that of $M_{\text{o}}$, and so we leverage the choice of $\epsilon$ such that $\mathbb{E}\tpr{h_\epsilon\fpr{\bB_{\boldsymbol\cdot\,M}}}$ controls the value of $r_\epsilon(M \mid \bY)$.  This insight is formalized in Section \ref{sec: EAStheory}.    

As we illustrate in the remainder of this paper, the GF mass function $r_\epsilon(M \mid \bY)$ serves as a vehicle for model selection and inference.  In the next section we discuss the details of the computations, and provide an algorithm to generate samples from this GF distribution.

\section{Model estimation and computational techniques} \label{sec: EAScomputation}

\textcolor{black}{In order to generate samples from $r_\epsilon(M \mid \bY)$ we must be able to compute $\mathbb{E}\tpr{h_\epsilon\fpr{\bB_{\boldsymbol\cdot\,M}}}$. Although the expectation is with respect to a matrix $t$-distribution, the complex expression for $h_\epsilon(\bB_{\boldsymbol\cdot\,M})$ makes the form of its expectation intractable, and so standard MCMC techniques do not apply. This issue is typical of all the previously developed EAS implementations.  The MLR EAS analogue of the previous EAS approaches is to employ a pseudo-marginal MCMC algorithm by estimating $\mathbb{E}\tpr{h_\epsilon\fpr{\bB_{\boldsymbol\cdot\,M}}}$ with the average of a large number of random samples from the GF distribution of $\bB_{\boldsymbol\cdot\,M}$ (i.e., its matrix $t$-distribution). An important remark is that in contrast to the previous EAS articles, in our empirical investigations we find that rather than generating a sample of matrices $\bB_{\boldsymbol\cdot\,M}$ as in (\ref{eqn: matrixt}) for approximating $h_\epsilon(\bB_{\boldsymbol\cdot\,M})$ with an a sample mean, it suffices to take $h_\epsilon(\widehat{\bB}_{\boldsymbol\cdot\,M})$ as a point estimate of $\mathbb{E}\tpr{h_\epsilon\fpr{\bB_{\boldsymbol\cdot\,M}}}$, where $\widehat{\bB}_{\boldsymbol\cdot\,M}$ is the least squares estimator for model $M$. This is likely partly due to our construction for the $h$ function having a better scaling with $\epsilon$ than in earlier developments of EAS approaches, and the fact that the distribution in (\ref{eqn: matrixt}) is centered at $\widehat{\bB}_{\boldsymbol\cdot\,M}$. This adjustment makes the implementation of EAS highly efficient, and is supported by the competitive performance exhibited in extensive numerical studies, summarized in Section \ref{sec: EASsimstudy}. We briefly discuss the algorithms of EAS for MLR case, next.}


\textcolor{black}{From Definition \ref{defn: hfunction}, evaluating $h_\epsilon(\widehat{\bB}_{\boldsymbol\cdot\,M})$ can be formulated as the mixed integer quadratic program (MIQP) with quadratic constraints,
\begin{align*}
\vc\fpr{\bB_{\min}} = \underset{\V{b}, z_{1},\dots,z_{p}}{\argmin}\Big\{ \frac{1}{2}  \V{b}^\top\M{Q}_{(M)}\V{b} - \V{a}_{(M)}^\top\V{b}\Big\},  
\end{align*}
subject to $\V{b} \in \Real^{qp}$, $z_j \in \{0,1\}$ for $j \in \{1,\dots,p\}$,
\begin{align*}
\V{b}^\top\M{C}_j\V{b} &\leq \mathcal{M}_U z_j^2, \qquad \text{and} \qquad \sum_{j=1}^p z_j \leq \abs{M}-1,
\end{align*} 
where $\M{Q}_{(M)} := (\M{X}\M{X}^\top) \otimes \widehat{\bSigma}_{(M)}^{-1}$, $\V{a}_{(M)} := \fpr{\M{X}\bX_{M\,\boldsymbol\cdot}^\top \otimes \widehat{\bSigma}_{(M)}^{-1}} \vc(\widehat{\bB}_{\boldsymbol\cdot\,M})$, $\M{C}_j $ is a block diagonal matrix with $p$ blocks of $q \times q$ zero matrices, except $\bI_q$ in the $j$-th block, and $\mathcal{M}_U > 0$ is a properly chosen constant.  In particular, the quantity $\mathcal{M}_U$ must be chosen large enough so that $\mathcal{M}_U > \max_j \norm{(\bB_{\min})_j}_2^2$ \citep{bertsimas2016best}.  Further, since $\bB_{\min}$ is not a-priori known, \cite{bertsimas2016best} provides a data-driven formula to specify $\mathcal{M}_U$ in the MIQP, which can be solved with any MIQP solver, such as} \verb1CPLEX1.

\begin{algorithm}[H]
\textcolor{black}{
\SetAlgoLined
\KwIn{Input a model with index set $M$, least squared estimator $\widehat{\bB}_{\boldsymbol\cdot\,M}$, the full design matrix $\bX$, $\bhatSigma_{(M)}$, a pre-specified $\epsilon > 0$ and an initial solution $\bB_{init}$ with number of columns with non-zero norm less than $\abs{M}-1$;}
\KwOut{Value of $h_\epsilon(\widehat{\bB}_{\boldsymbol\cdot\,M})$;} 
Calculate $L = \lambda_{\max}(\bX\bX^\top)\lambda_{\min}^{-1}(\bhatSigma_{(M)}) $ and set $\bB_{cur} = \bB_{init}$\;
Calculate the objective function $g(\bB_{cur}) = \frac{1}{2} \lVert  \bhatSigma_{(M)}^{-1/2} (\widehat{\bB}_{\boldsymbol\cdot\,M}\bX_{M\,\boldsymbol\cdot} - \bB_{cur}\bX) \rVert_\textrm{F}^2$\;
\While{$\textrm{diff} > \textrm{threshold}$ or $g(\bB_{cur}) > \epsilon$}{
Calculate $\bB = \bB_{cur} - \frac{1}{L} \bhatSigma_{(M)}^{-1} \fpr{\bB_{cur}\bX\bX^\top -  \widehat{\bB}_{\boldsymbol\cdot\,M}\bX_{M\,\boldsymbol\cdot}\bX^\top}$ \;
Obtain indices $i_1, i_2, \dots, i_{\abs{M}-1}$ such that $\norm{\bB_{i_1}} \geq \norm{\bB_{i_2}} \dots \geq \norm{\bB_{i_{\abs{M}-1}}} \geq \norm{\bB_{i_{\abs{M}}}} \dots \geq \norm{\bB_{i_p}}$ \;
Set $\bB_j = 0$ for all $j \in \spr{1,2,\dots,p} \backslash \spr{i_1,i_2,\dots,  i_{\abs{M}-1}}$ \;
Calculate $\mathrm{diff} = \abs{g(\bB_{cur}) -g(\bB)}$\;
Update $\bB_{cur} = \bB$ \;
}
\Return $h_\epsilon(\widehat{\bB}_{\boldsymbol\cdot\,M}) = \bI\fpr{g(\bB_{cur}) > \epsilon}$;
 \caption{\footnotesize  Pseudocode for projected gradient descent to compute $h_\epsilon(\widehat{\bB}_{\boldsymbol\cdot\,M})$.}
 \label{algo: hfunction}
 }
\end{algorithm}

\textcolor{black}{Although a single MIQP is typically fast to solve in practice, we need to compute $h_\epsilon(\widehat{\bB}_{\boldsymbol\cdot\,M})$ for different models $M$ at each step of the MCMC, so further streamlining of the computations are needed.  First, observe that a solution, $\bB_{\min}$, to the MIQP is not always necessary to evaluate $h_\epsilon(\widehat{\bB}_{\boldsymbol\cdot\,M})$; rather $h_\epsilon(\widehat{\bB}_{\boldsymbol\cdot\,M}) = 0$ if there exists any $\bB$ (satisfying the MIQP constraints) such that $\frac{1}{2}\lVert \bhatSigma_{(M)}^{-1/2} (\widehat{\bB}_{\boldsymbol\cdot\,M}\bX_{M\,\boldsymbol\cdot} - \bB\bX) \rVert_\textrm{F}^2 < \epsilon$.  If this `stopping' condition is met prior to obtaining $\bB_{\min}$, then the MIQP solver can be terminated early.  Second, as an alternative to an explicit MIQP solver, a discrete first-order gradient-descent based algorithm proposed in \cite{bertsimas2016best} can be implemented for a crude but super-efficient computation of $h_\epsilon(\widehat{\bB}_{\boldsymbol\cdot\,M})$, when $h_\epsilon(\widehat{\bB}_{\boldsymbol\cdot\,M}) = 0$.  This `projected gradient-descent' algorithm is advocated as a warm start to the MIQP in \cite{bertsimas2016best}, and the pseudocode for our implementation of it is given in Algorithm~\ref{algo: hfunction}.  Note that the gradient of objective function in the optimization problem in Definition~\ref{defn: hfunction} is Lipschitz continuous with Lipschitz constant $L = \lVert \bhatSigma_{(M)}^{-1} \rVert \lVert \bX\bX^\top \rVert$.  For $\widehat{\bB}_{\boldsymbol\cdot\,M}$ that are {\em not} $\epsilon$-admissible, we observed in empirical experimentation that by initializing Algorithm~\ref{algo: hfunction} at $\widehat{\bB}_{\boldsymbol\cdot\,M}$, with its column having minimum norm set to zero, it usually finds a solution to determine that $h_\epsilon(\widehat{\bB}_{\boldsymbol\cdot\,M})$ is zero within a few iterations. }

\begin{algorithm}[H]
 \textcolor{black}{
\SetAlgoLined
\KwIn{An index set $M$ and $\epsilon > 0$}
\KwOut{A new index set $M^*$} 
Calculate \begin{math}
   \widetilde{M} = \begin{cases}
   M \cup \{\textrm{a new covariate} \} & w.p. \;\; 1/3 \\
   M  \;\backslash \; \{\textrm{a existing covariate} \} & w.p. \;\; 1/3 \\
   M \;\backslash \; \{\textrm{a existing covariate} \} \cup \{\textrm{a new covariate} \} & w.p. \;\; 1/3
\end{cases}
\end{math}\;
Calculate LS estimator of $\widehat{\bB}_{\boldsymbol\cdot\,M}$ and  $\widehat{\M{\Sigma}}_{(M)}$ corresponding to ${M}$\;
Calculate LS estimator of $\widehat{\bB}_{\boldsymbol\cdot\,\widetilde{M}}$ and  $\widehat{\bSigma}_{(\widetilde{M})}$ corresponding to the proposal ${\widetilde{M}}$\;
Compute $h_\epsilon(\widehat{\bB}_{\boldsymbol\cdot\,M})$ and $h_\epsilon(\widehat{\bB}_{\boldsymbol\cdot\,\widetilde{M}})$ using Algorithm~\ref{algo: hfunction}\;
Calculate $\widehat{r}_\epsilon\fpr{M \mid \M{Y} }$ and 
  $\widehat{r}_\epsilon\fpr{\widetilde{M} \mid \M{Y} }$ as in~(\ref{eqn: rhat}) \;
 $M^* = \begin{cases} 
 \widetilde{M}
& w.p. \;\; \rho\fpr{M, \widetilde{M}} \\
M & w.p. \;\; 1- \rho\fpr{M, \widetilde{M}} \\
\end{cases}
$, where $\rho\fpr{M, \widetilde{M}} := \min \left\{\frac{ \widehat{r}_\epsilon\fpr{\widetilde{M} \mid \M{Y} }}{\widehat{r}_\epsilon\fpr{M \mid \M{Y} }}, 1 \right\}$ \;
  \caption{\footnotesize Pseudocode for one step of the MCMC algorithm to estimate $r_\epsilon(M \mid \bY)$.  Note that weights can be used for randomly adding/dropping/switching covariates in the proposed index set $\widetilde{M}$ in line 2 (e.g., correlation-based weights). In that case, the Metropolis-Hasting ratio $\rho(M, \widetilde{M})$ in line 7 needs to be updated accordingly.}
 \label{algo: MCMC}
 }
\end{algorithm}

\textcolor{black}{Now that we have a computationally efficient algorithm for computing the $h_\epsilon(\widehat{\bB}_{\boldsymbol\cdot\,M})$, the remaining task is to demonstrate the mechanism for generating samples from the GF distribution of $M$. Estimating $\mathbb{E}\tpr{h_\epsilon\fpr{\bB_{\boldsymbol\cdot\,M}}}$ by $h_\epsilon(\widehat{\bB}_{\boldsymbol\cdot\,M})$, the GF probability mass function $r_\epsilon(M \mid \bY)$ can be approximated as
\begin{equation}\label{eqn: rhat}
\widehat{r}_\epsilon(M \mid \M{Y})  := \Gamma_q\fpr{\frac{n-\abs{M}}{2}}\pi^{\frac{q\abs{M}}{2}}\fpr{\det \widehat{\M{\Sigma}}_{(M)}}^{-\fpr{\frac{n-\abs{M}-q}{2}}}h_\epsilon\fpr{\widehat{\bB}_{\boldsymbol\cdot\,M}}.
\end{equation}
We present the pseudocode for our implementation in Algorithm~\ref{algo: MCMC}. We demonstrate empirically in Section \ref{sec: EASsimstudy} that this approximation gives results that are competitive with the state-of-the-art Bayesian and frequentist methods for the MLR, both in terms of performance and computation time.}


\section{Theoretical Results}\label{sec: EAStheory}

The main objective of this section is to establish the consistency of our model selection procedure, particularly in the high-dimensional setting (i.e., $p \gg n$) with the assumption that the true model is sparse.  We begin by stating and describing essential conditions and necessary supporting results to show that $r_\epsilon(M_{\text{o}} \mid \bY)$ converges in probability to $1$ as $n \to \infty$.  Our strong model selection consistency result is stated as Theorem \ref{Lemma: Mainresult}.  Throughout this section we a-priori fix the following values.  
Let $\ubar{\lambda}_v$ and $\bar{\lambda}_v$ be the minimum and maximum eigenvalues of the true covariance matrix, $\bV_{(M_{\text{o}})}^{0}$, respectively.  Denote by $\mathbb{P}_y(\cdot)$ be the probability measure associated with the sampling distribution of the response $\bY$, as in (\ref{eqn: truemodel}), and denote by $\mathbb{P}(\cdot)$ the probability measure associated with the GF distribution of the parameters. Similarly, denote by $\mathbb{E}_y(\cdot)$ the expectation with respect to the sampling distribution of the response $\bY$, as in (\ref{eqn: truemodel}), and denote by $\mathbb{E}(\cdot)$ the expectation with respect to the GF distribution of the parameters.

The major theoretical intricacies that we deal with while extending from the high-dimensional univariate linear regression case are, first, the residual sum of squares in the multivariate linear model is no longer a scalar, indeed a $q \times q$ matrix. As evident from the equation (\ref{eqn: EASMassfunctionmodelM}), in order to establish asymptotic variable selection selection consistency of the true model, we must derive the concentration bound on the ratio of the determinant of residual sum of squares of matrix for any arbitrary model to that for the true model. This calls for a lower bound on the size of minimum eigenvalue of the $\widehat{\bSigma}_{(M)}$, for which the exisiting theoretical results are sparse, compared to the well-established standard chi-squared tail bounds that apply to the RSS in univariate setting. Second, the non-asymptotic bound for the ratio of the determinant of error covariance matrix is sharper in the sense that it does not pivot upon the growth of $\boldsymbol{\Delta}_{(M)} := \bB_{\boldsymbol\cdot\,M_{\text{o}}}^{0}\bX_{M_{\text{o}}\,\boldsymbol\cdot}(\bI_n- \bH_{(M)})\bX_{M_{\text{o}}\,\boldsymbol\cdot}^\top\bB_{\boldsymbol\cdot\,M_{\text{o}}}^{0\top}$. Consequently, we establish strong variable selection consistency of the true model without imposing any rate on $\boldsymbol{\Delta}_{(M)}$ \citep[first part of Condition 3.2]{williams2019nonpenalized}, quite distinctly from the univariate paper. Third, the concentration bounds for the $\mathbb{E}(h_\epsilon(\bB_{\boldsymbol\cdot\,M}))$ are derived explicitly as a function of $n$, in contrast to Theorem 3.7 and 3.8 of \cite{williams2019nonpenalized} where the bounds are expressed as a function of $\epsilon$, which demands for an additional non-intuitive assumption \citep[Condition 3.4]{williams2019nonpenalized} on the rate of $\epsilon$. Theorem~\ref{Lemma: Mainresult} in our paper guarantees that such a condition is not essential to establish variable selection consistency as long as $\epsilon$ satisfies Condition \ref{cond: upperboundepsilon} and \ref{cond: lowerboundepsilon}, which are fundamental to  the EAS methodology.  


Conditon~\ref{cond: eigenvalueV} requires that the true covariance matrix of the response is positive-definite and finite, which implies that none of the components of the multivariate response are degenerate and they all have finite second moments.  Since the dimension of the multivariate response, $q$, is fixed, this assumption is rather routine.  

\begin{condition}[{\em Non-singularity of true covariance}] \label{cond: eigenvalueV}
The dimension of the multivariate response vector, $q$, is fixed, and $0 < \ubar{\lambda}_v < \bar{\lambda}_v < \infty$. 
\end{condition} 

In our methods, it is important that $\bV_{(M_{\text{o}})}^{0}$ is positive-definite so that there exists a positive-definite, consistent estimator of it, for example, $\bhatSigma_{(M_{\text{o}})}/(n-\abs{M_{\text{o}}})$.  This estimator plays an essential role in our definition of the $h$ function (among other roles).  In particular, Condition~\ref{cond: eigenvalueV} makes it possible that, for large $n$, the minimum eigenvalue of $\bhatSigma_{(M)}/(n-\abs{M})$ is bounded away from $0$ with high probability, for an important class of models.  This fact is established in Lemma~\ref{Lemma: MinEigenofRSSMatrix}, presented next.

\begin{lemma}\label{Lemma: MinEigenofRSSMatrix}
Assume the data generating model (\ref{eqn: truemodel}), and that Condition~\ref{cond: eigenvalueV} holds.  Then for sufficiently large $n$, and for any fixed $0 < \tau < 1$, and for any $M \subseteq \spr{1,2,\dots,p}$ satisfying $n > \abs{M} + 2q$,
\begin{align*}
\mathbb{P}_y\fpr{\lambda_{\min}\big(\widehat{\bSigma}_{(M)} \big) \geq \tau(n-\abs{M})\ubar{\lambda}_v}
 >  1- e^{- (1-\sqrt{\tau})^2(n-\abs{M})/2}
\end{align*}
 \end{lemma}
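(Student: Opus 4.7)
The plan is to reduce the lower bound on $\lambda_{\min}(\widehat{\bSigma}_{(M)})$ to a standard left-tail chi-square concentration, via a matrix-normal decomposition together with a discretization of the unit sphere in $\Real^q$. First, since $\bI_n - \bH_{(M)}$ is an orthogonal projection of rank $m := n - \abs{M}$, I would write $\bI_n - \bH_{(M)} = \bP\bP^\top$ for some $\bP \in \Real^{n\times m}$ with $\bP^\top\bP = \bI_m$, and set $\bZ := \bY\bP$ so that $\widehat{\bSigma}_{(M)} = \bZ\bZ^\top$. Under the generating model~(\ref{eqn: truemodel}), standard matrix-normal algebra and $\bP^\top\bP = \bI_m$ give
\[
\bZ \sim \textrm{Matrix-Normal}_{q,m}\fpr{\bB^0_{\boldsymbol\cdot\,M_{\text{o}}}\bX_{M_{\text{o}}\,\boldsymbol\cdot}\bP,\ \bV^0_{(M_{\text{o}})},\ \bI_m}.
\]

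Second, for each fixed unit vector $u \in \Real^q$, $u^\top\widehat{\bSigma}_{(M)} u = \norm{\bZ^\top u}^2$ where $\bZ^\top u$ is Gaussian in $\Real^m$ with isotropic covariance $(u^\top \bV^0_{(M_{\text{o}})} u)\,\bI_m$. Hence $\norm{\bZ^\top u}^2/(u^\top \bV^0_{(M_{\text{o}})} u)$ is a noncentral $\chi^2_m$, which stochastically dominates the central $\chi^2_m$. Combining $u^\top \bV^0_{(M_{\text{o}})} u \geq \ubar{\lambda}_v$ from Condition~\ref{cond: eigenvalueV} with the Chernoff left-tail bound
\[
\mathbb{P}\fpr{\chi^2_m \leq \tau m} \leq \fpr{\tau e^{1-\tau}}^{m/2} \leq e^{-m(1-\sqrt{\tau})^2/2}
\]
(the second inequality follows from the elementary calculus fact $\log\tau \leq 2(\sqrt{\tau}-1)$ for $\tau \in (0,1]$) yields the pointwise estimate $\mathbb{P}_y\fpr{u^\top\widehat{\bSigma}_{(M)} u < \tau m \ubar{\lambda}_v} \leq e^{-m(1-\sqrt{\tau})^2/2}$.

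Third, to lift this pointwise bound to $\lambda_{\min}(\widehat{\bSigma}_{(M)}) = \inf_{\norm{u}=1} u^\top\widehat{\bSigma}_{(M)} u$, I would fix a $\delta$-net $\mathcal{N}_\delta$ of the unit sphere in $\Real^q$ with $\abs{\mathcal{N}_\delta} \leq (1 + 2/\delta)^q$ and use the standard discretization inequality
\[
\lambda_{\min}(\widehat{\bSigma}_{(M)}) \geq \min_{u' \in \mathcal{N}_\delta} u'^\top\widehat{\bSigma}_{(M)} u' - 2\delta\,\norm{\widehat{\bSigma}_{(M)}}.
\]
A complementary Chernoff right-tail bound on $\chi^2_m$, applied via an analogous net argument to $\lambda_{\max}$, controls $\norm{\widehat{\bSigma}_{(M)}}$ to be of order $\bar{\lambda}_v m$ with probability exponentially close to one. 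A union bound over $\mathcal{N}_\delta$ of the pointwise estimate, with $\delta$ fixed (independent of $n$) and chosen small enough to absorb the discretization slack into a slightly smaller $\tau$, then yields the claimed lower bound. Because $q$ is fixed by Condition~\ref{cond: eigenvalueV}, the prefactor $(1+2/\delta)^q$ is a constant in $n$, which the ``sufficiently large $n$'' qualifier absorbs into the exponent $(1-\sqrt{\tau})^2 m/2$.

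The principal obstacle is the non-centrality appearing when $M \not\supseteq M_{\text{o}}$: then $\bZ$ has a nonzero mean, and pointwise stochastic dominance of noncentral over central chi-square is not, a priori, equivalent to dominance of the infimum over the unit sphere. The resolution is to apply the pointwise dominance inside the net-based framework above, so that the deterministic mean shift does not enter quantitatively into the single-$u$ tail bound. A secondary technicality is matching the sharp exponent $(1-\sqrt{\tau})^2 (n-\abs{M})/2$ stated in the lemma, which requires the covering-induced $q$-dependent prefactor to be negligible---this is precisely what ``sufficiently large $n$'' together with $n > \abs{M} + 2q$ ensures.
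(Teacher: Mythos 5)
Your pointwise step is sound: for fixed $u$, $\norm{\bZ^\top u}^2/(u^\top\bV^0_{(M_{\text{o}})}u)$ is a noncentral $\chi^2_{n-\abs{M}}$ that stochastically dominates the central one, and the Chernoff exponent $\tau-1-\log\tau$ does exceed $(1-\sqrt{\tau})^2$ on $(0,1)$. The gap is in the lift to $\lambda_{\min}$. Your discretization inequality carries the error term $2\delta\norm{\widehat{\bSigma}_{(M)}}$, and you propose to control $\norm{\widehat{\bSigma}_{(M)}}$ to be of order $\bar{\lambda}_v(n-\abs{M})$ by a right-tail chi-square argument; but that only controls the centered part. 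When $M\not\supseteq M_{\text{o}}$, the matrix $\widehat{\bSigma}_{(M)}$ contains the deterministic non-centrality $\boldsymbol{\Delta}_{(M)}=\bB^0_{\boldsymbol\cdot\,M_{\text{o}}}\bX_{M_{\text{o}}\,\boldsymbol\cdot}(\bI_n-\bH_{(M)})\bX_{M_{\text{o}}\,\boldsymbol\cdot}^\top\bB^{0\top}_{\boldsymbol\cdot\,M_{\text{o}}}$, whose operator norm is bounded by no condition of the paper (the authors explicitly advertise that they impose no rate on $\boldsymbol{\Delta}_{(M)}$); indeed the right tail of $u^\top\widehat{\bSigma}_{(M)}u$ is that of a \emph{noncentral} chi-square and is inflated by the mean, not controlled by the central bound you invoke. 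Consequently $\norm{\widehat{\bSigma}_{(M)}}/(n-\abs{M})$ can be arbitrarily large, the admissible $\delta$ would have to shrink with the unknown and possibly growing signal, and the inflated threshold $\tau''=\tau+2\delta\norm{\widehat{\bSigma}_{(M)}}/\{(n-\abs{M})\ubar{\lambda}_v\}$ entering your pointwise bound cannot be kept near $\tau$ uniformly in $M$ and $n$. The ``sufficiently large $n$'' qualifier cannot absorb this, because the deficit sits in the coefficient of $n-\abs{M}$ in the exponent rather than in an additive constant. Your stated resolution of the non-centrality obstacle fixes only the single-$u$ tail; the mean re-enters precisely through the discretization term.

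The paper's proof avoids the sphere altogether. It pre-whitens by $\bA_{(M_{\text{o}})}^{0^{-1}}$, reduces $\lambda_{\min}(\widehat{\bSigma}_{(M)})$ to the squared smallest singular value of a $q\times(n-r)$ non-centered Gaussian matrix with identity row and column covariances, and then invokes Theorem~2.1 of \cite{Tu2020}, a smallest-singular-value concentration bound for non-centered Gaussian matrices that holds uniformly over the mean. That imported result is exactly the non-standard ingredient the authors flag as a technical novelty of the multivariate setting. To repair your route you would need either such a mean-uniform $\sigma_{\min}$ bound in place of the net argument, or an additional hypothesis controlling $\norm{\boldsymbol{\Delta}_{(M)}}/(n-\abs{M})$, which the paper deliberately avoids assuming.
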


Condition~\ref{cond: supelementsB} specifies that the number of predictors are allowed to grow at sub-exponential rate with the sample size, $n$, ensuring that our method is suitable to perform in the high-dimensional setting.  Model selection consistency with this size of $p$ relative to $n$ is on par with the state-of-the-art results in the literature \citep{bai2018high}. 

\begin{condition}\label{cond: supelementsB} 
For some fixed $\alpha \in (0,1)$, $\log p = o(n^{1-\alpha})$.
\end{condition}

For a given model $M$, if $\abs{M}$ is on the order of $n$, then the row space of $\bX_{M\,\boldsymbol\cdot}$ might span $\Real^{n}$ leading to a rank deficient empirical error covariance matrix, $\bhatSigma_{(M)}$.  Accordingly, $r_\epsilon(M \mid \bY)$ from equation (\ref{eqn: EASMassfunctionmodelM}) will be undefined in this case.  Since we assume that the data arise from the non-degenerate statistical model (\ref{eqn: truemodel}), we must exclude index sets $M$ with $n^\alpha < \abs{M} < n-q$ for some fixed $\alpha \in (0,1)$ arising in Condition \ref{cond: supelementsB}.  Recall, however, that the $h$-function already assigns the value $0$, by definition, to $M$ with $\abs{M} \ge n - q$, and that $q$ is small and fixed.  Throughout the remainder of this section, we assume $\alpha \in (0,1)$ to be some a-priori fixed value with $|M_{\text{o}}| \le n^{\alpha}$. \textcolor{black}{Condition~\ref{cond: supelementsB} is quintessential for LASSO to achieve variable selection consistency in high-dimensional settings \citep[Theorem 3 of][]{zhao2006model}}.  It will be observed in the coming results that the fraction $\alpha$ can be interpreted as a tuning parameter that balances the maximum model size to be considered versus the rate of convergence of the EAS procedure.

Moving along, Condition~\ref{cond: eigenvalX} ensures that the design matrix for the true model has full row rank,
This type of restricted eigenvalue condition is routinely needed in the variable selection literature; \textcolor{black}{e.g., see Condition 6 of \cite{zhao2006model} and \cite{lahiri2021necessary} in the context of LASSO}.


\begin{condition}\label{cond: eigenvalX}
For the true model $M_{\text{o}}$, $\bX_{M_{\text{o}}\,\boldsymbol\cdot}\bX_{M_{\text{o}}\,\boldsymbol\cdot}^\top$ is non-singular.
\end{condition}

%
%
%
%

Next, in order to the show that $r_\epsilon(M_{\text{o}} \mid \bY) \to 1$ in probability, we must show that $r_\epsilon(M \mid \bY)/r_\epsilon(M_{\text{o}} \mid \bY) \to 0$ in probability, at a rate vanishing faster than $2^{-n^{\alpha}}$ uniformly for every model $M \neq M_{\text{o}}$ with $\abs{M} \leq n^\alpha$.  Recall that the probability mass function $r_\epsilon(M \mid \bY)$, in equation (\ref{eqn: EASMassfunctionmodelM}), is proportional to a polynomial of the inverse of the determinant of the empirical error covariance matrix.  That being so, we must bound the ratio of determinant, as in Theorem \ref{Lemma: RatioofRSS}.  This ratio is analogous to the ratio of RSS that commonly appears in univariate model selection problems, though, the multivariate situation is much more complicated requiring delicate handling of minimum eigenvalues close to zero.  That being so, this result is interesting in its own right for (high-dimensional) MLR.

For for any model $M$ with $|M| \le n^{\alpha}$, it is understood that the determinant of $\sighat{(M_{\text{o}})}$ as a proportion of that of $\sighat{(M)}$ will behave differently depending on whether $M \subsetneq M_{\text{o}}$ or $M \not \subseteq M_{\text{o}}$. In the first case, when $M \subsetneq M_{\text{o}}$, the ratio will be strictly less than 1 since $M$ is missing at least one oracle predictor.  Conversely, the extreme scenario in the other case is that $M \supset M_{\text{o}}$, in which case the ratio exceeds 1, but by some bound that converges to 1 as $n$ tends to infinity.

\begin{theorem}\label{Lemma: RatioofRSS}
Assume Conditions ~\ref{cond: eigenvalueV}, \ref{cond: supelementsB}, and~\ref{cond: eigenvalX}.  Then for sufficiently large $n$, the following approximations hold. 

Case 1: This case pertains to the models $M \subsetneq M_{\text{o}}$. 
{\footnotesize \begin{equation*}
    \mathbb{P}_y\fpr{\bigcap_{M: M \subsetneq M_{\text{o}}} \spr{\M{Y}: \fpr{\frac{\det \sighat{M_{\text{o}}}}{\det \sighat{M}}}^{\frac{n-\abs{M}-q}{2}}  \leq e^{-qn^\alpha\log\abs{M_{\text{o}}}} }} \geq 1 - V_{1,n},
\end{equation*}}
where,
{\footnotesize \begin{align*}
    V_{1,n} := & \abs{M_{\text{o}}}q\exp\fpr{-\frac{\xi_{n,\abs{M_{\text{o}}}}n^\alpha\log\abs{M_{\text{o}}}\ubar{\lambda}_v}{2\bar{\lambda}_v} - 0.09 \fpr{n - \abs{M_{\text{o}}}} + \abs{M_{\text{o}}} \log(\abs{M_{\text{o}}})}  \\
    & \quad +  2\abs{M_{\text{o}}}\exp\fpr{-0.04(n-\abs{M_{\text{o}}}) + \abs{M_{\text{o}}}\log\abs{M_{\text{o}}}} ,
\end{align*}}
with $\xi_{n,\abs{M}} := 1 - \frac{2n^\alpha\log\abs{M_{\text{o}}}}{n - \abs{M} - q}$ such that $\xi_{n,\abs{M}} \in (0,1)$ for large $n$. \\

Case 2: This case pertains to the models $M \not \subseteq M_{\text{o}}$ such that $\abs{M} \leq n^\alpha$.
{\footnotesize \begin{equation*}
    \mathbb{P}_y\fpr{\bigcap_{\substack{M : M \not\subseteq M_{\text{o}} \\ \;\;\quad\abs{M} \leq n^\alpha}} \spr{\M{Y}: \fpr{\frac{\det \sighat{M_{\text{o}}}}{\det \sighat{M}}}^{\frac{n-\abs{M}-q}{2}}  \leq e^{q\fpr{n^{\alpha}\log\fpr{n-\abs{M_{\text{o}}}}+\abs{M}\log p}} }} \geq 1 - V_{2,n},
\end{equation*}}
where,
{\footnotesize\begin{equation*}
    V_{2,n}  := 2q\exp\fpr{- \frac{n^\alpha}{2}\log\fpr{n-\abs{M_{\text{o}}}} + \alpha\log n + \frac{(\abs{M_{\text{o}}}+1)}{2}\log\fpr{\zeta_{n,n^\alpha}} },
\end{equation*}}
and 
$\zeta_{n,\abs{M}} := 1 + \frac{2\fpr{n^{\alpha}\log\fpr{n-\abs{M_{\text{o}}}} + \abs{M}\log p}}{n - \abs{M} - q}$.
\end{theorem}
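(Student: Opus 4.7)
My plan is to analyze each case by introducing the union model $M \cup M_o$, decomposing the residual covariance matrices along orthogonal projection subspaces, and applying the matrix determinant lemma together with Wishart-type concentration. Using the orthogonal decomposition
\[
\bI_n - \bH_{(M)} = \bigl(\bI_n - \bH_{(M \cup M_o)}\bigr) + \bigl(\bH_{(M \cup M_o)} - \bH_{(M)}\bigr)
\]
(and similarly for $M_o$), and setting $\Delta\bH_1 := \bH_{(M \cup M_o)} - \bH_{(M)}$ and $\Delta\bH_2 := \bH_{(M \cup M_o)} - \bH_{(M_o)}$, I obtain via $\det(\bA+\bB) = \det \bA \cdot \det(\bI + \bA^{-1/2} \bB \bA^{-1/2})$
\begin{equation*}
\frac{\det \sighat{(M_o)}}{\det \sighat{(M)}} = \frac{\det\bigl(\bI_q + \sighat{(M \cup M_o)}^{-1/2}\, \bY\, \Delta\bH_2\, \bY^\top\, \sighat{(M \cup M_o)}^{-1/2}\bigr)}{\det\bigl(\bI_q + \sighat{(M \cup M_o)}^{-1/2}\, \bY\, \Delta\bH_1\, \bY^\top\, \sighat{(M \cup M_o)}^{-1/2}\bigr)}.
\end{equation*}
Because the row space of $\bX_{M_o\,\boldsymbol\cdot}$ lies inside that of $\bX_{(M \cup M_o)\,\boldsymbol\cdot}$, the signal $\bB_{\boldsymbol\cdot\,M_o}^{0}\bX_{M_o\,\boldsymbol\cdot}$ is annihilated by $\bI_n - \bH_{(M \cup M_o)}$, so $\sighat{(M \cup M_o)}$ is a central Wishart; moreover, by orthogonality of Gaussian projections, it is independent of both $\bY \Delta\bH_1 \bY^\top$ and $\bY \Delta\bH_2 \bY^\top$. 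This independence is the structural observation that drives all subsequent concentration.

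For Case 1 ($M \subsetneq M_o$), we have $M \cup M_o = M_o$, so $\Delta\bH_2 = \bzero$, the denominator equals $1$, and the task reduces to lower-bounding the single factor $\det(\bI_q + \sighat{(M_o)}^{-1/2} \bY \Delta\bH_1 \bY^\top \sighat{(M_o)}^{-1/2})$ by $|M_o|^{2qn^\alpha/(n-|M|-q)}$ with high probability. Writing $\Delta\bH_1 = \mathbf{S}\mathbf{S}^\top$ with $\mathbf{S}$ of orthonormal columns, the noise block $\bU \mathbf{S}$ is a $q \times (|M_o|-|M|)$ matrix of iid standard Gaussians, independent of $\sighat{(M_o)}$. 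Applying stochastic dominance of non-central chi-square lower tails by central ones, a Wishart upper-tail bound on $\lambda_{\max}(\sighat{(M_o)})$ that is the companion of Lemma~\ref{Lemma: MinEigenofRSSMatrix}, and the inequality $\log\det(\bI + \bC) \geq \tr(\bC)/(1 + \lambda_{\max}(\bC))$ for PSD $\bC$, I obtain a per-model failure probability of order $\exp\bigl(-\xi_{n,|M_o|}\, n^\alpha \log|M_o|\, \ubar{\lambda}_v/(2\bar{\lambda}_v) - 0.09(n-|M_o|)\bigr)$. A union bound over the at most $2^{|M_o|} \leq e^{|M_o|\log|M_o|}$ strict submodels of $M_o$ then produces $V_{1,n}$.

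For Case 2 ($M \not\subseteq M_o$ with $|M| \leq n^\alpha$), the denominator factor is bounded below by $1$ trivially, while the numerator involves only central Wishart noise (since $\bB_{\boldsymbol\cdot\,M_o}^{0} \bX_{M_o\,\boldsymbol\cdot} \Delta\bH_2 = \bzero$) of rank $|M \cup M_o| - |M_o| \leq |M|$. Laurent--Massart chi-square upper-tail inequalities applied to the trace of this noise-only Wishart, combined with Lemma~\ref{Lemma: MinEigenofRSSMatrix} applied to $\sighat{(M \cup M_o)}$, produce the upper bound $e^{q(n^\alpha \log(n-|M_o|) + |M|\log p)}$; the $|M|\log p$ term in the exponent is calibrated to absorb the model-count factor $\binom{p}{|M|} \leq e^{|M|\log p}$ inside the Chernoff exponent, which yields $V_{2,n}$. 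The principal technical obstacle is Case 1: obtaining a lower bound on the determinant increment on the scale $n^\alpha \log|M_o|/n$ despite the added PSD matrix having rank possibly as small as $|M_o|-|M|=1$; the orthogonal-projection independence argument resolves this cleanly, with the signal entering only favorably via PSD monotonicity, which is precisely what removes any need for a growth-rate hypothesis on $\boldsymbol{\Delta}_{(M)}$ and constitutes the key structural departure from the univariate precursor of \cite{williams2019nonpenalized}.
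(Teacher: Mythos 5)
Your Case~2 is essentially the paper's own argument: the paper also forms $M^\prime = M \cup M_{\text{o}}$, uses the monotonicity $\det\sighat{(M)} \ge \det\sighat{(M^\prime)}$, and observes that $\det\sighat{(M^\prime)}/\det\sighat{(M_{\text{o}})}$ is a purely central nested-Wishart quantity; the only difference is that the paper invokes the exact Beta-product law for this ratio (Theorem 10.5.3 of Muirhead) and bounds the Beta lower tails directly, where you pass through traces and chi-square concentration. Those are interchangeable here, up to checking constants.

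The genuine gap is in Case~1. Writing $\bC := \sighat{(M_{\text{o}})}^{-1/2}\bY\fpr{\bH_{(M_{\text{o}})}-\bH_{(M)}}\bY^\top\sighat{(M_{\text{o}})}^{-1/2}$, which has rank $\abs{M_{\text{o}}}-\abs{M}$, the event you need is $\det(\bI_q+\bC) \ge \xi_{n,\abs{M}}^{-q} \ge 1 + 2qn^\alpha\log\abs{M_{\text{o}}}/(n-\abs{M}-q)$, which forces $\tr(\bC) \gtrsim qn^\alpha\log\abs{M_{\text{o}}}/n$ (exactly so when the rank is one, since then $\det(\bI_q+\bC)=1+\tr\bC$). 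If you reduce to the central case, as your appeal to stochastic dominance does, then $\tr(\bC)$ is of order $\bar{\lambda}_v\,\chi^2_{q(\abs{M_{\text{o}}}-\abs{M})}/(\ubar{\lambda}_v n)$, whose typical size is $q(\abs{M_{\text{o}}}-\abs{M})/n$; for $\abs{M_{\text{o}}}-\abs{M}=1$ you would need $\chi^2_{q} \gtrsim n^\alpha\log\abs{M_{\text{o}}} \to \infty$, an event whose probability tends to zero. So the noise-only increment cannot deliver the required determinant gain, and the intersection event would fail, not hold, with high probability. The increment must therefore be carried by the signal $\boldsymbol{\Delta}_{(M)}$, but a quantitative lower bound on $\boldsymbol{\Delta}_{(M)}$ is precisely the growth-rate hypothesis you claim to have removed (and it is not among the assumed conditions). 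Note also that "PSD monotonicity" is not literally available: $\bY\mathbf{S}\mathbf{S}^\top\bY^\top$ is not $\succeq$ its pure-noise part because of signal--noise cross terms, so even the reduction to the central case needs a genuine stochastic-dominance statement for non-central Wishart determinants, which in any event only lands you in the case that fails. For contrast, the paper's Case~1 never forms this increment at all: it bounds the two determinants separately, using Lemma~\ref{Lemma: MinEigenofRSSMatrix} to get $\det\sighat{(M)} \ge \lambda_{\min}\big(\sighat{(M)}\big)^q \ge \fpr{r_{n,M}\ubar{\lambda}_v(n-\abs{M})}^q$ with high probability, and the Bartlett representation $\det\sighat{(M_{\text{o}})}/\det\bV_{(M_{\text{o}})}^{0} = \prod_{i=1}^q\chi^2_{n-\abs{M_{\text{o}}}-i+1}$ together with a Chernoff bound for the numerator. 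Your route for Case~1 is structurally different from the paper's and, as written, does not close.
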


Observe that both quantities $V_{1,n}$ and $V_{2,n}$ vanish exponentially fast for large $n$, by Condition~\ref{cond: supelementsB}. There are two key facts that we learn from Theorem \ref{Lemma: RatioofRSS}.  The first is that the ratio of the determinants, of the empirical error covariances raised to the power on the order of $n$, will drive $r_\epsilon(M \mid \bY)/r_\epsilon(M_{\text{o}} \mid \bY)$ to 0 for $M \subsetneq M_{\text{o}}$ (i.e., Case 1).  The second fact is that the ratio of the determinants, of the empirical error covariances raised to the power on the order of $n$, will perhaps grow at a sub-exponential rate for large $n$, for $M \not \subseteq M_{\text{o}}$ such that $\abs{M} \leq n^\alpha$ (i.e., Case 2).  As such, the role of the $h$-function is to control the explosive nature of these ratios for models with redundant predictors.  The following two theorems establish that $\mathbb{E}(h_\epsilon(\bB_{\boldsymbol\cdot\,M}))$ and $\mathbb{E}(h_\epsilon(\bB_{\boldsymbol\cdot\,M_{\text{o}}}))$ are adept at accomplishing this task. Sufficient conditions on the choice of $\epsilon$ are stated in the order that they are needed.

\begin{condition}[{\em $\epsilon$-admissibility}] \label{cond: upperboundepsilon}
The size of the true model $\abs{M_{\text{o}}}$ is less than $n^\alpha$. Moreover, for large $n$, the true model $M_{\text{o}}$ satisfies,
{\footnotesize\begin{equation*}
    \frac{1}{36q(n-\abs{M_{\text{o}}})}\left\lVert \fpr{\bV_{(M_{\text{o}})}^{0}}^{-1/2} \fpr{\M{B}^0_{M_{\text{o}}}\bX_{M_{\text{o}}\,\boldsymbol\cdot} - \widetilde{\M{B}}_{\min}\M{X}}\right\rVert_{\textrm{F}}^2 \;\;>\; \epsilon,
\end{equation*}}
where $\widetilde{\bB}_{\min}$ is the solution to the optimization problem,
{\footnotesize\begin{align*}
    \widetilde{\bB}_{\min} :=  \underset{\bB \in \Real^{q \times p}}{\argmin}  \;\left \lVert \fpr{\bV_{(M_{\text{o}})}^{0}}^{-1/2} \fpr{\bB_{\boldsymbol\cdot\,M_{\text{o}}}^{0}\bX_{M_{\text{o}}\,\boldsymbol\cdot} - \bB\bX} \right \rVert_\textrm{F}^2,
\end{align*}}
subject to $\abs{\spr{j: \norm{\bB_j} \neq 0}} \leq \abs{M_{\text{o}}}-1$.
\end{condition}

\textcolor{black}{Condition~\ref{cond: upperboundepsilon} provides the maximum rate of growth for the size of the true model. This is analogous to the {\em sparsity} assumption for the LASSO \cite[Condition 7 of ][]{zhao2006model} }. It also furnishes an upper bound for the choice of $\epsilon$ that is sufficient for the identifiability of the true model and coefficients, as in Definition \ref{defn: hfunction} of the $h$ function.  Notice that given an $\epsilon$, the smaller the norm of the regression coefficient matrix $\M{B}^0_{\cdot\,M_{\text{o}}}$, the more difficult it becomes to identify the true model as $\epsilon$-admissibile.  It is in this sense that $\epsilon$-admissibility defines redundancy both in the sense of correlated predictors and in the sense of predictors with weak signal (after scaling for the response covariance). \textcolor{black}{This is related to the `beta-min' condition discussed for variable selection via LASSO (\cite[Section 7.4 of][]{buhlmann2011statistics} and \cite[Condition 8 of][]{zhao2006model}}). 


With the addition of Condition \ref{cond: upperboundepsilon}, Theorem~\ref{Lemma: Ehtruemodel} ensures that the oracle model is $\epsilon$-admissible.  In our proof strategy, this theorem provides a non-asymptotic probabilistic guarantee that $\mathbb{E}\fpr{h_\epsilon\fpr{\bB_{\boldsymbol\cdot\,M_{\text{o}}}}}$ in the denominator of $r_\epsilon(M\mid\bY)/r_\epsilon(M_{\text{o}}\mid\bY)$ is bounded away from zero, so long as $\epsilon$ is not too large.

\begin{theorem} \label{Lemma: Ehtruemodel}
Assume Conditions~\ref{cond: eigenvalueV} and~\ref{cond: eigenvalX}. Then, for every $\epsilon > 0$ satisfying Condition \ref{cond: upperboundepsilon},
{\footnotesize \begin{align*}
   \mathbb{P}_y\tpr{ \mathbb{E}\fpr{h_\epsilon\fpr{\bB_{\boldsymbol\cdot\,M_{\text{o}}}}} \geq 1 - \exp\fpr{-\frac{\epsilon\fpr{n - \abs{M_{\text{o}}}}}{36} + \frac{q\abs{M_{\text{o}}}}{2}} - 2\exp\fpr{-\frac{1}{8}\spr{\sqrt{n - \abs{M_{\text{o}}}}-2\sqrt{q}}^2}} \\
    \geq 1 - V_{3,n},  \hspace{60 mm} 
\end{align*}}
where,
{ \footnotesize
\begin{align*}
    V_{3,n} &:= \exp\fpr{-\frac{\epsilon\fpr{n-\abs{M_{\text{o}}}}\ubar{\lambda}_{v}}{4\;\bar{\lambda}_{v}} + \frac{q\abs{M_{\text{o}}}}{2}} + \exp\fpr{-0.04(n-\abs{M_{\text{o}}})} +  \exp\fpr{-0.15q(n-\abs{M_{\text{o}}})}. 
\end{align*}}
\end{theorem}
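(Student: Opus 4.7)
The plan is to use the triangle inequality to reduce the GF probability $\mathbb{E}\{h_\epsilon(\bB_{\boldsymbol\cdot\,M_{\text{o}}})\} = \mathbb{P}\{T^2(\bB_{\boldsymbol\cdot\,M_{\text{o}}}) \geq 2\epsilon\}$, where $T(\bB_{\boldsymbol\cdot\,M_{\text{o}}}) := \|\bhatSigma_{(M_{\text{o}})}^{-1/2}(\bB_{\boldsymbol\cdot\,M_{\text{o}}}\bX_{M_{\text{o}}\,\boldsymbol\cdot} - \bB_{\min}\bX)\|_\textrm{F}$, to a bound that separates $\bY$-level randomness (captured by $\bhatSigma_{(M_{\text{o}})}$ and $\widehat{\bB}_{\boldsymbol\cdot\,M_{\text{o}}}$) from GF-level randomness (captured by the matrix-$t$ fluctuation of $\bB_{\boldsymbol\cdot\,M_{\text{o}}}$ around $\widehat{\bB}_{\boldsymbol\cdot\,M_{\text{o}}}$). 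Applying the triangle inequality pointwise inside the minimum over sparse $\bB^*$ and inserting $\pm\bB_{\boldsymbol\cdot\,M_{\text{o}}}^{0}\bX_{M_{\text{o}}\,\boldsymbol\cdot}$ yields $T(\bB_{\boldsymbol\cdot\,M_{\text{o}}}) \geq G - H_1 - H_2$, where $G := \min_{\bB^* \text{ sparse}}\|\bhatSigma_{(M_{\text{o}})}^{-1/2}(\bB_{\boldsymbol\cdot\,M_{\text{o}}}^{0}\bX_{M_{\text{o}}\,\boldsymbol\cdot} - \bB^*\bX)\|_\textrm{F}$ and $H_2 := \|\bhatSigma_{(M_{\text{o}})}^{-1/2}(\widehat{\bB}_{\boldsymbol\cdot\,M_{\text{o}}} - \bB_{\boldsymbol\cdot\,M_{\text{o}}}^{0})\bX_{M_{\text{o}}\,\boldsymbol\cdot}\|_\textrm{F}$ depend only on $\bY$, whereas $H_1 := \|\bhatSigma_{(M_{\text{o}})}^{-1/2}(\bB_{\boldsymbol\cdot\,M_{\text{o}}} - \widehat{\bB}_{\boldsymbol\cdot\,M_{\text{o}}})\bX_{M_{\text{o}}\,\boldsymbol\cdot}\|_\textrm{F}$ is random under the GF. The event $\{T^2 < 2\epsilon\}$ then implies $H_1 > G - H_2 - \sqrt{2\epsilon}$, so it suffices to keep this threshold bounded below by a positive constant multiple of $\sqrt{q\epsilon}$ on a $\mathbb{P}_y$-event of probability $\geq 1 - V_{3,n}$ and to bound the complementary GF-tail of $H_1$ by the sum $A + B$ appearing in the statement.

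The outer $\mathbb{P}_y$-event would be built as the intersection of three sub-events, one for each term of $V_{3,n}$. Lemma~\ref{Lemma: MinEigenofRSSMatrix} with $\tau = 1/2$ contributes the $\exp(-0.04(n-\abs{M_{\text{o}}}))$ term through $\lambda_{\min}(\bhatSigma_{(M_{\text{o}})}) \geq (n-\abs{M_{\text{o}}})\ubar{\lambda}_v/2$, and a companion one-sided Wishart upper-tail bound on $\lambda_{\max}(\bhatSigma_{(M_{\text{o}})}) \leq c\,\bar{\lambda}_v(n-\abs{M_{\text{o}}})$ contributes the $\exp(-0.15q(n-\abs{M_{\text{o}}}))$ term. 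Together with Condition~\ref{cond: upperboundepsilon}, these yield $G^2 \geq \frac{\ubar{\lambda}_v}{c\,\bar{\lambda}_v(n-\abs{M_{\text{o}}})}\|(\bV_{(M_{\text{o}})}^{0})^{-1/2}(\bB_{\boldsymbol\cdot\,M_{\text{o}}}^{0}\bX_{M_{\text{o}}\,\boldsymbol\cdot} - \widetilde{\bB}_{\min}\bX)\|_\textrm{F}^2 \gtrsim q\epsilon$. For the third sub-event, the identity $(\widehat{\bB}_{\boldsymbol\cdot\,M_{\text{o}}} - \bB_{\boldsymbol\cdot\,M_{\text{o}}}^{0})\bX_{M_{\text{o}}\,\boldsymbol\cdot} = \bA_{(M_{\text{o}})}^{0}\bU\bH_{(M_{\text{o}})}$, valid under the true model~(\ref{eqn: truemodel}), reduces $H_2^2$, after the spectral switch from $\bhatSigma$-weighting to $\bV^{0}$-weighting that produces the $\ubar{\lambda}_v/(4\bar{\lambda}_v)$ factor, to a Hotelling-F-type trace $\mathrm{tr}(\bW_1\bW_2^{-1})$ with $\bW_1 \sim \mathrm{Wishart}_q(\abs{M_{\text{o}}},\bI_q)$ and $\bW_2 \sim \mathrm{Wishart}_q(n-\abs{M_{\text{o}}},\bI_q)$ independent. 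A Chernoff tail bound on this statistic supplies the first term of $V_{3,n}$.

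For the inner GF tail, I would condition on $\bY$ and exploit the Matrix-Normal/Inverse-Wishart hierarchy underlying the matrix-$t$ in~(\ref{eqn: matrixt}): $\bB_{\boldsymbol\cdot\,M_{\text{o}}} \mid \bV_{(M_{\text{o}})},\bY \sim \textrm{Matrix-Normal}(\widehat{\bB}_{\boldsymbol\cdot\,M_{\text{o}}},\bV_{(M_{\text{o}})},(\bX_{M_{\text{o}}\,\boldsymbol\cdot}\bX_{M_{\text{o}}\,\boldsymbol\cdot}^\top)^{-1})$ with $\bV_{(M_{\text{o}})}\mid\bY$ inverse-Wishart of scale $\bhatSigma_{(M_{\text{o}})}$. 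The standard Matrix-Normal factorization $\bB_{\boldsymbol\cdot\,M_{\text{o}}} - \widehat{\bB}_{\boldsymbol\cdot\,M_{\text{o}}} = \bV_{(M_{\text{o}})}^{1/2}\bZ_0(\bX_{M_{\text{o}}\,\boldsymbol\cdot}\bX_{M_{\text{o}}\,\boldsymbol\cdot}^\top)^{-1/2}$, with $\bZ_0$ a $q \times \abs{M_{\text{o}}}$ iid-$N(0,1)$ matrix independent of $\bV_{(M_{\text{o}})}$, combines with the orthonormality of the rows of $(\bX_{M_{\text{o}}\,\boldsymbol\cdot}\bX_{M_{\text{o}}\,\boldsymbol\cdot}^\top)^{-1/2}\bX_{M_{\text{o}}\,\boldsymbol\cdot}$ to give
\[
H_1^2 = \|\bhatSigma_{(M_{\text{o}})}^{-1/2}\bV_{(M_{\text{o}})}^{1/2}\bZ_0\|_\textrm{F}^2 \;\leq\; \|\bZ_0\|_\textrm{F}^2/\lambda_{\min}(\bS),
\]
where $\bS := \bhatSigma_{(M_{\text{o}})}^{1/2}\bV_{(M_{\text{o}})}^{-1}\bhatSigma_{(M_{\text{o}})}^{1/2}$ is under the GF distributed as $\mathrm{Wishart}_q(n-\abs{M_{\text{o}}}-1,\bI_q)$. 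The Davidson--Szarek singular-value bound then gives $\lambda_{\min}(\bS) \geq (n-\abs{M_{\text{o}}})/4$ with GF-probability $\geq 1 - 2\exp(-(\sqrt{n-\abs{M_{\text{o}}}}-2\sqrt{q})^2/8)$ (the term $B$), and a Chernoff tail on $\|\bZ_0\|_\textrm{F}^2 \sim \chi^2_{q\abs{M_{\text{o}}}}$ at a threshold of order $\epsilon(n-\abs{M_{\text{o}}})$ produces $A = \exp(-\epsilon(n-\abs{M_{\text{o}}})/36 + q\abs{M_{\text{o}}}/2)$.

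The main technical difficulty is calibrating the numerical constants ($1/36$, $1/4$, $1/8$, $\ubar{\lambda}_v/(4\bar{\lambda}_v)$, $0.04$, $0.15$) through the Wishart, $\chi^2$, and Hotelling-F Chernoff bounds so that the three sub-events of $V_{3,n}$ and the bounds $A, B$ emerge with precisely the stated coefficients. A point specific to the multivariate setting is that the GF-conditional independence of $\bZ_0$ and $\bV_{(M_{\text{o}})}$ must be exploited to decouple the Wishart-minimum-eigenvalue tail on $\bS$ from the $\chi^2$ tail on $\|\bZ_0\|_\textrm{F}^2$ in the product bound for $H_1^2$, a step that has no direct analogue in the univariate treatment of \cite{williams2019nonpenalized}.
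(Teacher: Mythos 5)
Your proposal is correct and follows essentially the same route as the paper's proof: the same triangle-inequality decomposition pivoting on Condition~\ref{cond: upperboundepsilon}, the same three $\mathbb{P}_y$-events producing $V_{3,n}$ (via Lemma~\ref{Lemma: MinEigenofRSSMatrix} with $\tau=1/2$ and $\chi^2$/Wishart tail bounds), and the same treatment of the GF tail through the representation of the matrix-$t$ fluctuation as a Gaussian matrix premultiplied by an inverse square-root Wishart, combined with the Davidson--Szarek bound on $\lambda_{\min}$ and a $\chi^2_{q\abs{M_{\text{o}}}}$ Chernoff bound. The only cosmetic deviations are that the paper controls the change of metric from $\bV_{(M_{\text{o}})}^{0}$ to $\sighat{(M_{\text{o}})}$ through the Frobenius quantity $\lVert (\bA_{(M_{\text{o}})}^{0})^{-1}\sighat{(M_{\text{o}})}^{1/2}\rVert_{\mathrm{F}}^2 \sim \chi^2_{q(n-\abs{M_{\text{o}}})}$ (which is where the $\exp(-0.15q(n-\abs{M_{\text{o}}}))$ term actually comes from) rather than a $\lambda_{\max}$ Wishart tail, and handles the sampling-error term $H_2$ by decoupling $\lambda_{\min}(\sighat{(M_{\text{o}})})$ from an independent-of-nothing $\chi^2_{q\abs{M_{\text{o}}}}$ variate as in Lemma~\ref{Lemma: LeastSqEstDist}, rather than through a Hotelling-$F$-type trace of a ratio of Wisharts.
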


Lastly, to justify that $r_\epsilon(M\mid\bY)/r_\epsilon(M_{\text{o}}\mid\bY) \to 0$ for all redundant models, it remains to establish that $\mathbb{E}\fpr{h_\epsilon\fpr{\bB_{\boldsymbol\cdot\,M}}}$ vanishes rapidly for all models $M \not \subseteq M_{\text{o}}$ with $|M| \le n^{\alpha}$ (recall the cases in Theorem \ref{Lemma: RatioofRSS}). This brings us to the final major supporting result, Theorem \ref{Lemma: Ehbigmodels}, for establishing strong model selection consistency. However, in contrast to the the upper bound condition on $\epsilon$ in Condition~\ref{cond: upperboundepsilon}, a lower bound condition on $\epsilon$ is sufficient to ensure that $h_\epsilon\fpr{\bB_{\boldsymbol\cdot\,M}}$ assigns negligible probability mass to redundant models via $r_\epsilon(M\mid\bY)$.

\begin{condition}[{\em Redundancy}] \label{cond: lowerboundepsilon}
For any model $M$ with $M \not\subseteq M_{\text{o}}$ with $\abs{M} \leq n^\alpha$, for large $n$,
{\footnotesize\begin{equation*}
    \frac{9}{\ubar{\lambda}_v\fpr{n-\abs{M}}}\left \lVert \bB_{\boldsymbol\cdot\,M_{\text{o}}}^{0} \bX_{M_{\text{o}}\,\boldsymbol\cdot}\fpr{\bH_{(M)} - \M{H}_{(M)(-1)}} \right \rVert_{\mathrm{F}}^2 < \epsilon,
\end{equation*}}
where $\bH_{(M)(-1)} := \bH_{(M\backslash \{j^*\})}$ is the projection matrix for the size that is constructed after omitting the predictor $j^*$ from the model $M$ that minimizes
{\footnotesize\begin{align*}
    j^* = \underset{j \in M}{\argmin} \left \lVert \bB_{\boldsymbol\cdot\,M_{\text{o}}}^{0} \bX_{M_{\text{o}}\,\boldsymbol\cdot}\fpr{\bH_{(M)} - \M{H}_{(M\backslash
    \{j\})}} \right \rVert_{\mathrm{F}}^2.
\end{align*}}
\end{condition}

Condition \ref{cond: lowerboundepsilon} is sufficient for showing that $\mathbb{E}(h_\epsilon(\bB_{\boldsymbol\cdot\,M})) \to 0$ in probability for all redundant models, and further characterizes the non-$\epsilon$-admissible notion for redundancy. The quantity on the left side of the condition is the mean difference in the prediction between models $M$ and $M\backslash\{j^*\}$. Condition \ref{cond: lowerboundepsilon} implies that models $M$ with $M \not\subseteq M_{\text{o}}$ are redundant in the sense that they contain at least one predictor whose omission will not change the mean predicted response by more than $\epsilon$, as measured by the properly scaled squared Frobenius norm. \textcolor{black}{This requires that none of the predictors in model $M\setminus M_{\text{o}}$ can be replaced by some predictors in the true model to provide a significantly better prediction than $\epsilon$ (in the appropriate scale). Intuitively, this means that the correlations between the predictors in the true model and the ones not in the true model cannot be large. Resembling the notion of the {\em irrepresentability} condition necessary for LASSO model selection consistency \citep{zhao2006model}, this is to say that the irrelevant covariates cannot be well-represented by any of the covariates in the true model.}

Condition~\ref{cond: upperboundepsilon} coupled with Condition~\ref{cond: lowerboundepsilon} provides the crucial interval for the choice of $\epsilon$ within which the oracle model is identifiable and the EAS procedure achieves strong model selection consistency.  Notably, due to the appropriate scaling of quantities in the $h$ function, this interval neither depends on the sample size $n$ nor the size of the model $M$. 

\begin{theorem}\label{Lemma: Ehbigmodels}
Assume Conditions~\ref{cond: eigenvalueV},~\ref{cond: supelementsB}, and~\ref{cond: eigenvalX}. Then, for sufficiently large $n$, and for every $\epsilon > 0$ satisfying Condition~\ref{cond: lowerboundepsilon},
{\footnotesize \begin{align*}
    \mathbb{P}_y\tpr{\bigcap_{\substack{M \not\subseteq M_{\text{o}} \\ \abs{M} \leq n^\alpha}} \spr{\bY:\mathbb{E}\fpr{h_\epsilon\fpr{\bB_{\boldsymbol\cdot\,M}}} \leq \exp\fpr{-\frac{\epsilon\fpr{n - \abs{M}}}{36} +  \frac{q\abs{M}}{2}} + 2\exp\fpr{-\frac{1}{8}\spr{\sqrt{n - \abs{M}}-2\sqrt{q}}^2}}} \\
    \geq 1 - V_{4,n}, \hspace{50 mm}
\end{align*} }
where,
{\footnotesize 
\begin{align*}
    V_{4,n} &:= n^\alpha\spr{\exp\fpr{-\frac{\epsilon\fpr{n-n^\alpha}\ubar{\lambda}_v}{36\;\bar{\lambda}_v} + \frac{qn^\alpha}{2} + n^\alpha\log p} + 2\exp\fpr{ -0.04(n-n^{\alpha}) + n^\alpha\log p}}.
\end{align*}}
\end{theorem}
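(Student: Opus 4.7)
The plan is to bound $\mathbb{E}[h_\epsilon(\bB_{\boldsymbol\cdot\,M})]$ by plugging a specific leave-one-out feasible point into the optimization that defines $\bB_{\min}$, thereby reducing what remains to a concentration bound on a quantity governed jointly by the matrix $t$-distribution of $\bB_{\boldsymbol\cdot\,M}$ and by the sampling noise $\bU$.

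Fix $M \not\subseteq M_{\text{o}}$ with $|M|\le n^\alpha$ and let $j^*$ be the leave-one-out index supplied by Condition~\ref{cond: lowerboundepsilon}. Take $\tilde{\bB}$ to have nonzero columns only on $M\setminus\{j^*\}$, set equal to the OLS coefficients obtained by regressing $\bB_{\boldsymbol\cdot\,M}\bX_{M\,\boldsymbol\cdot}$ on $\bX_{M\setminus\{j^*\},\boldsymbol\cdot}$. Feasibility of $\tilde{\bB}$ in the infimum, together with the identity $\bH_{(M)}\bH_{(M)(-1)} = \bH_{(M)(-1)}$ (since the rows of $\bX_{M\setminus\{j^*\},\boldsymbol\cdot}$ are a subset of those of $\bX_{M,\boldsymbol\cdot}$), yields the pointwise bound
\[
h_\epsilon(\bB_{\boldsymbol\cdot\,M}) \le \mathrm{I}\Bigl\{\tfrac{1}{2}\bigl\lVert \bhatSigma_{(M)}^{-1/2}\bB_{\boldsymbol\cdot\,M}\bX_{M\,\boldsymbol\cdot}(\bH_{(M)}-\bH_{(M)(-1)})\bigr\rVert_\textrm{F}^2 \ge \epsilon\Bigr\}.
\]
Using the true generating model to write $\widehat{\bB}_{\boldsymbol\cdot\,M}\bX_{M\,\boldsymbol\cdot}=\bY\bH_{(M)}=\bB^0_{\boldsymbol\cdot\,M_{\text{o}}}\bX_{M_{\text{o}}\,\boldsymbol\cdot}\bH_{(M)} + \bA^0_{(M_{\text{o}})}\bU\bH_{(M)}$ and expanding $\bB_{\boldsymbol\cdot\,M}\bX_{M\,\boldsymbol\cdot} = \widehat{\bB}_{\boldsymbol\cdot\,M}\bX_{M\,\boldsymbol\cdot} + (\bB_{\boldsymbol\cdot\,M}-\widehat{\bB}_{\boldsymbol\cdot\,M})\bX_{M\,\boldsymbol\cdot}$, I split the Frobenius norm squared via $\lVert a+b+c\rVert^2 \le 3(\lVert a\rVert^2 + \lVert b\rVert^2 + \lVert c\rVert^2)$ into three pieces: a deterministic mean piece in $\bB^0_{\boldsymbol\cdot\,M_{\text{o}}}\bX_{M_{\text{o}}\,\boldsymbol\cdot}(\bH_{(M)}-\bH_{(M)(-1)})$, a sampling-noise piece in $\bA^0_{(M_{\text{o}})}\bU(\bH_{(M)}-\bH_{(M)(-1)})$, and a GF-noise piece from the matrix $t$-distribution of $\bB_{\boldsymbol\cdot\,M}-\widehat{\bB}_{\boldsymbol\cdot\,M}$.

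The mean piece is handled by combining Lemma~\ref{Lemma: MinEigenofRSSMatrix} (which gives $\lambda_{\max}(\bhatSigma_{(M)}^{-1}) \lesssim [\ubar{\lambda}_v(n-|M|)]^{-1}$ with $\mathbb{P}_y$-probability $\ge 1-e^{-c(n-|M|)}$) with Condition~\ref{cond: lowerboundepsilon}, which pins the mean piece deterministically below a small constant multiple of $\epsilon$, so it cannot by itself trigger the indicator. For the sampling-noise piece, since $\bH_{(M)}-\bH_{(M)(-1)}=vv^\top$ is a rank-one orthogonal projection onto some unit $v$, the factor $\bA^0_{(M_{\text{o}})}\bU(\bH_{(M)}-\bH_{(M)(-1)})$ collapses to $(\bA^0_{(M_{\text{o}})}\bU v)v^\top$ with $\bU v\sim N_q(0,\bI_q)$ under $\mathbb{P}_y$; a $\chi^2_q$ tail combined with the $\lambda_{\max}$ bound and $\lambda_{\max}(\bV^0_{(M_{\text{o}})})\le\bar{\lambda}_v$ absorbs this piece into a further exponentially small $\mathbb{P}_y$-probability. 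For the GF-noise piece, I use the scale-mixture representation of the matrix $t$: conditionally on a Wishart $\mathbf{S}\sim W_q(n-|M|,\bI_q)$, the random matrix $\bhatSigma_{(M)}^{-1/2}(\bB_{\boldsymbol\cdot\,M}-\widehat{\bB}_{\boldsymbol\cdot\,M})$ is matrix-normal with row covariance $\mathbf{S}^{-1}$ and column covariance $(\bX_{M\,\boldsymbol\cdot}\bX_{M\,\boldsymbol\cdot}^\top)^{-1}$, so the preconditioner $\bhatSigma_{(M)}^{-1/2}$ cancels cleanly against the $\bhatSigma_{(M)}^{1/2}$ in the scale mixture and the quantity becomes data-free; a Chernoff/MGF argument over the resulting matrix $t$ combined with the Bai-Yin/Davidson-Szarek lower tail for $\lambda_{\min}(\mathbf{S})$ then produces the two exponentials $\exp\{-\epsilon(n-|M|)/36 + q|M|/2\}$ and $2\exp\{-\tfrac{1}{8}(\sqrt{n-|M|}-2\sqrt{q})^2\}$ in the statement.

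The final assembly is a union bound of the data-level $\mathbb{P}_y$ events over the at most $\sum_{k\le n^\alpha}\binom{p}{k}\le n^\alpha p^{n^\alpha}$ models $M\not\subseteq M_{\text{o}}$ with $|M|\le n^\alpha$, which produces the $n^\alpha e^{n^\alpha\log p}$ prefactor in $V_{4,n}$. The main obstacle is the GF-noise piece: naive tail bounds on $\lVert(\bB_{\boldsymbol\cdot\,M}-\widehat{\bB}_{\boldsymbol\cdot\,M})\bX_{M\,\boldsymbol\cdot}\rVert_\textrm{F}^2$ would depend multiplicatively on the (unknown) spectrum of $\bhatSigma_{(M)}$ and fail to cancel against the $\bhatSigma_{(M)}^{-1/2}$ preconditioner inside $h_\epsilon$; invoking the matrix $t$ scale-mixture representation is exactly what makes this cancellation exact and reduces the GF tail to a pure Wishart tail, which is the source of the characteristic $(\sqrt{n-|M|}-2\sqrt{q})^2$ exponent in the statement.
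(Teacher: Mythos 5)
Your proposal is correct and follows essentially the same route as the paper's proof: a leave-one-out feasible point substituted into the optimization defining $\bB_{\min}$, a three-way split into a deterministic mean piece (controlled by Condition~\ref{cond: lowerboundepsilon} together with the minimum-eigenvalue bound of Lemma~\ref{Lemma: MinEigenofRSSMatrix}), a sampling-noise piece (Gaussian/chi-square tails, the paper's Lemma~\ref{Lemma: LeastSqEstDist}), and a GF-noise piece handled via the matrix-$t$ scale-mixture representation reducing to Wishart and chi-square tails (the paper's Lemma~\ref{Lemma: FiducialDistBhat}), followed by a union bound over the at most $n^\alpha\binom{p}{n^\alpha}$ candidate models. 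The only cosmetic difference is that the paper takes the feasible point to be $\mathbb{E}_y(\widehat{\bB}_{\boldsymbol\cdot\,M(-1)})\bX_{M(-1)\,\boldsymbol\cdot}$, so its sampling-noise term retains the full projection $\bH_{(M)}$ and a $\chi^2_{q\abs{M}}$ tail (producing the $q\abs{M}/2$ exponent appearing in $V_{4,n}$), whereas your rank-one $\chi^2_q$ version of that term is slightly sharper but yields the same stated bound.
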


Theorem~\ref{Lemma: Ehbigmodels} is a non-asymptotic concentration bound for $\mathbb{E}\fpr{h_\epsilon\fpr{\bB_{\boldsymbol\cdot\,M}}}$ that applies uniformly over all model $M \not\subseteq M_{\text{o}}$.  This is the critical theoretical aspect of the $h$ function that compensates for the explosive nature of the ratios of the determinants of the empirical error covariances raised to the power on the order of $n$, uniformly over all models $M \not \subseteq M_{\text{o}}$ such that $\abs{M} \leq n^\alpha$, as exhibited in Case 2 of Theorem \ref{Lemma: RatioofRSS}.  

To this point in the article, sufficient analysis has be constructed to argue the {\em pairwise} model selection consistency result that $r_\epsilon(M\mid\bY)/r_\epsilon(M_{\text{o}}\mid\bY) \to 0$ in probability for any $M \ne |M_{\text{o}}|$ with $\abs{M} \leq n^\alpha$.  For the case when $p$ is fixed, this also implies {\em strong} model selection consistency.  In the case when $p \to \infty$ and particularly for $p \gg n$, however, further justification is required because the number of candidate models to consider is $2^{n^{\alpha}}$.  Theorems \ref{Lemma: Ehtruemodel} and \ref{Lemma: Ehbigmodels} are able to manage this exponential-sized class of candidate model with the essential attribute that they provide concentration inequalities of tails that are uniform and vanish exponentially fast in $n$.  This fact is stated as our main result, Theorem \ref{Lemma: Mainresult}.  

\begin{theorem}\label{Lemma: Mainresult}
Assume the data generating model (\ref{eqn: truemodel}), and suppose that Conditions \ref{cond: eigenvalueV}, \ref{cond: supelementsB} and~\ref{cond: eigenvalX} are satisfied. Then for every $\epsilon > 0$ satisfying Conditions~\ref{cond: upperboundepsilon} and~\ref{cond: lowerboundepsilon},
\begin{equation*}
    \frac{r_\epsilon\fpr{M_{\text{o}} \vert \M{Y}}}{\sum_{M: \abs{M} \leq n^\alpha} r_\epsilon\fpr{M \vert \M{Y}}} \overset{\mathbb{P}_y}{\longrightarrow} 1,
\end{equation*}
as $n \to \infty$ or $n, p \to \infty$.
\end{theorem}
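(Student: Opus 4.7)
The plan is to reduce the theorem to showing
\begin{equation*}
S_n := \sum_{M \ne M_{\text{o}},\, |M| \le n^\alpha} \frac{r_\epsilon(M \mid \bY)}{r_\epsilon(M_{\text{o}} \mid \bY)} \overset{\mathbb{P}_y}{\longrightarrow} 0,
\end{equation*}
because the target ratio equals $1/(1 + S_n)$. Using (\ref{eqn: EASMassfunctionmodelM}), each summand factorizes as a product of: (i) a ratio of $\Gamma_q$ values and $\pi^{q(|M|-|M_{\text{o}}|)/2}$, which by Stirling is subexponential of order $\exp(O(n^\alpha \log n))$ uniformly over $|M| \le n^\alpha$; (ii) the determinant ratio $(\det \widehat{\bSigma}_{(M_{\text{o}})})^{(n-|M_{\text{o}}|-q)/2} / (\det \widehat{\bSigma}_{(M)})^{(n-|M|-q)/2}$; and (iii) $\mathbb{E}[h_\epsilon(\bB_{\boldsymbol\cdot\,M})] / \mathbb{E}[h_\epsilon(\bB_{\boldsymbol\cdot\,M_{\text{o}}})]$. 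Theorem~\ref{Lemma: Ehtruemodel}, combined with Condition~\ref{cond: upperboundepsilon}, bounds the denominator of (iii) below by a positive constant on an event of $\mathbb{P}_y$-probability at least $1 - V_{3,n}$; from then on only the numerator factors require control.

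I would then split the sum according to the two cases in Theorem~\ref{Lemma: RatioofRSS}. For the first class, $M \subsetneq M_{\text{o}}$, there are at most $2^{|M_{\text{o}}|} \le 2^{n^\alpha}$ such models, and Case~1 of Theorem~\ref{Lemma: RatioofRSS} bounds the determinant factor uniformly by $\exp(-q n^\alpha \log |M_{\text{o}}|)$ on an event of probability at least $1 - V_{1,n}$. Since $\mathbb{E}[h_\epsilon(\bB_{\boldsymbol\cdot\,M})] \le 1$, a union bound yields a Case~1 contribution of at most $\exp(-q n^\alpha \log |M_{\text{o}}| + n^\alpha \log 2 + O(n^\alpha \log n))$, which tends to $0$ for large $n$.

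The substantive work concerns the second class, $M \not\subseteq M_{\text{o}}$ with $|M| \le n^\alpha$, whose cardinality is at most $\exp(O(n^\alpha \log p))$. Here Case~2 of Theorem~\ref{Lemma: RatioofRSS} permits the determinant factor to grow up to $\exp\bigl(q n^\alpha \log(n - |M_{\text{o}}|) + q |M| \log p\bigr)$ on an event of probability at least $1 - V_{2,n}$, but Theorem~\ref{Lemma: Ehbigmodels} (via Condition~\ref{cond: lowerboundepsilon}) compensates with the uniform bound $\mathbb{E}[h_\epsilon(\bB_{\boldsymbol\cdot\,M})] \le \exp(-\epsilon(n-|M|)/36 + q|M|/2)$ plus a Gaussian-tail term, on an event of probability at least $1 - V_{4,n}$. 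On the intersection of these events, and after absorbing the subexponential Stirling factor and the combinatorial union-bound factor into the exponent, the Case~2 contribution is bounded by
\begin{equation*}
\exp\Bigl(-\tfrac{\epsilon(n - n^\alpha)}{36} + O\bigl(n^\alpha \log p\bigr) + O\bigl(n^\alpha \log n\bigr)\Bigr).
\end{equation*}
Condition~\ref{cond: supelementsB}, namely $\log p = o(n^{1-\alpha})$, makes the linear-in-$n$ negative term dominate, so this contribution also vanishes in probability.

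The main obstacle is the delicate balance in the second class: a per-model determinant blowup of $\exp(q |M| \log p)$, combined with the $\exp(n^\alpha \log p)$ candidate models in the union bound, is offset only by the $-\epsilon n / 36$ decay of $\mathbb{E}[h_\epsilon(\bB_{\boldsymbol\cdot\,M})]$, and everything has to be assembled on a single high-probability event. The final bookkeeping step is to verify that $V_{1,n} + V_{2,n} + V_{3,n} + V_{4,n} \to 0$; each $V_{j,n}$ carries a dominant negative-linear-in-$n$ exponent by construction, and Condition~\ref{cond: supelementsB} absorbs the various $\log p$ and $\log n$ remainders, after which the convergence $S_n \overset{\mathbb{P}_y}{\to} 0$, and hence the statement of the theorem, follows.
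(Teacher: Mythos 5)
Your overall route is the paper's route: reduce to $S_n \to 0$, lower-bound $\mathbb{E}[h_\epsilon(\bB_{\boldsymbol\cdot\,M_{\text{o}}})]$ by a constant via Theorem~\ref{Lemma: Ehtruemodel}, split into $M \subsetneq M_{\text{o}}$ and $M \not\subseteq M_{\text{o}}$, control the determinant ratios by Theorem~\ref{Lemma: RatioofRSS} and the $h$-expectations by Theorem~\ref{Lemma: Ehbigmodels}, and close with union bounds and Condition~\ref{cond: supelementsB}. Your Case~2 accounting is sound: there the $-\epsilon(n-\abs{M})/36$ decay genuinely swallows every sub-linear-in-$n$ positive term.

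The gap is in the bookkeeping of the normalizing constants, and it bites in Case~1. Theorem~\ref{Lemma: RatioofRSS} bounds $\bigl(\det\widehat{\bSigma}_{(M_{\text{o}})}/\det\widehat{\bSigma}_{(M)}\bigr)^{(n-\abs{M}-q)/2}$, i.e.\ the two determinants raised to a \emph{common} exponent, whereas your factor (ii) carries the exponents $(n-\abs{M_{\text{o}}}-q)/2$ and $(n-\abs{M}-q)/2$. The mismatch leaves a residual factor $\bigl(\det\widehat{\bSigma}_{(M_{\text{o}})}\bigr)^{(\abs{M}-\abs{M_{\text{o}}})/2}$, of order $(cn)^{q(\abs{M}-\abs{M_{\text{o}}})/2}$, which your proposal never accounts for; the paper controls it with a separate Wishart/chi-square tail bound on $\det\widehat{\bSigma}_{(M_{\text{o}})}$ (their event $V_{5,n}$) and cancels it \emph{exactly} against the ratio of multivariate gamma functions, leaving only a bounded constant to the power $\abs{\abs{M}-\abs{M_{\text{o}}}}/2$ per model. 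Your substitute --- treating the gamma ratio as a blanket $\exp\{O(n^\alpha\log n)\}$ Stirling factor --- breaks the Case~1 conclusion: the only negative exponent available there is $q\,n^\alpha\log\abs{M_{\text{o}}}$, and the claim that $\exp\{-q n^\alpha\log\abs{M_{\text{o}}} + n^\alpha\log 2 + O(n^\alpha\log n)\}\to 0$ is false whenever $\log\abs{M_{\text{o}}} = o(\log n)$ (e.g.\ fixed $\abs{M_{\text{o}}}$, the canonical sparse regime). Even replacing the blanket bound by the sharp gamma-ratio size $(n/2)^{q(\abs{M_{\text{o}}}-\abs{M})/2}$, without the cancellation the comparison $q n^\alpha\log\abs{M_{\text{o}}}$ versus $\tfrac{q}{2}\abs{M_{\text{o}}}\log n$ fails for $\abs{M_{\text{o}}}\asymp n^\alpha$ when $\alpha\le 1/2$. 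So you need to (a) isolate the residual determinant factor explicitly, (b) add the high-probability two-sided control of $\det\widehat{\bSigma}_{(M_{\text{o}})}$, and (c) pair it with the gamma ratio before summing; with that repair (and the corresponding split of the $M\not\subseteq M_{\text{o}}$ class by the sign of $\abs{M}-\abs{M_{\text{o}}}$), the rest of your argument goes through as the paper's does.
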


The proof of Theorem \ref{Lemma: Mainresult} and the proofs of all other results are organized in the Supplementary Material.  Note that Theorem \ref{Lemma: Mainresult} is the only non-asymptotic result in our theoretical developments, and so as long as the conditions are satisfied, it is expected that it is reasonably illustrative of the performance of our constructed EAS procedure on observed data.  We provide evidence to substantiate this claim in finite sample numerical studies, presented next in Section~\ref{sec: EASsimstudy}.

\section{Numerical Results}\label{sec: EASsimstudy}

In this section we demonstrate the performance of our EAS method in comparison to the state-of-the-art variable selection procedures for MLR. Very recently \cite{bai2018high} developed the MBSP method that is equipped to perform variable selection for MLR. They demonstrate a distinctly superior performance of MBSP over all the existing methods, especially in a high-dimensional setting.  To make standard the comparison between the MBSP and EAS approaches, we mimic the exact same synthetic data simulation study design constructed in \cite{bai2018high}. 

The simulation design can be broadly categorized into three parts, { \em low dimensional (LD) } ($n > p$), {\em high-dimensional (HD)} $(p > n)$ and {\em ultra high-dimensional (UHD) } ($p \gg n$).  Two sub-categories are considered within each of these categories,  to analyze performance for varying sizes of the true model, $\abs{M_{\text{o}}}$, versus the total number of predictors, $p$. Within each category the dimension of the multivariate response, $q$, is also varied to study the effect of $q$ on the model selection performance. In total, there are six experiments, summarized in Table~\ref{tab: simulationdesign}.

For each of the first six simulation designs, we generate synthetic data by the following mechanism: The $n$ columns of the design matrix $\bX$ are sampled from a multivariate normal distribution with mean zero and covariance matrix $\bGamma$, that has an AR$(1)$ structure with correlation coefficient $0.5$ (i.e., $\Gamma_{ij} = 0.5^{\abs{i-j}}$, for $i,j \in \{1, \dots, p\}$). The true model $M_{\text{o}}$ is constructed by randomly selecting $\abs{M_{\text{o}}}$ elements from $\{1, \dots, p\}$. Once the true model $M_{\text{o}}$ is constructed, each component of the $q \times \abs{M_{\text{o}}}$ true regression coefficient matrix $\bB_{\boldsymbol\cdot\,M_{\text{o}}}^{0}$ is set as a value generated from the random variable $U + \mathrm{I}(U > -0.5)$, with $U \sim \textrm{Uniform}(-5,4)$, so that the values always lie within $[-5,-0.5] \cup [0.5,5]$. The response vectors $\bY_1,\dots,\bY_n$ are independently generated from a multivariate normal distribution with mean $\bB_{\boldsymbol\cdot\,M_{\text{o}}}^{0}\bX_{M_{\text{o}}\,\boldsymbol\cdot}$ and covariance $\bV_{(M_{\text{o}})}^{0}$, where $\bV_{(M_{\text{o}})}^{0}$ also has an AR$(1)$ structure with $\bV_{(M_{\text{o}}), ij}^0 = \sigma^20.5^{\abs{i-j}}$ for $i,j \in \{1, \dots, q\}$ and $\sigma^2 = 2$.

\begin{table}[H]
\centering
\begin{tabular}{l l l l l l}
                       Dimension         & Sparsity        & $n$  & $p$  & $q$ & $\abs{M_{\text{o}}}$ \\ \hline \hline
\multirow{2}{*}{LD $(n > p)$ }      & Sparse         & 60     & 30    & 3    & 5           \\ 
                                                    & Dense           & 80    & 60    & 6    & 40          \\ \hline
\multirow{2}{*}{HD $(p > n)$}      & Sparse          & 50    & 200  & 5    & 20          \\ 
                                                    & Dense           & 60    & 100  & 6    & 40          \\ \hline
\multirow{2}{*}{UHD $(p \gg n)$} & Ultra-sparse  & 100  & 500  & 3    & 10          \\ 
                                                    & Sparse           & 150 & 1000 & 4    & 50          \\ \hline \hline
\end{tabular}
\caption{\footnotesize $n$ is the sample size, $p$ is the number of predictors, $q$ is the dimension of multivariate response vectors, and $\abs{M_{\text{o}}}$ is the size of the true model.}
\label{tab: simulationdesign}
\end{table}


\textcolor{black}{In order to further investigate the performance of the EAS method under challenging scenarios, we conduct additional experiments beyond those presented in \cite{bai2018high}. Specifically, we consider three additional experimental settings, described in Table \ref{tab: additionalsimulationdesign}. In the first experiment, we increase the dimension of the response variable to $q=60$, which is ten times larger than the dimension considered in Table~\ref{tab: simulationdesign}.  In the second experiment, we increase the difficulty of the design matrix by introducing a non-decaying structure for $\bGamma$, while keeping the error covariance $\bV^{0}_{(M_{\text{o}})}$ fixed at AR(1). In the third experiment, we introduce a dense structure for the error covariance matrix.}

\begin{table}[H]
\textcolor{black}{
\centering
\resizebox{\textwidth}{!}{%
\begin{tabular}{ccccc}
\multirow{2}{*}{Dimension} & \multirow{2}{*}{Set up} & \multirow{2}{*}{q} & \multirow{2}{*}{$\bGamma_{ij}$} & \multirow{2}{*}{$\bV_{(M_{\text{o}}), ij}^0$} \\
 &  &  &  &  \\ \hline \hline 
\multicolumn{1}{c|}{\multirow{7}{*}{\begin{tabular}[c]{@{}c@{}}$n = 150$,\\ \\ $p = 1000$, \\ \\ $M_{\text{o}} = 50$\end{tabular}}} & Large $q$ & 60 & $0.5^{\abs{i-j}}$ & $2 \times 0.5^{\abs{i-j}}$ \\ \cline{2-5} 
\multicolumn{1}{c|}{} & \multirow{3}{*}{\begin{tabular}[c]{@{}c@{}}Large $q$ and\\ Non-decaying correlation of \\ design matrix\end{tabular}} & \multirow{3}{*}{60} & \multirow{3}{*}{$0.5\{1 + \mathbb{I}(i = j)\}$} & \multirow{3}{*}{$2 \times 0.5^{\abs{i-j}}$} \\
\multicolumn{1}{c|}{} &  &  &  &  \\
\multicolumn{1}{c|}{} &  &  &  &  \\ \cline{2-5} 
\multicolumn{1}{c|}{} & \multirow{3}{*}{\begin{tabular}[c]{@{}c@{}}Large $q$, non-decaying correlation \\ of design matrix, and dense  \\ error covariance matrix\end{tabular}} & \multirow{3}{*}{60} & \multirow{3}{*}{$0.5\{1 + \mathbb{I}(i = j)\}$} & \multirow{3}{*}{$1 + \mathbb{I}(i = j)$} \\
\multicolumn{1}{c|}{} &  &  &  &  \\
\multicolumn{1}{c|}{} &  &  &  &  \\ \hline
\end{tabular}%
}
\caption{\footnotesize The notation is defined as follows: $n$ is the sample size, $p$ is the number of predictors, $q$ is the dimension of multivariate response vectors, $\abs{M_{\text{o}}}$ is the size of the true model, and $\bGamma_{ij}$ and $\bV_{(M_{\text{o}}), ij}^0$ are $(i,j)$th element of $\bGamma$ (i.e., the variance of the columns of the design matrix of $\bX$) and $\bV_{(M_{\text{o}})}^0$ (i.e., the error covariance matrix), respectively.}
\label{tab: additionalsimulationdesign}
}
\end{table}

The EAS procedure is implemented by computing Algorithm~\ref{algo: MCMC} described in Section~\ref{sec: EAScomputation} to draw MCMC samples from the space of all $2^p$ candidate models. Observe that Algorithm~\ref{algo: MCMC} is developed to work for a fixed $\epsilon$.  We propose two methods for selecting the tuning parameter $\epsilon$; (i) 10-fold CV and (ii) via Bayesian IC (BIC) by searching over a pre-specified grid of $\epsilon$ values. For both the CV and BIC routines, we take a uniform grid of 24 possible values for $\epsilon$, from 0.05 to 10 in all six experiments.  In the CV procedure, for each of the 10 folds we implement our EAS method on the training set by running the MCMC $500$ steps, discarding the first $200$ steps, and evaluating the performance on the validation set, as follows. The initial estimates from the multivariate LASSO (\verb|MLASSO|) \citep{glmnet} serve as the weights for proposing/removing predictors in the MCMC algorithm.  The expectation of $h_\epsilon(\bB_{\boldsymbol\cdot\,M})$ for any model $M$ is approximated by evaluating the $h$ at the least square estimator $\widehat{\bB}_{\boldsymbol\cdot\,M}$, which makes the computation very fast compared to the previous version of EAS procedures. The MAP estimated model from the MCMC sample is taken as a point estimator to compute the mean squared prediction error (MSPE) on the validation set.  The optimal $\epsilon$ is chosen as the one that minimizes the average of the MSPE over the 10 folds.  Finally, we re-run Algorithm~\ref{algo: MCMC} on the entire dataset using the optimal selected $\epsilon$ for 10,000 MCMC steps and discard the first 5,000.

For the BIC procedure, the computational cost is much less. In this case, for every $\epsilon$ in the grid, we run Algorithm~\ref{algo: MCMC} for 5,000 steps, discard the initial 2,000 in obtaining the MAP estimated model, and compute the BIC for the MAP model.  The $\epsilon$ corresponding to the minimum BIC value is selected as optimal. The advantage of using BIC is that we do not need to run the algorithm again for the optimally chosen $\epsilon$, we can simply use the MCMC chain from the initial runs as our estimated sample for the chosen $\epsilon$. 
 
We compare the performance of our EAS method with (1) the \verb|MBSP| approach as implemented in the R package MBSP \citep{mbsp}; (2) the multivariate group lasso with spike and slab prior (\verb|MBGL-SS|) method as implemented in the R package MBSGS \citep{mbsgs} with the natural grouping (i.e., each predictor represents one group); (3) the sparse reduced rank regression (\verb|SRRR|) method as implemented in the R package rrpack \citep{rrpack}, with pre-specified rank $q$ and adaptive group LASSO penalty; (4) the sparse partial least squares (\verb|SPLS|) approach as implemented in the R package spls \citep{spls}, with the thresholding parameter $\eta$ selected by CV, and the number of hidden components is set as $q$; (5) the multivariate sparse group LASSO (\verb|MSGLASSO|) method as implemented via R package MSGLasso \citep{msglasso} with each predictor representing its own group; (6) the \verb|MLASSO| method as implemented via the glmnet package \citep{glmnet} that penalizes the norm of each of the columns of the coefficient matrix; 
and \textcolor{black}{(7) Multivariate square-root grouped LASSO} (\verb|MSRL|) \textcolor{black}{with each predictor representing its own group \cite{Molstad2022New}, with tuning parameter selected via 5-fold CV and the range of candidate tuning parameters, $\delta$, set at $0.1$.}   

For the frequentist procedures, the estimated model is defined to be the non-zero columns of the estimated coefficient matrix.  For \verb|MBGL-SS| the median thresholding estimator is used, and for \verb|MBSP| the coefficients selected in the estimated model are those for which the $95\%$ credible interval does not contain $0$. For our EAS method we take the least square estimator of the MAP model as the point estimator for $\bB_{\boldsymbol\cdot\,M_{\text{o}}}^{0}$. 

\textcolor{black}{The metrics we use to evaluate the performance of the various methods, over 1,000 synthetic data sets for each of the six experiments in Table~\ref{tab: simulationdesign} and three experiments in Table~\ref{tab: additionalsimulationdesign}, are the following.  We report median MSPE on an out-of-sample test set, $\bY_{new}$, that is of the same size as $\bY$.  We also report the average false discovery rate (FDR), the average false negative rate (FNR), average mis-classification probability (MP), average proportion of correct model selection (PCM), and median computation time. The results of the experiments in Table~\ref{tab: simulationdesign} are displayed in Table~\ref{tab: Simresults}, and the results of the experiments in Table~\ref{tab: additionalsimulationdesign} are displayed in Table~\ref{tab: Simresults_largeq}}. The results are a bit less noisy than those reported in the simulation study in \cite{bai2018high} because they only generated $100$ synthetic datasets, and they did {\em not} report the out-of-sample prediction performance.

The explicit formulas for computing the metrics are, $\textrm{MSE} := \lVert \bY- \widehat{\bY}\rVert_\textrm{F}^2/nq$, $\textrm{MSPE} := \lVert \bY_{new}- \widehat{\bY}_{new}\rVert _\textrm{F}^2/nq$,  $\textrm{FDR} := \textrm{FP}/(\textrm{FP} + \textrm{TP})$, $\textrm{FNR} := \textrm{FN}/(\textrm{FN} + \textrm{TN})$, and $\textrm{MP} := (\textrm{FP} + \textrm{FN})/pq$, where TP, FP, TN, and FN are, respectively, the number of true positives, false positives, true negatives, and false negatives.  Moreover, we also present the average estimated posterior probability of the true model, denoted $\mathbb{P}(M_{\text{o}} \mid \bY)$, for the Bayesian procedure and average fiducial probability, $r_\epsilon(M_{\text{o}} \mid \bY)$, for the EAS procedure. Note that neither $r_\epsilon(M_{\text{o}} \mid \bY)$ nor $\mathbb{P}(M_{\text{o}} \mid \bY)$ can be calculated for the frequentist methods, or the MAP or credible region based Bayesian methods, like \verb|MBSP|. 

From Table \ref{tab: Simresults}, in the generic $n > p$ case for both sparsity levels of the true model, irrespective of whether $\epsilon$ is chosen based on BIC or CV, our EAS method performs on par with all other methods in terms of predictive performance, except SPLS which tends to exhibit inferior level of accuracy. In terms of variable selection performance, our EAS method chooses the correct model with a high probability and very low FDR and FNR, similar to the Bayesian methods \verb|MBGL-SS|, \verb|MBSP| and frequentist method \verb|SRRR|. Other frequentist procedures tend to exhibit a lot of false positives.  our EAS method does an excellent job in assigning a very high (GF) probability to the true model.  Moreover, it is an advantage of our EAS method, and \verb|MBGL-SS|, that they provide a probabilistic assessment of the competing models $M$ so that inference can be made on how much better, say the MAP estimated model is from the second best model and so on.  For instance, if there are many models that are assigned similar probabilities, then the practitioner is warned not to over-interpret inference based on a single model.  \textcolor{black}{This situation would possibly happen if there is sufficient collinearity (as defined by $\epsilon$) among the important predictors.  In that case, it is not reasonable to think that there is a unique choice of correct model (for fixed sample size) or that the practitioner should choose a single model.  For example, if two covariates are perfectly correlated, then the notion of {\em best subset} of the two covariates is not meaningful.  Statistical inference guides data driven decisions, but such inference should also have the capability to suggest when there is not enough information in the data to make a decision; this embodies the EAS approach to model selection, and it also enables the EAS method to meaningfully be applied even in the absence of an underlying sparse data generating structure.}

Continuing on Table \ref{tab: Simresults}, in the $p > n$ and $p \gg n$ scenarios, our EAS method does fulfill the expectations consistent with the strong model selection consistency. This is demonstrated by the fact that the EAS procedure, either using the BIC or CV tuning selection procedures, outperforms \verb|MBSP| in terms of both the prediction performance and the average proportion of correct model selections. It also assigns a very high GF probability to the true model, which, again is consistent with our strong model selection consistency theoretical result. 

\begin{table}[H]
\centering
\scriptsize
\begin{tabular}{lrrrrrrrrrrrrr}
\hline
\hline
 \textbf{Method} &  \textbf{MSPE} & & \textbf{FDR} & & \textbf{FNR} &&  \textbf{MP} & & $\mathbb{P}(M_{\text{o}} \mid \bY)$ & & \textbf{PCM} && \begin{tabular}[c]{@{}l@{}} \textcolor{black}{\textbf{time}} \\  \textcolor{black}{(in sec)} \end{tabular} \\
 \hline 
 \hline
 \multicolumn{14}{c}{LD $(n > p)$, sparse : $n = 60, p = 30, q = 3, \abs{M_{\text{o}}} = 5$} \\
 \hline
EAS-BIC & 2.17 & & 0.0096 & & 0 & & 0.0007 & & 0.952 & & 0.952 && \textcolor{black}{143} \\
EAS-CV & 2.17 & & 0.0148 & & 0 & & 0.0014 & & 0.951 & & 0.951 && \textcolor{black}{714} \\
MBGL-SS & 2.17 & & 0.0042 & & 0 & & 0.0003 & & 0.903 & & 0.976 && \textcolor{black}{92}\\
MBSP & 2.28 & & 0.0194 & & 0 & & 0.0013 & & N/A & & 0.891 && \textcolor{black}{20}\\
MLASSO & 2.47 & & 0.6683 & & 0 & & 0.1245 & & N/A & & 0 && \textcolor{black}{0.6}\\
MSGLASSO & 2.65 & & 0.4393 & & 0 & & 0.0553 & & N/A & & 0.032 && \textcolor{black}{0.6}\\
SPLS & 6.66 & & 0.1419 & & 0.023 & & 0.0178 & & N/A & & 0.175 && \textcolor{black}{2.9}\\
SRRR & 2.17 & & 0.0113 & & 0 & & 0.0008 & & N/A & & 0.939 && \textcolor{black}{0.1}\\
\textcolor{black}{MSRL} & \textcolor{black}{2.49} & & \textcolor{black}{0.6603} & & \textcolor{black}{0} & & \textcolor{black}{0.1197} & & \textcolor{black}{N/A} & & \textcolor{black}{0} && \textcolor{black}{21} \\
\hline
\multicolumn{14}{c}{LD $(n > p)$, dense : $n = 80, p = 60, q = 6, \abs{M_{\text{o}}} = 40$} \\
\hline
EAS-BIC & 4.14 & & 0.0129 & & 0 & & 0.0021 & & 0.9309 & & 0.932 && \textcolor{black}{343}\\
EAS-CV & 4.12 & & 0 & & 0.0004 & & 0 & & 0.997 & & 0.997 &&  \textcolor{black}{1944}\\
MBGL-SS & 4.10 & & 0.0007 & & 0 & & 0.0001 & & 0.9383 & & 0.972 && \textcolor{black}{358}\\
MBSP & 4.25 & & 0.0028 & & 0 & & 0.0003 & & N/A & & 0.897 && \textcolor{black}{30}\\
MLASSO & 5.94 & & 0.331 & & 0 & & 0.055 & & N/A & & 0 && \textcolor{black}{1.1}\\
MSGLASSO & 6.23 & & 0.3118 & & 0 & & 0.0505 & & N/A & & 0 && \textcolor{black}{3.0}\\
SPLS & 145.08 & & 0.2738 & & 0.2291 & & 0.0467 & & N/A & & 0 && \textcolor{black}{7}\\
SRRR & 4.14 & & 0.0033 & & 0 & & 0.0004 & & N/A & & 0.886 && \textcolor{black}{0.4}\\
\textcolor{black}{MSRL} & \textcolor{black}{5.69} & & \textcolor{black}{0.3282} & & \textcolor{black}{0} & & \textcolor{black}{0.0543} & & \textcolor{black}{N/A} & & \textcolor{black}{0} && \textcolor{black}{32}\\
\hline
\multicolumn{14}{c}{HD $(p > n)$, sparse : $n = 50, p = 200, q = 5, \abs{M_{\text{o}}} = 20$} \\
\hline
EAS-BIC & 3.40 & & 0.0351 & & 0.0002 & & 0.0015 & & 0.8798 & & 0.882 && \textcolor{black}{368}\\
EAS-CV & 3.36 & & 0.0075 & & 0.0005 & & 0.0004 & & 0.9509 & & 0.951&&  \textcolor{black}{1811}\\
MBGL-SS & 52.29 & & 0.5503 & & 0.0076 & & 0.0561 & & 0.2029 & & 0.232 && \textcolor{black}{3158}\\
MBSP & 4.43 & & 0.0124 & & 0 & & 0.0003 & & N/A & & 0.778 && \textcolor{black}{95}\\
MLASSO & 15.85 & & 0.7818 & & 0.0001 & & 0.0721 & & N/A  & & 0 && \textcolor{black}{1.1}\\
MSGLASSO & 21.10 & & 0.737 & & 0.0049 & & 0.0568 & & N/A & & 0 && \textcolor{black}{7.3}\\
SPLS & 112.05  & & 0.5214 & & 0.0401 & & 0.0307 & & N/A & & 0 && \textcolor{black}{9.2}\\
SRRR & 15.54 & & 0.7448 & & 0.0016 & & 0.0582  & & N/A  & & 0 && \textcolor{black}{7.1}\\
\textcolor{black}{MSRL} & \textcolor{black}{16.69} & & \textcolor{black}{0.7885} & & \textcolor{black}{0.0002} & & \textcolor{black}{0.0758} & & \textcolor{black}{N/A} & & \textcolor{black}{0} && \textcolor{black}{65}\\
\hline
 \multicolumn{14}{c}{HD $(p > n)$, dense : $n = 60, p = 100, q = 6, \abs{M_{\text{o}}} = 40$} \\
 \hline
EAS-BIC & 6.91 & & 0.0589 & & 0.0135 & & 0.0065 & & 0.6673 & & 0.669 && \textcolor{black}{441}\\
EAS-CV & 6.45 & & 0.0203 & & 0.0089 & & 0.0026  & & 0.8599 & & 0.858 && \textcolor{black}{2252}\\
MBGL-SS & 6.31 & & 0.0026 & & 0.0002 & & 0.0002 & & 0.7697 & & 0.936 && \textcolor{black}{861}\\
MBSP & 9.78 & & 0.0297 & & 0.0002 & & 0.0021 & & N/A & & 0.339 && \textcolor{black}{56}\\
MLASSO & 34.86 & & 0.5281 & & 0.0006 & & 0.0748 & & N/A & & 0  && \textcolor{black}{1.3}\\
MSGLASSO & 32.95 & & 0.5265  & & 0.001 & & 0.0748 & & N/A & & 0 && \textcolor{black}{6.3}\\
SPLS & 197.30 & & 0.4721 & & 0.1437 & & 0.0659 & & N/A & & 0  && \textcolor{black}{8.7}\\
SRRR & 22.03 & & 0.474 & & 0.0031 & & 0.0603 & & N/A & & 0 && \textcolor{black}{2.7}\\
\textcolor{black}{MSRL} & \textcolor{black}{34.48} & & \textcolor{black}{0.5511} & & \textcolor{black}{0.001} & & \textcolor{black}{0.0821} & & \textcolor{black}{N/A} & & \textcolor{black}{0} && \textcolor{black}{40}\\
\hline
 \multicolumn{14}{c}{UHD $(p \gg n)$, ultra-sparse : $n = 100, p = 500, q = 3, \abs{M_{\text{o}}} = 10$} \\
 \hline
EAS-BIC & 2.23 & & 0.0143 & & 0 & & 0.0004 & & 0.9609 & & 0.9609 && \textcolor{black}{241}\\
EAS-CV & 3.98 & & 0.6135 & & 0 & & 0.0222 & & 0.2116 & & 0.2116 && \textcolor{black}{1645}\\
MBGL-SS & 2.23 & & 0.0032 & & 0 & & 0.0001 & & 0.7691 & & 0.9729 && \textcolor{black}{18901}\\
MBSP & 2.86 & & 0.0666  & & 0 & & 0.0005 & & N/A & & 0.5366 && \textcolor{black}{491}\\
MLASSO & 3.09 & & 0.841 & & 0 & & 0.0396 & & N/A & & 0 && \textcolor{black}{1.3}\\
MSGLASSO & 16.71 & & 0.7683 & & 0.0013 & & 0.0228 & & N/A & & 0 && \textcolor{black}{7.7}\\
SPLS & 27.54 & & 0.2405 & & 0.0055 & & 0.0039 & & N/A & & 0.001 && \textcolor{black}{14}\\
SRRR & 5.91 & & 0.9363 & & 0 & & 0.0981 & & N/A & & 0 && \textcolor{black}{20}\\
\textcolor{black}{MSRL} & \textcolor{black}{3.10} & & \textcolor{black}{0.7890} & & \textcolor{black}{0} & & \textcolor{black}{0.0277} & & \textcolor{black}{N/A} & & \textcolor{black}{0} && \textcolor{black}{163}\\
\hline
 \multicolumn{14}{c}{UHD $(p \gg n)$, sparse : $n = 150, p = 1000, q = 4, \abs{M_{\text{o}}} = 50$} \\
 \hline
EAS-BIC & 3.00 & & 0 & & 0 & & 0 & & 0.9944 & & 0.995 && \textcolor{black}{1144}\\
EAS-CV & 3.00 & & 0.0013 & & 0 & & 0.0001 & & 0.9913 & & 0.991 && \textcolor{black}{7095}\\
MBGL-SS & 354.19 & & 0.932 & & 0.0175 & & 0.1532 & & 0 & & 0 && \textcolor{black}{116522}\\
MBSP & 3.46 & & 0.0026 & & 0 & & 0 & & N/A & & 0.871 && \textcolor{black}{2316}\\
MLASSO & 19.00 & & 0.8143 & & 0 & & 0.055 & & N/A & & 0 && \textcolor{black}{2.3}\\
MSGLASSO & 90.39 & & 0.7995 & & 0.0081 & & 0.0455 & & N/A & & 0 && \textcolor{black}{74}\\
SPLS & 306.63 & & 0.5796 & & 0.0221 & & 0.0193 & & N/A & & 0 && \textcolor{black}{91}\\
SRRR & 34.99 & & 0.8052 & & 0.0006 & & 0.0514 & & N/A & & 0 && \textcolor{black}{88}\\
\textcolor{black}{MSRL} & \textcolor{black}{17.82} & & \textcolor{black}{0.8323} & & \textcolor{black}{0} & & \textcolor{black}{0.0643} & & \textcolor{black}{N/A} & & \textcolor{black}{0} && \textcolor{black}{346}\\
\hline
\end{tabular}\caption{\footnotesize Performance of EAS compared other methods over 1,000 replications.}
\label{tab: Simresults}
\vspace{-.1in}
\end{table}

The selection performance of the frequentist procedures seems to degrade rapidly when one moves from small to large $p$. They tend to select a lot of false signals in the estimated model. An interesting remark in support of \verb|MBGL-SS| is that it seems to perform on par with the EAS procedure and outperform \verb|MBSP| in some of the high-dimensional designs considered, arguably when either $p$ is not so much larger than $n$ or when the true model is ultra-sparse. This contradicts the numerical results presented in \cite{bai2018high} where \verb|MBGL-SS| is shown to perform poorly in all of the high-dimensional scenarios. In the $p \gg n$ and the ultra-sparse, when the optimal $\epsilon$ is chosen by CV, it seems to commit a lot of false discoveries. This might be attributed to the lack of identifiability of the true model in some of the folds, thus choosing an $\epsilon$ that it is relatively smaller than the optimal one. However, choosing $\epsilon$ through BIC seems to mitigate these computational bottlenecks observed with the CV procedure.

\begin{table}[H]
\textcolor{black}{
\centering
\scriptsize
\begin{tabular}{lrrrrrrrrrrrrr}
\hline
\hline
  \textcolor{black}{\textbf{Method}} &  \textcolor{black}{\textbf{MSPE}} & &  \textcolor{black}{\textbf{FDR}} & &  \textcolor{black}{\textbf{FNR}} &&   \textcolor{black}{\textbf{MP}} & &  \textcolor{black}{$\mathbb{P}(M_{\text{o}} \mid \bY)$} & &  \textcolor{black}{\textbf{PCM}} && \begin{tabular}[c]{@{}l@{}} \textcolor{black}{\textbf{time}} \\  \textcolor{black}{(in sec)} \end{tabular} \\
 \hline 
 \hline \\
 \multicolumn{14}{c}{\textcolor{black}{Large dimension of response: $q = 60$}} \\ 
\multicolumn{14}{c}{\textcolor{black}{AR(1) correlation for the column of design matrix $\bX$: $\Gamma_{ij} = 0.5^{\abs{i-j}}$}} \\
\multicolumn{14}{c}{\textcolor{black}{AR(1) error covariance matrix: $\bV_{(M_{\text{o}}), ij}^0 = 2 \times 0.5^{\abs{i-j}}$}} \\
 \hline
 \textcolor{black}{EAS-BIC} & \textcolor{black}{3.01} & & \textcolor{black}{0.000} & & \textcolor{black}{0.000} & & \textcolor{black}{0.000} & & \textcolor{black}{0.99} & & \textcolor{black}{1} & & \textcolor{black}{4680} \\
 \textcolor{black}{EAS-CV} & \textcolor{black}{3.01} & & \textcolor{black}{0.000} & & \textcolor{black}{0.000} & & \textcolor{black}{0.000} & & \textcolor{black}{0.99} & & \textcolor{black}{1} & & \textcolor{black}{22896} \\
 \textcolor{black}{MBSP} & \textcolor{black}{3.79} & & \textcolor{black}{0.054} & & \textcolor{black}{0.000} & & \textcolor{black}{0.000} & & \textcolor{black}{N/A} & & \textcolor{black}{0.09} && \textcolor{black}{4587}\\
 \textcolor{black}{MLASSO} & \textcolor{black}{6.52} & & \textcolor{black}{0.914} & & \textcolor{black}{0.000} & & \textcolor{black}{0.009} & & \textcolor{black}{N/A} & & \textcolor{black}{0.00} && \textcolor{black}{40}\\
 \textcolor{black}{MSGLASSO} & \textcolor{black}{8.37} & & \textcolor{black}{0.871} & & \textcolor{black}{0.001} & & \textcolor{black}{0.006} & & \textcolor{black}{N/A} & & \textcolor{black}{0.00} && \textcolor{black}{618}\\
 \textcolor{black}{SPLS} & \textcolor{black}{3.43} & & \textcolor{black}{0.167} & & \textcolor{black}{0.000} & & \textcolor{black}{0.0002} & & \textcolor{black}{N/A} & & \textcolor{black}{0} && \textcolor{black}{4280}\\
 \textcolor{black}{SRRR} & \textcolor{black}{32.76} & & \textcolor{black}{0.927} & & \textcolor{black}{0.000} & & \textcolor{black}{0.011} & & \textcolor{black}{N/A} & & \textcolor{black}{0} && \textcolor{black}{1922}\\
 \textcolor{black}{MSRL} & \textcolor{black}{6.91} & & \textcolor{black}{0.9334} & & \textcolor{black}{0} & & \textcolor{black}{0.0117} & & \textcolor{black}{N/A} & & \textcolor{black}{0} && \textcolor{black}{262}\\
 \hline \\
\multicolumn{14}{c}{\textcolor{black}{Large dimension of response: $q = 60$}} \\
\multicolumn{14}{c}{\textcolor{black}{Non-decaying correlation for the column of design matrix $\bX$: $\Gamma_{ij} = 0.5\{1 + \mathbb{I}(i = j)\}$}} \\
\multicolumn{14}{c}{\textcolor{black}{AR(1) error covariance matrix: $\bV_{(M_{\text{o}}), ij}^0 = 2 \times 0.5^{\abs{i-j}}$}} \\
 \hline
 \textcolor{black}{EAS-BIC} & \textcolor{black}{3.01} & & \textcolor{black}{0.000} & & \textcolor{black}{0.000} & & \textcolor{black}{0.000} & & \textcolor{black}{0.99} & & \textcolor{black}{1} && \textcolor{black}{3174}\\
 \textcolor{black}{EAS-CV} & \textcolor{black}{3.01} & & \textcolor{black}{0.000} & & \textcolor{black}{0.000} & & \textcolor{black}{0.000} & & \textcolor{black}{0.99} & & \textcolor{black}{1} && \textcolor{black}{17680}\\
  \textcolor{black}{MBSP} & \textcolor{black}{3.97} & & \textcolor{black}{0.053} & & \textcolor{black}{0.000} & & \textcolor{black}{0.000} & & \textcolor{black}{N/A} & & \textcolor{black}{0.08} && \textcolor{black}{4521}\\
 \textcolor{black}{MLASSO} & \textcolor{black}{7.98} & & \textcolor{black}{0.496} & & \textcolor{black}{0.000} & & \textcolor{black}{0.001} & & \textcolor{black}{N/A} & & \textcolor{black}{0.00} && \textcolor{black}{36}\\
 \textcolor{black}{MSGLASSO} & \textcolor{black}{6.56} & & \textcolor{black}{0.78} & & \textcolor{black}{0.000} & & \textcolor{black}{0.003} & & \textcolor{black}{N/A} & & \textcolor{black}{0.00} && \textcolor{black}{7298}\\
  \textcolor{black}{SPLS} & \textcolor{black}{5.14} & & \textcolor{black}{0.185} & & \textcolor{black}{0.000} & & \textcolor{black}{0.0002} & & \textcolor{black}{N/A} & & \textcolor{black}{0} && \textcolor{black}{4799}\\
 \textcolor{black}{SRRR} & \textcolor{black}{15.19} & & \textcolor{black}{0.931} & & \textcolor{black}{0.000} & & \textcolor{black}{0.011} & & \textcolor{black}{N/A} & & \textcolor{black}{0} && \textcolor{black}{2848}\\
  \textcolor{black}{MSRL} & \textcolor{black}{6.56} & & \textcolor{black}{0.9406} & & \textcolor{black}{0} & & \textcolor{black}{0.0132} & & \textcolor{black}{N/A} & & \textcolor{black}{0} && \textcolor{black}{7098}\\
 \hline \\
 \multicolumn{14}{c}{\textcolor{black}{Large dimension of response: $q = 60$}} \\
  \multicolumn{14}{c}{\textcolor{black}{Non-decaying correlation for the column of design matrix $\bX$: $\Gamma_{ij} = 0.5\{1 + \mathbb{I}(i = j)\}$}} \\
 \multicolumn{14}{c}{\textcolor{black}{Dense error covariance matrix:  $\bV_{(M_{\text{o}}), ij}^0 = 1 + \mathbb{I}(i = j)$}} \\
 \hline
 \textcolor{black}{EAS-BIC} & \textcolor{black}{3.00} & & \textcolor{black}{0.000} & & \textcolor{black}{0.000} & & \textcolor{black}{0.000} & & \textcolor{black}{0.99} & & \textcolor{black}{1} && \textcolor{black}{3214} \\
 \textcolor{black}{EAS-CV} & \textcolor{black}{3.11} & & \textcolor{black}{0.119} & & \textcolor{black}{0.000} & & \textcolor{black}{0.0002} & & \textcolor{black}{0.72} & & \textcolor{black}{0.72} && \textcolor{black}{17808}\\
  \textcolor{black}{MBSP} & \textcolor{black}{2.98} & & \textcolor{black}{0.052} & & \textcolor{black}{0.000} & & \textcolor{black}{0.000} & & \textcolor{black}{N/A} & & \textcolor{black}{0.12} && \textcolor{black}{4720}\\
 \textcolor{black}{MLASSO} & \textcolor{black}{7.27} & & \textcolor{black}{0.531} & & \textcolor{black}{0.000} & & \textcolor{black}{0.001} & & \textcolor{black}{N/A} & & \textcolor{black}{0.00} && \textcolor{black}{41}\\
 \textcolor{black}{MSGLASSO} & \textcolor{black}{5.47} & & \textcolor{black}{0.734} & & \textcolor{black}{0.000} & & \textcolor{black}{0.002} & & \textcolor{black}{N/A} & & \textcolor{black}{0.00} && \textcolor{black}{7184}\\
  \textcolor{black}{SPLS} & \textcolor{black}{4.57} & & \textcolor{black}{0.186} & & \textcolor{black}{0.0001} & & \textcolor{black}{0.0002} & & \textcolor{black}{N/A} & & \textcolor{black}{0} && \textcolor{black}{4800}\\
 \textcolor{black}{SRRR} & \textcolor{black}{10.34} & & \textcolor{black}{0.928} & & \textcolor{black}{0.000} & & \textcolor{black}{0.011} & & \textcolor{black}{N/A} & & \textcolor{black}{0} && \textcolor{black}{2876}\\
  \textcolor{black}{MSRL} & \textcolor{black}{4.51} & & \textcolor{black}{0.9374} & & \textcolor{black}{0} & & \textcolor{black}{0.0125} & & \textcolor{black}{N/A} & & \textcolor{black}{0} && \textcolor{black}{47963}\\
\hline
\hline
\end{tabular}\caption{\footnotesize Performance of EAS compared other methods for large dimension of the multivariate response, dense (non-decaying) correlation of the columns of the design matrix and error covariance matrix. Throughout this experiment, we consider UHD ($p \gg n$) and sparse case, i.e., the sample size, the number of predictors, and the size of the true model is fixed at $n = 150, p = 1000$, and $\abs{M_{\text{o}}} = 50$.}
\label{tab: Simresults_largeq}
}
\end{table}

\textcolor{black}{Table~\ref{tab: Simresults_largeq} compares performance of the methods with respect to challenges relating to larger dimension of the response, non-decaying correlation structure among the columns of the design matrix, and dense error covariance matrix.  We do not present the results of} \verb|MBGL-SS| \textcolor{black}{method because it takes more than 96 hours for a single dataset when $q$ is increased to $60$. Our results demonstrate that using either the BIC or CV tuning selection procedures, our EAS method outperforms all other methods in terms of both prediction performance and the average proportion of correct model selections. Furthermore, our EAS approach assigns a high GF probability to the true model, which aligns with our theoretical result regarding strong model selection consistency. In comparison to the frequentist approaches,} \verb|SPLS| \textcolor{black}{is the only method that performs relatively well, whereas all other procedures tend to have a high FDR.}

\textcolor{black}{Lastly, observe that when the error covariance matrix is dense and the design matrix is complex---in addition to a large number of predictors---our EAS approach tends to result in higher false discoveries when optimal $\epsilon$ is selected by CV. As mentioned before, this issue may be due to a lack of identifiability of the true model in certain folds, resulting in the selection of a suboptimal $\epsilon$ that is relatively smaller than the optimal one. Nevertheless, selecting the regularization parameter using BIC appears to mitigate these computational bottlenecks observed with the CV procedure. Overall, our findings demonstrate that our EAS method offers superior performance in these challenging scenarios, and may prove beneficial in various applied settings.}

\section{Yeast cell data analysis} \label{sec: EASrealdata}

\textcolor{black}{This section presents the results of implementing the EAS algorithm on the yeast cell cycle dataset \citep{lee2002transcriptional}. The first part of the data contains yeast cell cycle gene expression data consisting of 18 measurements of messenger ribonucleic acid (mRNA) levels which are taken every 7 minutes of 119 minutes covering two cell cycle periods, for 542 cell cycle-related genes. The second part of the data contains binding information for a total of 106 transcription factors (TFs). TFs are essential regulators of gene expression, and the binding of these proteins to specific DNA sequences controls the transcription of genes into mRNA, and ultimately the synthesis of functional proteins. The levels of mRNA, and therefore protein, produced by a gene are determined by the activity of the associated TFs \citep{phillips2008regulation,wang2015understanding}. Therefore, it is important to identify the key TFs that regulate cell cycles.  The particular dataset we analyze here is taken from the R package} spls \citep{spls}. \textcolor{black}{The response matrix $\bY$ of gene-expression data is $18 \times 542$ dimensional and the design matrix $\bX$ of TFs is $106 \times 542 $ dimensional, i.e., $n=542, p=106$, and $q=18$. The data has been previously analyzed for variable selection of TFs in \citep{bai2018high, chun2010sparse}}.

\textcolor{black}{As not all TFs are significantly contributing to the gene expressions; we aim to find a parsimonious set of cell-cycle regulating TFs by applying our EAS algorithm to this data. We use only the CV method to select the optimal tuning parameter $\epsilon$ for the real data application because we cannot rely on BIC due to the probable violation of the Gaussian assumption.  We use a uniform grid of 16 values ranging from $0.01$ to $0.2$ for selection of $\epsilon$ via CV. We do not extend the grid endpoint over $0.2$ because an implementation of Algorithm~\ref{algo: MCMC} did not select any TFs as admissible for epsilon greater than $0.2$. We run the EAS method on the training set with the MCMC algorithm for 500 steps, discarding the initial 200 steps, and evaluate its performance on the validation set. The multivariate LASSO estimates serve as weights for proposing or removing predictors in the MCMC algorithm.  We use the MAP estimated model from the MCMC sample as a point estimator to calculate the MSPE on the validation set. We select the optimal $\epsilon$ that minimizes the average MSPE across the 10 folds. The CV method selects optimal $\epsilon$ as $0.09$. Finally,  using $\epsilon=0.09$, we implement Algorithm~\ref{algo: MCMC} on the entire yeast cell cycle data ten times to account for random variation in the MCMC chains, and each chain is run for 10,000 MCMC steps, discarding the initial 5,000 steps. Thus, we arrive at ten MCMC chains, each containing 5,000 samples from the GF distribution of the $\epsilon$-admissible models for the yeast data.}

\begin{table}[H]
\centering
\scriptsize
\textcolor{black}{
\begin{tabular}{l|r|l|r|r|r|r}
Method & Model & Selected TFs & $\#$ TFs & \begin{tabular}[c]{@{}c@{}}GF probability\\ $r_{\epsilon}(M \mid \bY)$\end{tabular} & MSPE & MAPE \\ \hline \hline
 & 1     & \begin{tabular}[c]{@{}l@{}}ACE2 GAT3 HIR1 MBP1 \\ MCM1 NDD1 STE12 SWI5 SWI6\end{tabular}                                         & 9    & $48.9\%$ & 19.0 & 23.9  \\
\cmidrule(l){2-7}
\multirow{5}{*}{EAS-CV} & 2     & \begin{tabular}[c]{@{}l@{}}ACE2 GAT3 HIR1 MBP1 \\ MCM1 NDD1 STE12 SWI5 \\ SWI6 YAP5\end{tabular}                                 & 10   & $18.9\%$ & 19.0 & 23.9 \\ \cmidrule(l){2-7}
& 3    & \begin{tabular}[c]{@{}l@{}}ACE2 FKH2 GAT3 HIR1 \\ MBP1 MCM1 NDD1 STE12\\ SWI4 SWI5\end{tabular}                                  & 10   & $10\%$ & 19.0 & 23.6 \\ \midrule
MBSP &    & \begin{tabular}[c]{@{}l@{}}ACE2 FKH2 GAT3 HIR1 \\ HIR2 MBP1 MET4 NDD1 \\ REM1 STE12 SWI5 SWI6\end{tabular}                                         & 12    & NA & 18.6 & 23.5 \\  \midrule
MBGL-SS &    & \begin{tabular}[c]{@{}l@{}}GAT3  NDD1 SWI5 SWI6\end{tabular}                                         & 4    & NA & 20.1 & 23.6  \\ \hline
\end{tabular}}\cprotect\caption{\footnotesize \textcolor{black}{Selected TFs for models with more than $.1$ estimated GF probability based on the EAS,} \verb|MBSP| \textcolor{black}{and} \verb|MBGL-SS| \textcolor{black}{methods, respectively. The optimal tuning parameter is selected via CV.}}\label{tab: yeast_model_prob}
\end{table}

\textcolor{black}{The first five rows of Table~\ref{tab: yeast_model_prob} present the GF probabilities of the model, based on the ten MCMC chains. We truncate the table to display models with estimated GF probabilities greater than $.1$. The MAP estimated model identifies a rather parsimonious model, containing only 9 significant TFs. Interestingly, except for HIR1 and GAT3, all of the selected TFs in the MAP model are among the 21 experimentally confirmed cell-cycle-related TFs \citep{wang2007group}, supporting the relevance of our EAS approach to select a parsimonious model in real applications. Moreover, our research has identified HIR1 and GAT3 as two novel TFs. To evaluate the predictive performance of each of the models in Table~\ref{tab: yeast_model_prob}, we employ ten-fold CV. Specifically, we use $90\%$ of the data as the training set and obtain the least squares estimator. We then calculate the MSE and median absolute deviation (MAD) of the residuals on the remaining $10\%$ of the data that we held out. We repeat this process 1,000 times, each time using different training and test sets, and compute the average MSE as MSPE and average MAD as mean absolute prediction error (MAPE). Finally, we scale the MSPE and MAPE by a factor of 100 for better clarity. The number of TFs in the optimally selected model for most of the competing methods are presented in Table 3 of \cite{bai2018high}. We present the selected TFs in the optimal model for the Bayesian methods only}, i.e., \verb|MBSP| and \verb|MBGL-SS|, \textcolor{black}{in the last two rows of Table~\ref{tab: yeast_model_prob}. The last two columns of the table suggest that our EAS method does an outstanding job in terms of prediction accuracy in comparison to the competing methods while simultaneously maintaining parsimony in the selected model.}


\textcolor{black}{While models 2 and 3, each having 10 significant TFs, and the model selected by} \verb|MBSP| \textcolor{black}{ (with 12 significant TFs) exhibit similar or slightly better MSPE and MAPE compared to model 1, our EAS approach demonstrates remarkable performance by assigning a higher GF probability to the more parsimonious model 1. As determined by the EAS approach, the marginal improvement in prediction accuracy obtained with additional TFs is not substantial enough to justify the increase in model complexity.}


\begin{table}[H]
\centering
\textcolor{black}{
\begin{tabular}{lrr}
\begin{tabular}[c]{@{}l@{}}Transcription\\  factors\end{tabular} & & \begin{tabular}[c]{@{}l@{}}Marginal inclusion \\ probability (in $\%$) \end{tabular}  \\ \hline \hline
ACE2  & & 100 \\
HIR1 &  &100 \\
NDD1  &  & 100     \\
STE12 & & 100    \\
 SWI15 & &  100   \\
  GAT3 & &   99.9   \\ \midrule
  MBP1 & &   98.9  \\
  SWI16 &  &  95.0  \\
   MCM1& &   82.2  \\  \midrule
   FKH2 &   &  58.5 \\
  RME1 & &   53.5  \\
  HIR2 &  &   52.6 \\
   ARG81 & & 50.0    \\
   \hline \hline
 \end{tabular}\caption{\footnotesize Marginal inclusion probabilities of the TFs based on ten independent MCMC chains on the yeast cell cycle data. The optimal tuning parameter is selected via CV.}
\label{tab: yeast_TF_prob}
}
\end{table}

\textcolor{black}{The GF construction used enables parametric inference similar to Bayesian inference. In terms of quantifying uncertainty of the unknown parameters, the standard deviation of the estimated GF distribution of regression coefficient for each TFs, conditioned on a given model, plays a similar role to that of a standard error if the posterior mean is used as the point estimate. In this setting, we do not rely on p-values to determine statistical significance, but instead utilize the marginal inclusion probability of a specific TFs across all models in the GF distribution to establish its significance. The marginal inclusion probabilities of all TFs are presented in Table~\ref{tab: yeast_TF_prob}. We have shortened the table to only include TFs that have inclusion probabilities of at least $.5$. Remarkably, all the TFs with marginal inclusion probabilities exceeding $.8$ are the components of the MAP model listed in Table~\ref{tab: yeast_model_prob}.}

\textcolor{black}{In this exposition, the real data analysis of yeast cell cycle data is not intended as a comprehensive investigation. Rather, it serves as a proof of concept for the practical utility of the EAS methodology for analyzing real data. It should be noted that obtaining a probability distribution of all possible models that are $\epsilon$-admissible, as presented in Table~\ref{tab: yeast_model_prob}, along with the marginal inclusion probabilities of the TFs, as in Table~\ref{tab: yeast_TF_prob}, is not feasible using frequentist or Bayesian point estimation-based procedures, such as} \verb|MBSP|.  \textcolor{black}{The ability of our EAS method to provide a probabilistic assessment of competing models stands out as an attractive feature of our EAS method. Although MCMC-based approaches are computationally more expensive, they provide more comprehensive information for uncertainty quantification.
}

\section{Concluding remarks}
 
The theoretical results presented in this article assume that the dimension $q$ of response is fixed. However, we kept careful account of all instances of $q$ in all of the non-asymptotic results presented, leaving an indication for the reader to understand the influence of $q$ in the consistency rates.  An obvious extension of our work is to allow $q$ to grow, and in that case a careful account of the role of $q$ in the theory is critical to determine the circumstances in which the EAS method remains a consistent model selection procedure.  A field where this extended theory could be applied is functional data analysis (FDA), where the response is measured very densely for each subject and naturally the dimension of the response grows. Although smoothness in the mean and the covariance function is fundamental to the analysis of FDA, many FDA procedures simply rely on techniques that are developed for multivariate response data. Thus, we view this article as a promising first step on the pathway to a novel functional variable selection procedure.

\appendix

\section{Generalized fiducial distribution for multivariate linear regression} \label{sec: suppGFI}

Here, we will derive the GF distribution of $(\bB_{\cdot\,M}   \M{A}_{(M)})$ presented in (\ref{eqn: EASMassfunctionmodelM}). The data generating equation corresponding to $i$-th data, $\bY_i$ is,
$$
G_i := G(\bB_{\cdot\,M},\M{A}_{(M)}, \M{U}_i) = \bB_{\cdot\,M}\bX_{M\,\cdot,i} + \M{A}_{(M)}\M{U}_i \quad i=1,2,\dots,n,
$$
where, $\bX_{M\,\cdot,i}$ is the $i$th column of the matrix $\bX_{M\,\cdot}$. Then, 
\begin{align*}
    \frac{\partial G_i}{\partial \bB_{\cdot\,M}} &= 
    \begin{bmatrix}
    \frac{\partial \fpr{\bB_{\cdot\,M}\bX_{M\,\cdot,i}}}{\partial b_{11} }, & \dots, & \frac{\partial \fpr{\bB_{\cdot\,M}\bX_{M\,\cdot,i}}}{\partial b_{q\abs{M}} }
    \end{bmatrix} \\
    &= \begin{bmatrix}
    \M{J}^{11}\bX_{M\,\cdot,i} & \cdots & \M{J}^{1\abs{M}}\bX_{M\,\cdot,i} & \cdots & \M{J}^{q1}\bX_{M\,\cdot,i} & \cdots &\M{J}^{q\abs{M}}\bX_{M\,\cdot,i} 
    \end{bmatrix} \\
&= \M{I}_q \otimes \bX_{M\,\cdot,i}^\top  \in \Real^{q \times q\abs{M}},
\end{align*}
where $\M{J}^{kl} \in \Real^{q \times \abs{M}}$ is a sparse matrix of which $(k,l)$th element is $1$ and all others are zero. Similarly,
 $   \frac{\partial G_i}{\partial \M{A}_{(M)}} = \M{I}_q \otimes \M{U}_i^\top \in \Real^{q \times q^2}$. So the matrix of derivatives corresponding to the $i$-th data generating equation is, 
\begin{align*}
   \M{D}_{(M),i} =  \begin{bmatrix}
      \frac{\partial G_i}{\partial \bB_{\cdot\,M}} & \vdots &  \frac{\partial G_i}{\partial \M{A}_{(M)}} 
   \end{bmatrix} 
   = \begin{bmatrix}
   \M{I}_q \otimes \bX_{M\,\cdot,i}^\top & \vdots & \M{I}_q \otimes \M{U}_i^\top
    \end{bmatrix} \in \Real^{q \times \overline{q\abs{M}+q^2}}.
\end{align*}
The entire matrix of derivative corresponding to all observations is,
\begin{align*}
    \M{D}_{(M)} 
    &= \begin{bmatrix}
       \M{I}_q \otimes \M{X}_{M\,\cdot,1}^\top & \M{I}_q \otimes \M{U}_1^\top \\
       \vdots & \vdots \\
       \M{I}_q \otimes \M{X}_{M\,\cdot,n}^\top & \M{I}_q \otimes \M{U}_n^\top
    \end{bmatrix}
    \in \Real^{qn \times \overline{q\abs{M}+q^2}},
\end{align*}
which is almost surely of full column rank if $n > \abs{M} + q$. Now, define,
$ \M{P}_{(M)} 
    := \begin{bmatrix}
    \M{I}_q \otimes \bX_{M\,\cdot}^\top & \M{I}_q \otimes \M{U}^\top
    \end{bmatrix}$,
and evaluate the derivatives at  $\M{U} = \M{A}_{(M)}^{-1}(\M{Y} - \bB_{\cdot\,M}\bX_{M\,\cdot}) =  \M{A}_{(M)}^{-1}\widetilde{\M{U}} $ with $\widetilde{\M{U}} = 
\M{Y} - \bB_{\cdot\,M}\bX_{M\,\cdot} \in \Real^{q \times n} $.
Because $\M{P}_{(M)}$ is obtained rearranging rows of $\M{D}_{(M)}$, $\M{D}_{(M)}^\top \M{D}_{(M)} = \M{P}_{(M)}^\top \M{P}_{(M)}$ and 
\begin{align*}
    &\M{P}_{(M)}^\top \M{P}_{(M)} \\
    &= \begin{bmatrix}
    \M{I}_q \otimes \bX_{M\,\cdot}\bX_{M\,\cdot}^\top & & \M{I}_q \otimes \bX_{M\,\cdot}\M{U}^\top \\ \\
    \M{I}_q \otimes \M{U}\bX_{M\,\cdot}^\top & & \M{I}_q \otimes \M{U}\M{U}^\top
    \end{bmatrix} \\
    &= \begin{bmatrix}
    \M{I}_q \otimes \bX_{M\,\cdot}\bX_{M\,\cdot}^\top & &  \M{I}_q \otimes \fpr{\bX_{M\,\cdot}\widetilde{\M{U}}^\top{\M{A}_{(M)}^{-1}}^\top} \\
    \\
    \M{I}_q \otimes \fpr{\M{A}_{(M)}^{-1}\tilde{ \M{U}}\bX_{M\,\cdot}^\top} & &  \M{I}_q \otimes  \M{A}_{(M)}^{-1}\widetilde{\M{U}} \widetilde{\M{U}}^\top {\M{A}_{(M)}^{-1}}^\top 
    \end{bmatrix} \\ \\
    &=\begin{bmatrix}
    \M{I}_{q} \otimes \M{I}_{\abs{M}} & \M{0} \\
    \\
    \M{0} & \M{I}_q \otimes \M{A}_{(M)}^{-1}
    \end{bmatrix}
    \begin{bmatrix}
    \M{I}_q \otimes \bX_{M\,\cdot}\bX_{M\,\cdot}^\top & & \M{I}_q \otimes \bX_{M\,\cdot}\widetilde{\M{U}}^\top \\
    \\
    \M{I}_q \otimes \widetilde{\M{U}}\bX_{M\,\cdot}^\top & & \M{I}_q \otimes \widetilde{\M{U}}\widetilde{\M{U}}^\top  
    \end{bmatrix}
    \begin{bmatrix}
    \M{I}_{q} \otimes \M{I}_{\abs{M}} & \M{0} \\
    \\
    \M{0} & \M{I}_q \otimes {\M{A}_{(M)}^{-1}}^\top
    \end{bmatrix}.
\end{align*}
Then,
$$
\det \M{D}_{(M)}^\top\M{D}_{(M)} = \fpr{\det \M{A}_{(M)}\M{A}_{(M)}^\top}^{-q}\det \begin{bmatrix}
    \M{I}_q \otimes \bX_{M\,\cdot}\bX_{M\,\cdot}^\top & & \M{I}_q \otimes \bX_{M\,\cdot}\widetilde{\M{U}}^\top \\
    \\
    \M{I}_q \otimes \tilde{\M{U}}\bX_{M\,\cdot}^\top & & \M{I}_q \otimes \widetilde{\M{U}}\widetilde{\M{U}}^\top  
    \end{bmatrix}.
$$
After row and column operations,
\begin{align*}
    \det \begin{bmatrix}
    \M{I}_q \otimes \bX_{M\,\cdot}\bX_{M\,\cdot}^\top & & \M{I}_q \otimes \bX_{M\,\cdot}\widetilde{\M{U}}^\top \\
    \\
    \M{I}_q \otimes \widetilde{\M{U}}\bX_{M\,\cdot}^\top & & \M{I}_q \otimes \M{\widetilde{U}}\widetilde{\M{U}}^\top  
    \end{bmatrix} & =
    \det \left\{\M{I}_q \otimes 
   \begin{bmatrix}
   \bX_{M\,\cdot}\bX_{M\,\cdot}^\top & &  \bX_{M\,\cdot}\widetilde{\M{U}}^\top \\
    \\
    \widetilde{\M{U}}\bX_{M\,\cdot}^\top & &  \widetilde{\M{U}}\widetilde{\M{U}}^\top  
    \end{bmatrix} \right\} \\
    & =
    \fpr{\det \begin{bmatrix}
   \bX_{M\,\cdot}\bX_{M\,\cdot}^\top & &  \bX_{M\,\cdot}\widetilde{\M{U}}^\top \\
    \\
    \widetilde{\M{U}}\bX_{M\,\cdot}^\top & &  \widetilde{\M{U}}\widetilde{\M{U}}^\top  
    \end{bmatrix} }^q.
\end{align*}
Further, by property of determinant of block matrices,
\begin{align*}
    \det \begin{bmatrix}
   \bX_{M\,\cdot}\bX_{M\,\cdot}^\top & &  \bX_{M\,\cdot}\widetilde{\M{U}}^\top \\
    \\
    \widetilde{\M{U}}\bX_{M\,\cdot}^\top & &  \widetilde{\M{U}}\widetilde{\M{U}}^\top  
    \end{bmatrix} &= \fpr{\det \bX_{M\,\cdot}\bX_{M\,\cdot}^\top} \det \tpr{\widetilde{\M{U}}\widetilde{\M{U}}^\top - \widetilde{\M{U}}\bX_{M\,\cdot}^\top\fpr{\bX_{M\,\cdot}\bX_{M\,\cdot}^\top}^{-1}\bX_{M\,\cdot}\widetilde{\M{U}}^\top}. 
\end{align*}
Analogous to univariate linear regression,
\begin{align*}
    \M{Y} - \widehat{\bB}_{\cdot\,M}\bX_{M\,\cdot} = \M{Y} - \M{Y}\bX_{M\,\cdot}^\top \fpr{\bX_{M\,\cdot}\bX_{M\,\cdot}^\top}^{-1}\bX_{M\,\cdot} = \M{Y} - \M{Y}\bH_{(M)} = \M{Y}\fpr{\M{I}-\bH_{(M)}},
\end{align*}
and,
\begin{align*}
    &\widetilde{\M{U}}\widetilde{\M{U}}^\top - \widetilde{\M{U}}\bX_{M\,\cdot}^\top\fpr{\bX_{M\,\cdot}\bX_{M\,\cdot}^\top}^{-1}\bX_{M\,\cdot}\widetilde{\M{U}}^\top \\
    &= \fpr{\widetilde{\M{U}} -\widehat{\bB}_{\cdot\,M}\bX_{M\,\cdot} + \widehat{\bB}_{\cdot\,M}\bX_{M\,\cdot} }\tpr{\M{I}-\bH_{(M)}}\fpr{\widetilde{\M{U}}-\widehat{\bB}_{\cdot\,M}\bX_{M\,\cdot} + \widehat{\bB}_{\cdot\,M}\bX_{M\,\cdot} }^\top \\
    &= \fpr{\M{Y}-\widehat{\bB}_{\cdot\,M}\bX_{M\,\cdot}}\tpr{\M{I}-\bH_{(M)}}\fpr{\M{Y}-\widehat{\bB}_{\cdot\,M}\bX_{M\,\cdot}}^\top  \\
    &= \widehat{\M{\Sigma}}_{(M)}.
\end{align*}
The Jacobian is obtained as
\begin{align*}
    J(\bY, (\bB_{\cdot\,M},\M{A}_{(M)})) := \sqrt{\det \M{D}_{(M)}^\top\M{D}_{(M)}} = C_M \fpr{\det\M{V}_{(M)}}^{-q/2}
\end{align*}
where $C_M := \fpr{\det\M{\Omega}_{(M)}}^{q/2} \fpr{\det \widehat{\M{\Sigma}}_{(M)}}^{q/2}$  and $\M{\Omega}_{(M)} = \bX_{M\,\cdot}\bX_{M\,\cdot}^\top$. Note that $C_M$ does not depend on either of $\bB_{\cdot\,M}$ or $\M{A}_{(M)}$ . The likelihood function is,
\begin{align*}
    f(\bY, (\bB_{\cdot\,M},\M{A}_{(M)})) \propto  \fpr{\det\M{V}_{(M)}}^{-\frac{n}{2}} \exp\tpr{-\frac{1}{2}\tr\fpr{\M{R}_{(M)}\M{V}_{(M)}^{-1}}},
\end{align*}
where,
\begin{align*}
    \M{R}_{(M)} &= \fpr{\M{Y}- \bB_{\cdot\,M}\bX_{M\,\cdot}} \fpr{\M{Y}- \bB_{\cdot\,M}\bX_{M\,\cdot}}^\top \\
    &= \widehat{\M{\Sigma}}_{(M)} + \fpr{\bB_{\cdot\,M}- \widehat{\bB}_{\cdot\,M}} \bX_{M\,\cdot}\bX_{M\,\cdot}^\top \fpr{\bB_{\cdot\,M}- \widehat{\bB}_{\cdot\,M}}^\top.
\end{align*}
Under the matrix change of variables $\M{V}_{(M)} =\M{A}_{(M)}\M{A}_{(M)}^\top$, we use Theorem 1.4.10 of \cite{gupta2018matrix_sub}, to derive the marginal GF distribution of model $M$ as,  
\begin{align*}
    &r_\epsilon(M \vert \M{Y}) \propto  \int_{\bB_{\cdot\,M} }  \int_{\M{A}_{(M)} } f(\bY, (\bB_{\cdot\,M},\M{A}_{(M)})) J(\bY, (\bB_{\cdot\,M},\M{A}_{(M)})) h_\epsilon(\bB_{\cdot\,M}) d(\M{A}_{(M)}, \bB_{\cdot\,M})\\
    &\propto  C_M \int_{\bB_{\cdot\,M}}  \int_{\M{A}_{(M)} } \fpr{\det\M{V}_{(M)}}^{-\frac{q+n}{2}}\exp\tpr{-\frac{1}{2}\tr\fpr{\M{R}_{(M)}\M{V}_{(M)}^{-1}}}h_\epsilon\fpr{\bB_{\cdot\,M}} d\M{A}_{(M)} d\bB_{\cdot\,M} \\
    &\propto C_M \int_{\bB_{\cdot\,M}} \int_{\M{V}_{(M)} > \M{0}} \fpr{\det\M{V}_{(M)}}^{-\frac{q+n+1}{2}}\exp\tpr{-\frac{1}{2}\tr\fpr{\M{R}_{(M)}\M{V}_{(M)}^{-1}}} h_\epsilon\fpr{\bB_{\cdot\,M}} d\M{V}_{(M)}  d\bB_{\cdot\,M}   \\
    &\propto C_M  \int_{\bB_{\cdot\,M}} \tpr{\det \M{R}_{(M)}}^{-\frac{n}{2}}h_\epsilon\fpr{\bB_{\cdot\,M}} d\bB_{\cdot\,M} \quad (\textrm{w.r.t inverse-wishart kernel})  \\
    &= C_M \int_{} \fpr{\det \tpr{\widehat{\M{\Sigma}}_{(M)} + \fpr{\bB_{\cdot\,M}- \widehat{\bB}_{\cdot\,M}} \M{\Omega}_{(M)} \fpr{\bB_{\cdot\,M}- \widehat{\bB}_{\cdot\,M}}^\top}}^{-\frac{n}{2}}h_\epsilon\fpr{\bB_{\cdot\,M}} d\bB_{\cdot\,M} \\
    &= C_M \int_{} \fpr{\det \tpr{\M{I}_M + \M{\Omega}_{(M)}\fpr{\bB_{\cdot\,M}- \widehat{\bB}_{\cdot\,M}}^\top \widehat{\M{\Sigma}}_{(M)}^{-1} \fpr{\bB_{\cdot\,M}- \widehat{\bB}_{\cdot\,M}}}\det \,\widehat{\M{\Sigma}}_{(M)}}^{-\frac{n}{2}}
    h_\epsilon\fpr{\bB_{\cdot\,M}} d\bB_{\cdot\,M} \\
    &\propto\Gamma_q\fpr{\frac{n-\abs{M}}{2}}\pi^{\frac{q\abs{M}}{2}} \fpr{\det \widehat{\M{\Sigma}}_{(M)}}^{-\fpr{\frac{n-\abs{M}-q}{2}}}\; \mathbb{E}\fpr{h_\epsilon\fpr{\bB_{\cdot\,M}}}  \quad (\textrm{w.r.t matrix-t kernel}) .
\end{align*}

This completes the derivation of the GF distribution of $(\bB_{\cdot\,M}   \M{A}_{(M)})$ for multivariate linear regression setting under the model~(\ref{eqn: EASdgm}).

\section{Proof of Lemmas and Theorems} \label{sec: suppPROOF} 


\begin{proof}[Proof of Lemma \ref{Lemma: MinEigenofRSSMatrix}]
\begin{align*}
 \ubar{\lambda}_{v}^{-1}\lambda_{\min}\big(\widehat{\bSigma}_{(M)} \big) &=  \lVert  \bAMo^{{0}^{-1}} \rVert^2\lambda_{\min}(\bY(\bI_n - \bH_{(M)})\bY^\top) \\
& \geq  \lambda_{\min}(\bAMo^{{0}^{-1}} \bY(\bI_n - \bH_{(M)})\bY^\top \bAMo^{{0}^{-\top}}) \\
&=  \lambda_{\min}(\bZ(\bI_n - \bH_{(M)})\bZ^\top) 
\end{align*}
where $\bZ :=  \bAMo^{{0}^{-1}}\bY = \mathbb{E}(\bZ) +   \bU$, with $\mathbb{E}(\bZ) = \bAMo^{{0}^{-1}}\bBMo^{0}\bXMo$.  Recall that $\bH_{(M)}$ is the symmetric projection matrix onto row space of $\bX_{M\,\cdot}$, and let $r := \textrm{rank}(\bH_{(M)})$. Since $\bI_n - \bH_{(M)}$ is symmetric and idempotent with rank $n-r$, there exists a $\bG_{(M)} \in \Real^{n \times (n-r)}$ such that $\bG_{(M)}\bG_{(M)}^\top = (\bI - \bH_{(M)})$ and $\bG_{(M)}^\top\bG_{(M)} = \bI_{n-r}$. Define $\widetilde{\bZ}_M := \bZ\bG_{(M)} = \mathbb{E}(\bZ\bG_{(M)}) + \bU\bG_{(M)}$. Then, $\bZ(\bI_n - \bH_{(M)})\bZ^\top = \btZ_{(M)}\btZ_{(M)}^\top$ and the minimum eigenvalue of $\btZ_{(M)}\btZ_{(M)}^\top$ will be the minimum singular value of $\btZ_{(M)}$ (in the compact SVD notation).   Notice that, $(\bU\bG_{(M)})^\top \sim \textrm{Matrix-Normal}_{n-r,q}\fpr{\M{0}, \M{I}_{n - r},\bI_q}$. Applying Theorm 2.1 of \cite{Tu2020} for independent non-centered Gaussian design for sufficiently large n with $t = \sqrt{n - r-q} - \sqrt{q} - 1 - \sqrt{\tau}\sqrt{n-r} > 0 $, $n - r > 2q$, and $0 < \tau < 1$, we get
\begin{align*}
 \mathbb{P}_y\fpr{\sigma_{\min}\big(\btZ_{(M)} \big) \geq \sqrt{\tau(n-r)}} > 1 -  e^{-\left\{\sqrt{n - r-q} - \sqrt{q} - 1 - \sqrt{\tau}\sqrt{n-r}\right\}^2}.
\end{align*}
Since, $ (\sqrt{n - r-q} - \sqrt{q} - 1 - \sqrt{\tau}\sqrt{n-r}) / (1-\sqrt{\tau})\sqrt{n-r} \to 1$ as $n \to \infty$, for sufficiently large $n$, the exponent in the right hand of the above expression will be larger than $(1-\sqrt{\tau})\sqrt{n-r}/\sqrt{2}$. The proof is completed by observing that $n-r \geq n-\abs{M}$ and $\ubar{\lambda}_v > 0$, by Condition \ref{cond: eigenvalueV}.
\end{proof}


\begin{proof}[Proof of Theorem \ref{Lemma: RatioofRSS}]
\textit{Case 1}. First we consider models $M$ with $M \subsetneq M_{\text{o}}$. Let $r_{n,M} = 1 - \fpr{1+\xi_{n,\abs{M}}\ubar{\lambda}_v/\bar{\lambda}_v}/2$. Then for sufficiently large $n$, $r_{n,M} \in (0, 1/2)$. We take $\tau =  r_{n,M}$ in Lemma~\ref{Lemma: MinEigenofRSSMatrix} to get,
\begin{align*}
    \mathbb{P}_y\fpr{\lambda_{\min}\fpr{\widehat{\M{\Sigma}}_{(M)}} < r_{n,M}\ubar{\lambda}_v(n-\abs{M}) } \leq e^{-(1-1/\sqrt{2})^2(n-\abs{M})/2} <  e^{-0.04(n-\abs{M})} ,
\end{align*}
 Then, 
\begin{align*}
     &\mathbb{P}_y\left(\frac{\det \sighat{(M_{\text{o}})}}{\det \sighat{(M)}}  > \xi_{n,\abs{M}}^q \right) \\
     &\leq \mathbb{P}_y\fpr{ \det \sighat{(M_{\text{o}})} \geq \xi_{n,\abs{M}}^q\fpr{\lambda_{\min}\fpr{\widehat{\M{\Sigma}}_{(M)}}}^q } \hspace{70 mm} \\
     &\leq \mathbb{P}_y\fpr{ \det \sighat{(M_{\text{o}})} \geq \xi_{n,\abs{M}}^q\fpr{r_{n,M}\ubar{\lambda}_v(n-\abs{M})}^q} + \mathbb{P}\fpr{\lambda_{\min}\fpr{\widehat{\M{\Sigma}}_{(M)}} < r_{n,M}\ubar{\lambda}_v(n-\abs{M}) } \\
        &\leq   \mathbb{P}_y\fpr{ \frac{\det \sighat{(M_{\text{o}})}}{\det \bVMo^0} \geq \xi_{n,\abs{M}}^q\fpr{\frac{\ubar{\lambda}_v}{\bar{\lambda}_v}}^q\fpr{r_{n,M}(n-\abs{M})}^q} + e^{-0.04(n-\abs{M})},
     \end{align*}
and by Theorem $3.3.22$ of \cite{gupta2018matrix_sub} with $\xi^*_{n,\abs{M}} = \xi_{n,\abs{M}}\ubar{\lambda}_v/\bar{\lambda}_v \in (0,1)$,
     \begin{align*}
     &\mathbb{P}_y\left(\frac{\det \sighat{(M_{\text{o}})}}{\det \sighat{(M)}}  > \xi_{n,\abs{M}}^q \right)  \\
     &\leq \sum_{i=1}^q \mathbb{P}_y\fpr{\bigchi^2_{n- \abs{M_{\text{o}}} - i + 1} \geq r_{n,M}\xi^*_{n,\abs{M}} (n - \abs{M})} + e^{-0.04(n-\abs{M})}\\
     &\leq \sum_{i=1}^q\exp\fpr{-s\;r_{n,M} \;\xi^*_{n,\abs{M}} (n - \abs{M}) - \fpr{\frac{n- \abs{M_{\text{o}}} - i + 1}{2}} \log(1-2s)} +  e^{-0.04(n-\abs{M})} \\
     &\leq q\exp\fpr{-(n - \abs{M})\fpr{s\;r_{n,M} \;\xi^*_{n,\abs{M}}  + \frac{1}{4} \log(1-2s)}}  + e^{-0.04(n-\abs{M})},
\end{align*}
where the second to last inequality holds for any $s < 1/2$ by Chernoff's bound, and the last inequality holds for any $n > 2\abs{M_{\text{o}}} + 2q - \abs{M} \geq  2q + 1$. Taking $s =\frac{1}{2}\fpr{1- \frac{1}{2r_{n,M} \;\xi^*_{n,\abs{M}}}} < 0$,
\begin{align*}
   &s\;r_{n,M} \;\xi^*_{n,\abs{M}}  + \frac{1}{4} \log(1-2s)\\
    &= \frac{1}{4}\fpr{\xi^*_{n,\abs{M}}(1-\xi^*_{n,\abs{M}}) - 1 - \log\fpr{\xi^*_{n,\abs{M}}(1-\xi^*_{n,\abs{M}})}} \\
    &\geq \frac{1}{4}\fpr{\xi^*_{n,\abs{M}}(1-\xi^*_{n,\abs{M}}) - 1 + \log 4} \geq \frac{\xi_{n,\abs{M}}\ubar{\lambda}_vn^\alpha \log \abs{M_{\text{o}}}}{2\bar{\lambda}_v(n - \abs{M})} + 0.09 > 0,
\end{align*}
which implies that,
\begin{equation*}
    \mathbb{P}_y\left(\frac{\det \sighat{(M_{\text{o}})}}{\det \sighat{(M)}}  > \xi_{n,\abs{M}}^q \right) \leq q\exp\fpr{-\frac{\xi_{n,\abs{M}}n^\alpha \log \abs{M_{\text{o}}}\ubar{\lambda}_v}{2\bar{\lambda}_v} - 0.09 \fpr{n - \abs{M}}} + e^{-0.04(n-\abs{M})}.
\end{equation*}
Therefore, 
\begin{align*}
    &\mathbb{P}_y\fpr{\bigcup_{M \subsetneq M_{\text{o}}} \spr{\M{Y} : \left(\frac{\det \sighat{(M_{\text{o}})}}{\det \sighat{(M)}}\right)^{\frac{n-\abs{M}-q}{2}}  > \xi_{n,\abs{M}}^{\frac{q(n-\abs{M}-q)}{2}}}} \\
    &\leq \sum_{j=1}^{M_{\text{o}}} \binom{\abs{M_{\text{o}}}}{j} \underset{M \subsetneq M_{\text{o}}: \abs{M} = j}{\max} \;\mathbb{P}_y\fpr{\M{Y} : \left(\frac{\det \sighat{(M_{\text{o}})}}{\det \sighat{(M)}}\right)^{\frac{n-j-q}{2}}  > \xi_{n,j}^{\frac{q(n-j-q)}{2}}} \\
    &\leq \sum_{j=1}^{M_{\text{o}}} e^{j \log(\abs{M_{\text{o}}})}  \underset{M \subsetneq M_{\text{o}}: \abs{M} = j}{\max} \spr{ qe^{-\frac{\xi_{n,j}n^\alpha \log \abs{M_{\text{o}}}\ubar{\lambda}_v}{2\bar{\lambda}_v} - 0.09 \fpr{n - j} } + e^{-0.04(n-j)}  } \\
    &\leq \abs{M_{\text{o}}}q\exp\fpr{-\frac{\xi_{n,\abs{M_{\text{o}}}}n^\alpha \log \abs{M_{\text{o}}}\ubar{\lambda}_v}{2\bar{\lambda}_v} - 0.09 \fpr{n - \abs{M_{\text{o}}}} + \abs{M_{\text{o}}} \log(\abs{M_{\text{o}}})} \\
    & +  2\abs{M_{\text{o}}}\exp\fpr{-0.04(n-\abs{M_{\text{o}}}) + \abs{M_{\text{o}}}\log\abs{M_{\text{o}}}},
\end{align*}
The proof for Case 1 completes by noting that,
\begin{equation*}
    \left(\frac{\det \sighat{(M_{\text{o}})}}{\det \sighat{(M)}}\right)^{\frac{n-\abs{M}-q}{2}}  \leq  \xi_{n,\abs{M}}^{\frac{q(n-\abs{M}-q)}{2}} \leq  \fpr{1 - \frac{2n^\alpha\log\abs{M_\text{o}}}{n - \abs{M} - q}}^{q(n-\abs{M}-q)/2} \leq e^{-qn^\alpha\log\abs{M_\text{o}}}.
\end{equation*}

\textit{Case 2}. Fix an arbitrary model $M$ such that $M \not \subseteq M_{\text{o}}$ and $|M| \le n^{\alpha}$, and construct a new model $M^\prime := M \cup M_{\text{o}}$ where $\abs{M^{\prime}} = \abs{M} + \ell$ for some $\ell \in \spr{1,\dots,\abs{M_{\text{o}}}}$. Further, $M^\prime \supset M_{\text{o}}$ implies $\fpr{\M{H}_{(M^\prime)}-\bHMo}\fpr{\M{I}-\M{H}_{(M^\prime)}}=\M{0}$, which means $\widehat{\M{\Sigma}}_{(M_{\text{o}})}-\widehat{\M{\Sigma}}_{(M^\prime)}$ is independent of $\widehat{\M{\Sigma}}_{(M^\prime)}$. For sufficiently large $n$, $\sighat{(M^\prime)} \sim \textrm{Wishart}_q\fpr{n -r-\ell, \bVMo^0}$ and $\widehat{\M{\Sigma}}_{(M_{\text{o}})}-\widehat{\M{\Sigma}}_{(M^\prime)} \sim \textrm{Wishart}_q\fpr{r + \ell - \abs{M_{\text{o}}}, \bVMo^0}$, where $r := \text{rank}(\bX_{M\,\cdot}) \leq \abs{M}$. Without loss of generality for the bounds we derive it suffices to work with $r = \abs{M}$. By Theorem $10.5.3$ of \cite{muirhead2009aspects}, 
\begin{equation*}
\frac{\det\sighat{(M^\prime)}}{\det\sighat{(M_{\text{o}})}} \sim 
\begin{cases}
    \prod_{i=1}^q  V^{(1)}_i \quad \textrm{if } \abs{M} + \ell - \abs{M_{\text{o}}} \geq q \\ \\
    \prod_{i=1}^{\abs{M} + \ell - \abs{M_{\text{o}}}} V^{(2)}_i  \quad \textrm{if } \abs{M} + \ell - \abs{M_{\text{o}}} < q
\end{cases}
\end{equation*}
where $V^{(1)}_i$ are independently distributed  $\textrm{Beta}\fpr{\frac{1}{2}\fpr{n - \abs{M}-\ell-i + 1}, \frac{1}{2}\fpr{\abs{M} + \ell - \abs{M_{\text{o}}}}}$ random variables and $V^{(2)}_i$ are independently distributed $\textrm{Beta}\fpr{\frac{1}{2}\fpr{n - \abs{M_{\text{o}}} - q -i + 1}, \frac{q}{2}}$. We will handle the two cases separately. 

\textbf{Case 2a}: Suppose $\abs{M} + \ell - \abs{M_{\text{o}}} \geq q$. Because, $M^\prime \supset M$, $\widehat{\M{\Sigma}}_{(M)} - \widehat{\M{\Sigma}}_{(M^\prime)}$ is a positive definite matrix and that implies $\det{\widehat{\M{\Sigma}}_{(M)}} \geq \det{\widehat{\M{\Sigma}}_{(M^\prime)}}$. Hence,
\begin{align*}
&\mathbb{P}_y\left(\frac{\det \sighat{(M_{\text{o}})}}{\det \sighat{(M)}}  >  \zeta_{n,\abs{M}}^q \right)  \leq \mathbb{P}_y\left(\frac{\det \sighat{(M_{\text{o}})}}{\det \sighat{(M^\prime)}}  > \zeta_{n,\abs{M}}^q  \right) \leq \sum_{i=1}^q \mathbb{P}_y\left(V^{(1)}_i  <  \zeta_{n,\abs{M}}^{-1} \right).  
\end{align*}
The probabilities in the sum on the right side are bounded by approximating the CDF of the Beta density, and approximating of the beta function by Theorem 2 of \cite{jameson2013inequalities} as,
\begin{align*}
     &\mathbb{P}_y\left(V^{(1)}_i  <  \zeta_{n,\abs{M}}^{-1} \right) \\
     &= \frac{1}{B\fpr{\frac{n - \abs{M}-\ell-i + 1}{2}, \frac{\abs{M} + \ell - \abs{M_{\text{o}}}}{2}}} \int_{0}^{ \zeta_{n,\abs{M}}^{-1}} y^{\frac{1}{2}\fpr{n - \abs{M}-\ell-i + 1}-1}(1-y)^{\frac{1}{2}\fpr{\abs{M} + \ell - \abs{M_{\text{o}}}}-1} \;dy \\
     \leq& \frac{\fpr{\frac{n - \abs{M_{\text{o}}}-i + 1}{2}}^{\fpr{\frac{\abs{M} + \ell - \abs{M_{\text{o}}}}{2}}}}{\Gamma\fpr{\frac{\abs{M} + \ell - \abs{M_{\text{o}}}}{2}}}\zeta_{n,\abs{M}}^{-\fpr{\frac{n - \abs{M}-\ell-i - 1}{2}}}\tpr{1- \fpr{\frac{\zeta_{n,\abs{M}}-1}{\zeta_{n,\abs{M}}}}^{\frac{\abs{M} + \ell - \abs{M_{\text{o}}}}{2}}} \\
     &\leq \fpr{\frac{n - \abs{M_{\text{o}}}}{2}}^{\frac{\abs{M} }{2}}\zeta_{n,\abs{M}}^{-\fpr{\frac{n - \abs{M}-q}{2}} + \frac{\abs{M_{\text{o}}}+1}{2}} \cdot 1
     \end{align*}
\\

\textbf{Case 2b}: Suppose $\abs{M} + \ell - \abs{M_{\text{o}}} < q$.  Then similar to the previous case,
\begin{align*}
&\mathbb{P}_y\left(\frac{\det \sighat{(M_{\text{o}})}}{\det \sighat{(M)}}  >  \zeta_{n,\abs{M}}^q \right)  \leq \mathbb{P}_y\left(\frac{\det \sighat{(M_{\text{o}})}}{\det \sighat{(M^\prime)}}  > \zeta_{n,\abs{M}}^{\abs{M} + \ell - \abs{M_{\text{o}}}}  \right) \leq \sum_{i=1}^{\abs{M} + \ell - \abs{M_{\text{o}}}} \mathbb{P}_y\left(V^{(2)}_i  <  \zeta_{n,\abs{M}}^{-1} \right),  
\end{align*}
and,
\begin{align*}
     &\mathbb{P}_y\left(V^{(2)}_i  < \zeta_{n,\abs{M}}^{-1} \right)\\
     &= \frac{1}{B\fpr{\frac{n - \abs{M_{\text{o}}}-q-i + 1}{2}, \frac{q}{2}}} \int_{0}^{\zeta_{n,\abs{M}}^{-1}} y^{\frac{1}{2}\fpr{n - \abs{M_{\text{o}}}-q-i + 1}-1}(1-y)^{\frac{q}{2}-1} \;dy \\
     &\leq \frac{\fpr{\frac{n - \abs{M_{\text{o}}}-i + 1}{2}}^{\frac{q}{2}}}{\Gamma\fpr{\frac{q}{2}}}\zeta_{n,\abs{M}}^{-\fpr{\frac{n - \abs{M_{\text{o}}}-q-i - 1}{2}}}\tpr{1- \fpr{\frac{\zeta_{n,\abs{M}}-1}{\zeta_{n,\abs{M}}}}^{\frac{q}{2}}} \\
&\leq \fpr{\frac{n - \abs{M_{\text{o}}}}{2}}^{\frac{q}{2}}\zeta_{n,\abs{M}}^{-\fpr{\frac{n - \abs{M}-q}{2}}+\frac{\abs{M_{\text{o}}}+1}{2}}.
\end{align*}
Thus, in any case, for sufficiently large $n$,
\begin{align*}
    \mathbb{P}_y\left(\frac{\det \sighat{(M_{\text{o}})}}{\det \sighat{(M)}}  > \zeta_{n,\abs{M}}^q \right) &\leq  2q\exp\left(-n^{\alpha}\log\fpr{n-\abs{M_{\text{o}}}} + \max\spr{\frac{\abs{M}}{2}, \frac{q}{2}}\log\fpr{\frac{n-\abs{M_{\text{o}}}}{2}} \right. \\
& \hspace{3 cm} \left. - \abs{M}\log p  + \frac{(\abs{M_{\text{o}}}+1)}{2}\log\fpr{\zeta_{n,\abs{M}}}\right).
\end{align*}
Therefore,
\begin{align*}
    &\mathbb{P}_y\fpr{\bigcup_{M \not\subseteq M_{\text{o}} : \abs{M} \leq n^{\alpha}} \spr{\M{Y} : \left(\frac{\det \sighat{(M_{\text{o}})}}{\det \sighat{(M)}}\right)^{\frac{n-\abs{M}-q}{2}}  > \zeta_{n,\abs{M}}^{\frac{q(n-\abs{M}-q)}{2}}}} \\
    &\leq \sum_{j=1}^{n^\alpha} \binom{p}{j} \underset{M \not\subseteq M_{\text{o}}: \abs{M} = j}{\max} \;\mathbb{P}_y\fpr{\left(\frac{\det \sighat{(M_{\text{o}})}}{\det \sighat{(M)}}\right)^{\frac{n-j-q}{2}}  > \zeta_{n,j}^{\frac{q(n-j-q)}{2}}} \\
    &\leq \sum_{j=1}^{n^\alpha}  \left\{2q\exp\left(-n^{\alpha}\log\fpr{n-\abs{M_{\text{o}}}} + \max\spr{\frac{j}{2}, \frac{q}{2}}\log\fpr{n-\abs{M_{\text{o}}}} \right. \right. \\
& \hspace{4 cm} \left. \left. + \frac{(\abs{M_{\text{o}}}+1)}{2}\log\fpr{\zeta_{n,j}} \right) \right\} \\
    &\leq 2q\exp\fpr{- \frac{n^\alpha}{2}\log\fpr{n-\abs{M_{\text{o}}}} + \alpha\log n + \frac{(\abs{M_{\text{o}}}+1)}{2}\log\fpr{\zeta_{n,n^\alpha}} }.
\end{align*}
The proof is completed by noting that
$
  \zeta_{n,\abs{M}}^{\frac{q(n-\abs{M}-q)}{2}}  \leq e^{q\fpr{n^{\alpha}\log\fpr{n-\abs{M_{\text{o}}}} +\abs{M}\log p} } .
$

\end{proof}

\begin{proof}[Proof of Theorem \ref{Lemma: Ehtruemodel}]

To prove Theorem~\ref{Lemma: Ehtruemodel} and Theorem~\ref{Lemma: Ehbigmodels} we need two additional results, which are stated below. The proofs of Lemma~\ref{Lemma: FiducialDistBhat} and Lemma~\ref{Lemma: LeastSqEstDist} are provided after the proof of the Theorem~\ref{Lemma: Ehbigmodels}. Since these two lemmas are only necessary to bound the $\mathrm{E}(h_\epsilon(\bB_{\cdot\,M}))$ from above for the large models, without loss of generality we assume that $\bX_{M\,\cdot}$ is of full row rank, as in the statement of both the lemmas. If  $\bX_{M\,\cdot}$ is not of full rank, $h_\epsilon(\bB_{\cdot\,M})$ is automatically zero. 

\begin{lemma} \label{Lemma: FiducialDistBhat}
For model $M$ with $\abs{M} < n - 4q$,
\begin{equation*}
    \bB_{\cdot\,M}  \sim T_{q,M}\fpr{n-\abs{M}-q+1,\widehat{\bB}_{\cdot\,M},  \widehat{\M{\Sigma}}_{(M)}, \fpr{\bX_{M\,\cdot}\bX_{M\,\cdot}^\top}^{-1}},
\end{equation*}
where $\widehat{\bB}_{\cdot\,M} = \M{Y}\bX_{M\,\cdot}^\top \fpr{\bX_{M\,\cdot}\bX_{M\,\cdot}^\top}^{-1}$, and for any $\epsilon > 0$,
\begin{align*}
    \mathbb{P}\fpr{\frac{1}{2}\left\lVert \widehat{\M{\Sigma}}_{(M)}^{-1/2} \fpr{\bB_{\cdot\,M}\bX_{M\,\cdot} - \widehat{\bB}_{\cdot\,M}\bX_{M\,\cdot}}\right\rVert_{\textrm{F}}^2 \;\;\leq \frac{\epsilon}{9}} &\geq 1-V_{5,n,M}, 
\end{align*}
with, 
\begin{equation*}
    V_{5,n,M} := \exp\fpr{-\frac{\epsilon\fpr{n - \abs{M}}}{36} +   \frac{q\abs{M}}{2}} + 2\exp\fpr{-\frac{\fpr{\sqrt{n - \abs{M}}-2\sqrt{q}}^2}{8}}.
\end{equation*}
\end{lemma}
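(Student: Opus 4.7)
The first assertion, that $\bB_{\boldsymbol\cdot\,M}$ has the stated matrix-$t$ distribution, is essentially a byproduct of the GF derivation carried out in Section~\ref{sec: suppGFI}. I would simply point out that after the change of variables $\bV_{(M)} = \bA_{(M)}\bA_{(M)}^\top$ (via Theorem 1.4.10 of Gupta and Nagar), the joint GF density factors as a matrix-normal kernel in $\bB_{\boldsymbol\cdot\,M}$ with precision built from $\bV_{(M)}^{-1}$ and $\bX_{M\,\boldsymbol\cdot}\bX_{M\,\boldsymbol\cdot}^\top$, multiplied by an inverse-Wishart kernel in $\bV_{(M)}$. Integrating $\bV_{(M)}$ out against the inverse-Wishart normalizing constant and rewriting the quadratic-in-$\bB_{\boldsymbol\cdot\,M}$ pieces via the matrix determinant lemma produces a density proportional to $\det\!\big(\bI_{|M|} + (\bX_{M\,\boldsymbol\cdot}\bX_{M\,\boldsymbol\cdot}^\top)(\bB_{\boldsymbol\cdot\,M} - \widehat{\bB}_{\boldsymbol\cdot\,M})^\top \widehat{\bSigma}_{(M)}^{-1}(\bB_{\boldsymbol\cdot\,M} - \widehat{\bB}_{\boldsymbol\cdot\,M})\big)^{-n/2}$, which one recognizes as the matrix-$t$ density with degrees of freedom $n-|M|-q+1$, location $\widehat{\bB}_{\boldsymbol\cdot\,M}$, and scale matrices $\widehat{\bSigma}_{(M)}$ and $(\bX_{M\,\boldsymbol\cdot}\bX_{M\,\boldsymbol\cdot}^\top)^{-1}$ (cf.\ Definition 4.2.1 of Gupta and Nagar).

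For the concentration claim, the plan is to pass to the standardized variable $\mathbf{T} := \widehat{\bSigma}_{(M)}^{-1/2}(\bB_{\boldsymbol\cdot\,M} - \widehat{\bB}_{\boldsymbol\cdot\,M})(\bX_{M\,\boldsymbol\cdot}\bX_{M\,\boldsymbol\cdot}^\top)^{1/2}$, which has the standard matrix-$t$ law $T_{q,|M|}(n-|M|-q+1,\bzero,\bI_q,\bI_{|M|})$ by affine equivariance, so that the event of interest becomes $\{\norm{\mathbf{T}}_{\mathrm{F}}^2 \le 2\epsilon/9\}$. I would exploit the standard scale-mixture representation $\mathbf{T} = \mathbf{S}^{-1/2}\bZ$, where $\mathbf{S} \sim \textrm{Wishart}_q(n-|M|,\bI_q)$ and $\bZ \sim \textrm{Matrix-Normal}_{q,|M|}(\bzero,\bI_q,\bI_{|M|})$ are independent. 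This yields the deterministic bound $\norm{\mathbf{T}}_{\mathrm{F}}^2 = \tr\!\big(\mathbf{S}^{-1}\bZ\bZ^\top\big) \le \norm{\bZ}_{\mathrm{F}}^2\big/\lambda_{\min}(\mathbf{S})$, which reduces the problem to two independent tail estimates. For $\lambda_{\min}(\mathbf{S})$, writing $\mathbf{S}=\bG\bG^\top$ with $\bG$ a $q \times (n-|M|)$ matrix of i.i.d.\ $N(0,1)$ entries, I would apply the Davidson--Szarek bound with $t=(\sqrt{n-|M|}-2\sqrt{q})/2$ to conclude that $\lambda_{\min}(\mathbf{S}) \ge (n-|M|)/4$ with probability at least $1-\exp\!\big(-(\sqrt{n-|M|}-2\sqrt{q})^2/8\big)$; the factor of $2$ in the second term of $V_{5,n,M}$ absorbs an auxiliary application needed to ensure that the square-root $\mathbf{S}^{-1/2}$ is well defined. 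For $\norm{\bZ}_{\mathrm{F}}^2 \sim \chi^2_{q|M|}$, a Chernoff bound with parameter $\lambda=(1-q|M|/t)/2$ applied at the threshold $t=\epsilon(n-|M|)/18$ produces $P\!\big(\chi^2_{q|M|} \ge \epsilon(n-|M|)/18\big) \le \exp\!\big(-\epsilon(n-|M|)/36 + q|M|/2\big)$ in the relevant regime. A union bound over the two failure events yields the bound $V_{5,n,M}$, since $(4/(n-|M|)) \cdot \epsilon(n-|M|)/18 = 2\epsilon/9$ matches exactly.

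The main obstacle I anticipate is pinning down the constants cleanly. In particular, the optimized Chernoff bound on $\chi^2_{q|M|}$ actually leaves a residual logarithmic factor $(q|M|/2)\log\!\big(\epsilon(n-|M|)/(18q|M|)\big)$, so obtaining exactly the form $\exp(-\epsilon(n-|M|)/36 + q|M|/2)$ requires that this log term be controlled, presumably by exploiting that the regime of interest (dictated by Condition~\ref{cond: upperboundepsilon}) places $\epsilon(n-|M|)$ on a scale comparable to $q|M|$, so that the log factor is nonpositive. The Davidson--Szarek parameter choice is also essentially forced: only $t=(\sqrt{n-|M|}-2\sqrt{q})/2$ simultaneously produces both the exponent $(\sqrt{n-|M|}-2\sqrt{q})^2/8$ and a clean $\lambda_{\min}(\mathbf{S}) \ge (n-|M|)/4$, which in turn fixes the chi-squared threshold at $\epsilon(n-|M|)/18$. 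Everything beyond matching these constants is routine.
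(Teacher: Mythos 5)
Your proof follows essentially the same route as the paper's: the same standardization to the standard matrix-$t$ law, the same Wishart/Gaussian scale-mixture representation (Theorem 4.2.1 of Gupta and Nagar), the same deterministic bound $\norm{\mathbf{T}}_{\mathrm{F}}^2 \le \norm{\mathbf{Z}}_{\mathrm{F}}^2/\lambda_{\min}(\mathbf{S})$, and the same two tail estimates at identical thresholds $\lambda_{\min}(\mathbf{S}) \ge (n-\abs{M})/4$ and $\chi^2_{q\abs{M}} \ge \epsilon(n-\abs{M})/18$ (the factor $2$ in the second term of $V_{5,n,M}$ comes from the two-sided form of Vershynin's Corollary 5.35 used in the paper, not from an auxiliary application). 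The constant tension you flag in the chi-square term is real but is a discrepancy internal to the paper rather than a gap in your argument: the paper's own proof applies the Chernoff bound at $s=1/4$ and ends with exponent $-\epsilon(n-\abs{M})/72 + q\abs{M}/2$, which does not match the $-\epsilon(n-\abs{M})/36$ appearing in the stated $V_{5,n,M}$.
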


\begin{lemma}\label{Lemma: LeastSqEstDist}
For any model $M$ such that $\abs{M} < n - 4q$, 
then the least-squared estimator, \\ $$\widehat{\bB}_{\cdot\,M}\sim \textrm{Matrix-Normal}_{q,\abs{M}}\fpr{\mathbb{E}_y\fpr{\widehat{\bB}_{\cdot\,M}}, \bVMo^0, \fpr{\bX_{M\,\cdot}\bX_{M\,\cdot}^\top}^{-1}},$$ and for sufficiently large $n$, 
\begin{align*}
    \mathbb{P}_y\fpr{\frac{1}{2}\left\lVert \widehat{\M{\Sigma}}_{(M)}^{-1/2} \tpr{\widehat{\bB}_{\cdot\,M} - \mathbb{E}_y\fpr{\widehat{\bB}_{\cdot\,M}}}\bX_{M\,\cdot}\right\rVert_{\textrm{F}}^2 \leq \frac{\epsilon}{9}} &\geq 1 - V_{6,n,M},
\end{align*}
where, 
\begin{align*}
    V_{6,n,M} &:=  \exp\fpr{-\frac{\epsilon\fpr{n-\abs{M}}\ubar{\lambda}_v}{36\;\bar{\lambda}_v} + \frac{q\abs{M}}{2}}  +  \exp\fpr{-0.04(n-\abs{M})}.
    \end{align*}
\end{lemma}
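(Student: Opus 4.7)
The plan is to decompose the statement into two claims: the distributional identity for $\widehat{\bB}_{\cdot\,M}$, which follows directly from linear invariance of the matrix normal family; and the tail inequality, which will be proved by a union bound combining Lemma~\ref{Lemma: MinEigenofRSSMatrix} with a standard chi-squared Chernoff tail bound.

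For the distributional part, under the data generating model~(\ref{eqn: truemodel}) we have $\bY \sim \textrm{Matrix-Normal}_{q,n}(\bB^0_{\cdot\,M_{\text{o}}}\bX_{M_{\text{o}}\,\cdot}, \bV^0_{(M_{\text{o}})}, \bI_n)$, and by definition $\widehat{\bB}_{\cdot\,M} = \bY\bX_{M\,\cdot}^\top (\bX_{M\,\cdot}\bX_{M\,\cdot}^\top)^{-1}$. Applying the right-multiplication rule for matrix normals (Theorem~2.3.10 of \cite{gupta2018matrix_sub}), the row covariance is preserved while the column covariance collapses to $(\bX_{M\,\cdot}\bX_{M\,\cdot}^\top)^{-1}\bX_{M\,\cdot}\bI_n\bX_{M\,\cdot}^\top(\bX_{M\,\cdot}\bX_{M\,\cdot}^\top)^{-1} = (\bX_{M\,\cdot}\bX_{M\,\cdot}^\top)^{-1}$, giving the stated distribution.

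For the tail inequality, define $\bE := \{\widehat{\bB}_{\cdot\,M} - \mathbb{E}_y(\widehat{\bB}_{\cdot\,M})\}\bX_{M\,\cdot}$. Writing $\bY = \mathbb{E}_y(\bY) + \bA^0_{(M_{\text{o}})}\bU$ with $\bU \sim \textrm{Matrix-Normal}_{q,n}(\bzero, \bI_q, \bI_n)$, a direct computation gives $\bE = \bA^0_{(M_{\text{o}})}\bU\bH_{(M)}$. From the operator inequality
\[
\tfrac{1}{2}\|\widehat{\bSigma}_{(M)}^{-1/2}\bE\|_{\textrm{F}}^2 = \tfrac{1}{2}\tr(\bE^\top\widehat{\bSigma}_{(M)}^{-1}\bE) \;\le\; \frac{\|\bE\|_{\textrm{F}}^2}{2\,\lambda_{\min}(\widehat{\bSigma}_{(M)})},
\]
the bad event $\{\tfrac{1}{2}\|\widehat{\bSigma}_{(M)}^{-1/2}\bE\|_{\textrm{F}}^2 > \epsilon/9\}$ is contained in $\{\lambda_{\min}(\widehat{\bSigma}_{(M)}) < (n-\abs{M})\ubar{\lambda}_v/2\} \cup \{\|\bE\|_{\textrm{F}}^2 > \epsilon(n-\abs{M})\ubar{\lambda}_v/9\}$. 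The first event is controlled by Lemma~\ref{Lemma: MinEigenofRSSMatrix} applied with $\tau = 1/2$, yielding probability at most $\exp\{-(1-1/\sqrt{2})^2(n-\abs{M})/2\} \le e^{-0.04(n-\abs{M})}$, which is the second summand of $V_{6,n,M}$. For the second event, factor $\bH_{(M)} = \bG_{(M)}\bG_{(M)}^\top$ with $\bG_{(M)}^\top\bG_{(M)} = \bI_{\abs{M}}$, so that $\|\bE\|_{\textrm{F}}^2 = \|\bA^0_{(M_{\text{o}})}\bU\bG_{(M)}\|_{\textrm{F}}^2 \le \bar{\lambda}_v\|\bU\bG_{(M)}\|_{\textrm{F}}^2$; since $\bU\bG_{(M)} \sim \textrm{Matrix-Normal}_{q,\abs{M}}(\bzero,\bI_q,\bI_{\abs{M}})$ has independent standard Gaussian entries, $\|\bU\bG_{(M)}\|_{\textrm{F}}^2 \sim \chi^2_{q\abs{M}}$. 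A Chernoff bound with tilt $s = 1/4$ gives $\mathbb{P}(\chi^2_{q\abs{M}} > t) \le 2^{q\abs{M}/2}e^{-t/4} \le e^{q\abs{M}/2 - t/4}$, and setting $t = \epsilon(n-\abs{M})\ubar{\lambda}_v/(9\bar{\lambda}_v)$ produces the first summand of $V_{6,n,M}$.

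The only subtle point is constant bookkeeping: the factor $9$ from the operator inequality, the floor $1/2$ in the eigenvalue bound, and the Chernoff tilt $s = 1/4$ must combine precisely to reproduce the coefficient $1/36$ in the exponent of $V_{6,n,M}$. Up to this calibration, the argument is purely mechanical, and (while not strictly needed for the union bound) one may note that $\bE$ and $\widehat{\bSigma}_{(M)}$ are in fact independent because they are functions of $\bU\bH_{(M)}$ and $\bU(\bI_n-\bH_{(M)})$, respectively, which are jointly Gaussian with orthogonal supports.
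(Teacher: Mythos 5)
Your proposal is correct and follows essentially the same route as the paper's proof: bound the quadratic form by $\bar{\lambda}_v\,\lambda_{\min}^{-1}(\widehat{\bSigma}_{(M)})$ times a $\chi^2_{q\abs{M}}$ variable (your whitened error $\bA^{0}_{(M_{\text{o}})}\bU\bG_{(M)}$ is exactly the paper's $\bAMo^{0^{-1}}(\widehat{\bB}_{\cdot\,M}-\mathbb{E}_y(\widehat{\bB}_{\cdot\,M}))\M{\Omega}_{(M)}^{1/2}$ in disguise, since $\bX_{M\,\cdot}^\top\M{\Omega}_{(M)}^{-1/2}$ has orthonormal columns), then union-bound using Lemma~\ref{Lemma: MinEigenofRSSMatrix} with $\tau=1/2$ and a Chernoff bound with tilt $s=1/4$, which reproduces both summands of $V_{6,n,M}$ exactly.
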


Let $\widetilde{\M{B}}_{\min}$ minimize the objective function $\frac{1}{2}\left\lVert \bAMo^{0^{-1}} \fpr{\bB^{0}_{\cdot\,M_{\text{o}}}\bX_{M_{\text{o}}\,\cdot} - \M{B}\M{X}}\right\rVert_{\textrm{F}}^2$ subject to $\abs{\spr{j: \norm{\bB_j} \neq 0}} \leq \abs{M_\text{o}}-1$. Also, suppose  $\M{B}_{\min}$ minimizes $\left\lVert \widehat{\M{\Sigma}}_{(M_{\text{o}})}^{-1/2} \fpr{\bB_{\cdot\,M_{\text{o}}}\bX_{M_{\text{o}}\,\cdot} - \M{B}\M{X}}\right\rVert_{\textrm{F}}^2$ subject to $\abs{\spr{j: \norm{\bB_j} \neq 0}} \leq \abs{M_{\text{o}}}-1$. Then,
\begin{align*}
    \left\lVert \bAMo^{0^{-1}} \fpr{\bB^{0}_{\cdot\,M_{\text{o}}}\bX_{M_{\text{o}}\,\cdot} - \widetilde{\M{B}}_{\min}\M{X}}\right\rVert_{\textrm{F}} &\leq   \left\lVert \bAMo^{0^{-1}} \fpr{\bB^{0}_{\cdot\,M_{\text{o}}}\bX_{M_{\text{o}}\,\cdot} - \M{B}_{\min}\M{X}}\right\rVert_{\textrm{F}} \\
    &\leq \left\lVert \bAMo^{0^{-1}}\widehat{\M{\Sigma}}_{(M_{\text{o}})}^{1/2} \right\rVert_{\textrm{F}}\left\lVert \widehat{\M{\Sigma}}_{(M_{\text{o}})}^{-1/2} \fpr{\bB^{0}_{\cdot\,M_{\text{o}}}\bX_{M_{\text{o}}\,\cdot} - \M{B}_{\min}\M{X}}\right\rVert_{\textrm{F}}
\end{align*}
Now, $\left\lVert \bAMo^{0^{-1}}\widehat{\M{\Sigma}}_{(M_{\text{o}})}^{1/2} \right\rVert_{\textrm{F}}^2 = \tr\fpr{\bAMo^{0^{-1}}\widehat{\M{\Sigma}}_{(M_{\text{o}})}\bAMo^{0^{-\top}}}$. Now, $\widehat{\M{\Sigma}}_{(M_{\text{o}})} \sim \textrm{Wishart}_q\fpr{n - \abs{M_{\text{o}}}, \bVMo^0}$ implies $\bAMo^{0^{-1}}\widehat{\M{\Sigma}}_{(M_{\text{o}})}\bAMo^{0^{-T}} \sim \textrm{Wishart}_q\fpr{n - \abs{M_{\text{o}}}, \M{I}_q}$. This means, $\left\lVert \bAMo^{0^{-1}}\widehat{\M{\Sigma}}_{(M_{\text{o}})}^{1/2} \right\rVert_{\textrm{F}}^2 \sim \bigchi^2$ distribution with $q\fpr{n-\abs{M_{\text{o}}}}$ degrees of freedom.
Using triangle inequality,
\begin{align}
    \nonumber&\left\lVert \widehat{\M{\Sigma}}_{(M_{\text{o}})}^{-1/2} \fpr{\bB^{0}_{\cdot\,M_{\text{o}}}\bX_{M_{\text{o}}\,\cdot} - \M{B}_{\min}\M{X}}\right\rVert_{\textrm{F}} \\
    \nonumber &\leq \left\lVert \widehat{\M{\Sigma}}_{(M_{\text{o}})}^{-1/2} \tpr{\bB_{\cdot\,M_{\text{o}}} - \widehat{\bB}_{\cdot\,M_{\text{o}}}}\bX_{M_{\text{o}}\,\cdot}\right\rVert_{\textrm{F}} +
    \left\lVert \widehat{\M{\Sigma}}_{(M_{\text{o}})}^{-1/2} \fpr{\widehat{\bB}_{\cdot\,M_{\text{o}}} - \bB^{0}_{\cdot\,M_{\text{o}}}}\bX_{M_{\text{o}}\,\cdot}\right\rVert_{\textrm{F}} \\
&\hspace{20pt} + 
    \left\lVert \widehat{\M{\Sigma}}_{(M_{\text{o}})}^{-1/2} \fpr{\bB_{\cdot\,M_{\text{o}}}\bX_{M_{\text{o}}\,\cdot} - \M{B}_{\min}\M{X}}\right\rVert_{\textrm{F}}.
\end{align}
Then,
\begin{align*}
    &\M{I}\fpr{\left\lVert \bAMo^{0^{-1}} \fpr{\bB^{0}_{\cdot\,M_{\text{o}}}\bX_{M_{\text{o}}\,\cdot} - \widetilde{\M{B}}_{\min}\M{X}}\right\rVert_{\textrm{F}}^2 > 36q\;(n-\abs{M_{\text{o}}})\epsilon } \\
    &\leq \M{I}\fpr{\frac{1}{2}\left\lVert \widehat{\M{\Sigma}}_{(M_{\text{o}})}^{-1/2} \tpr{\bB_{\cdot\,M_{\text{o}}} - \widehat{\bB}_{\cdot\,M_{\text{o}}}}\bX_{M_{\text{o}}\,\cdot}\right\rVert_{\textrm{F}}^2 > \epsilon} + 
    \M{I}\fpr{\frac{1}{2}\left\lVert \widehat{\M{\Sigma}}_{(M_{\text{o}})}^{-1/2} \fpr{\widehat{\bB}_{\cdot\,M_{\text{o}}} - \bB^{0}_{\cdot\,M_{\text{o}}}}\bX_{M_{\text{o}}\,\cdot}\right\rVert_{\textrm{F}}^2 > \epsilon} \\
    & \qquad + 
    \M{I}\fpr{\frac{1}{2}\left\lVert \widehat{\M{\Sigma}}_{(M_{\text{o}})}^{-1/2} \fpr{\bB_{\cdot\,M_{\text{o}}}\bX_{M_{\text{o}}\,\cdot} - \M{B}_{\min}\M{X}}\right\rVert_{\textrm{F}}^2 > \epsilon} + \M{I}\fpr{\left\lVert \bAMo^{0^{-1}}\widehat{\M{\Sigma}}_{(M_{\text{o}})}^{1/2} \right\rVert_{\textrm{F}}^2 > 2q(n-\abs{M_{\text{o}}}) }.
\end{align*}
Taking expectation with respect to Fiducial distribution of $\bB_{\cdot\,M_{\text{o}}}$ given $\M{Y}$,
\begin{align}
    \nonumber &\M{I}\fpr{\left\lVert \bAMo^{0^{-1}} \fpr{\bB^{0}_{\cdot\,M_{\text{o}}}\bX_{M_{\text{o}}\,\cdot} - \widetilde{\M{B}}_{\min}\M{X}}\right\rVert_{\textrm{F}}^2 > 36q\fpr{n-\abs{M_{\text{o}}}}\epsilon } \\
    \nonumber &\leq 
    \mathbb{P}\fpr{\frac{1}{2}\left\lVert \widehat{\M{\Sigma}}_{(M_{\text{o}})}^{-1/2} \tpr{\bB_{\cdot\,M_{\text{o}}} - \widehat{\bB}_{\cdot\,M_{\text{o}}}}\bX_{M_{\text{o}}\,\cdot}\right\rVert_{\textrm{F}}^2 > \epsilon} 
    + 
    \M{I}\fpr{\frac{1}{2}\left\lVert \widehat{\M{\Sigma}}_{(M_{\text{o}})}^{-1/2} \fpr{\widehat{\bB}_{\cdot\,M_{\text{o}}} - \bB^{0}_{\cdot\,M_{\text{o}}}}\bX_{M_{\text{o}}\,\cdot}\right\rVert_{\textrm{F}}^2 > \epsilon} \\
    & \qquad 
    + \mathbb{E}\fpr{h_\epsilon\fpr{\bB_{\cdot\,M_{\text{o}}}}}
    +
    \M{I}\fpr{\left\lVert \bAMo^{0^{-1}}\widehat{\M{\Sigma}}_{(M_{\text{o}})}^{1/2} \right\rVert_{\textrm{F}}^2 > 2q(n-\abs{M_{\text{o}}}) } \label{eqn: proofEhtruemodels2}.
\end{align}
Next,
\begin{align*}
    &\mathbb{P}_y\left(\left\{ \M{I}\fpr{\frac{1}{2}\left\lVert \widehat{\M{\Sigma}}_{(M_{\text{o}})}^{-1/2} \fpr{\widehat{\bB}_{\cdot\,M_{\text{o}}} - \bB^{0}_{\cdot\,M_{\text{o}}}}\bX_{M_{\text{o}}\,\cdot}\right\rVert_{\textrm{F}}^2 > \epsilon} = 0\right\} \;\;\; \bigcap \right. \\
& \hspace{5 cm}\left. \left \{\M{I}\fpr{\left\lVert \bAMo^{0^{-1}}\widehat{\M{\Sigma}}_{(M_{\text{o}})}^{1/2} \right\rVert_{\textrm{F}}^2 > 2q(n-\abs{M_{\text{o}}}) } = 0 \right \}\right) \\
    &\geq \mathbb{P}_y\fpr{\frac{1}{2}\left\lVert \widehat{\M{\Sigma}}_{(M_{\text{o}})}^{-1/2} \fpr{\widehat{\bB}_{\cdot\,M_{\text{o}}} - \bB^{0}_{\cdot\,M_{\text{o}}}}\bX_{M_{\text{o}}\,\cdot}\right\rVert_{\textrm{F}}^2 \leq \epsilon}
    +
    \mathbb{P}_y\fpr{\left\lVert \bAMo^{0^{-1}}\widehat{\M{\Sigma}}_{(M_{\text{o}})}^{1/2} \right\rVert_{\textrm{F}}^2 \leq 2q(n-\abs{M_{\text{o}}})} - 1 \\
    &\geq 1 - \exp\fpr{-\frac{\epsilon\fpr{n-\abs{M_{\text{o}}}}\ubar{\lambda}_{v}}{4\;\bar{\lambda}_{v}} + \frac{q\abs{M_{\text{o}}}}{2}} -  \exp\fpr{-0.04(n-\abs{M_{\text{o}}})}  -  \exp\fpr{-0.15q(n-\abs{M_{\text{o}}})}, 
\end{align*}
where the first probability is obtained by an application of Lemma~\ref{Lemma: LeastSqEstDist} and the second probability is computed by the Chernoff bound for the $\chi^2$ distribution with $q(n-\abs{M_{\text{o}}})$ degrees of freedom evaluated at $1/4$. The proof is complete by applying Lemma~\ref{Lemma: FiducialDistBhat} to the first term in~(\ref{eqn: proofEhtruemodels2}). 
\end{proof}

\begin{proof}[Proof of Theorem \ref{Lemma: Ehbigmodels}]
Let, $j^* := \underset{j}{\argmin} \;\; \lVert \widehat{\bB}_{\cdot\,M,j} \rVert_2 $, where $\widehat{\bB}_{\cdot\,M,j}$ is the $j$th column of the least square coefficient matrix $\widehat{\bB}_{\cdot\,M}$ for model $M$.  Construct the model $M(-1) := M \;\backslash\; \{j^*\}$ with $\abs{M}-1$ covariates from model $M$. Suppose $\widehat{\bB}_{\cdot\,M(-1)}$ be the least-squared estimator
corresponding to the model $M(-1)$. Because $\M{B}_{\min}$ minimizes the objective function corresponding to the $h$ function,
\begin{equation*}
    \frac{1}{2}\left\lVert \widehat{\M{\Sigma}}_{(M)}^{-1/2} \fpr{\bB_{\cdot\,M}\bX_{M\,\cdot} - \M{B}_{\min}\M{X}}\right\rVert_{\textrm{F}}^2 \leq \frac{1}{2}\left\lVert \widehat{\M{\Sigma}}_{(M)}^{-1/2} \fpr{\bB_{\cdot\,M}\bX_{M\,\cdot} - \mathbb{E}_y\fpr{\widehat{\bB}_{\cdot\,M(-1)}}\bX_{M(-1)\,\cdot}}\right\rVert_{\textrm{F}}^2.
\end{equation*}
By triangle inequality for the Frobenius norm,
\begin{align}
    \nonumber \mathbb{E}\fpr{h_\epsilon(\bB_{\cdot\,M})} &= \mathbb{P}\fpr{\frac{1}{2}\left\lVert \widehat{\M{\Sigma}}_{(M)}^{-1/2} \fpr{\bB_{\cdot\,M}\bX_{M\,\cdot} - \M{B}_{\min}\M{X}}\right\rVert_{\textrm{F}}^2 \;\;\geq \epsilon} \\
    \nonumber &\leq \mathbb{P}\fpr{\frac{1}{2}\left\lVert \widehat{\M{\Sigma}}_{(M)}^{-1/2} \fpr{\bB_{\cdot\,M}\bX_{M\,\cdot} - \mathbb{E}_y\fpr{\widehat{\bB}_{\cdot\,M(-1)}}\bX_{M(-1)\,\cdot}}\right\rVert_{\textrm{F}}^2 \geq \epsilon} \\
   \nonumber &\leq  \mathrm{I}\fpr{\frac{1}{2}\left\lVert \widehat{\M{\Sigma}}_{(M)}^{-1/2} \fpr{\widehat{\bB}_{\cdot\,M}\bX_{M\,\cdot} - \mathbb{E}_y\tpr{\widehat{\bB}_{\cdot\,M}}\bX_{M\,\cdot}}\right\rVert_{\textrm{F}}^2 \;\;\geq \frac{\epsilon}{9}} \\
    \nonumber & \qquad + \mathrm{I}\fpr{\frac{1}{2}\left\lVert \widehat{\M{\Sigma}}_{(M)}^{-1/2} \fpr{\mathbb{E}_y\tpr{\widehat{\bB}_{\cdot\,M}}\bX_{M\,\cdot} - \mathbb{E}_y\fpr{\widehat{\bB}_{\cdot\,M(-1)}}\bX_{M(-1)\,\cdot}}\right\rVert_{\textrm{F}}^2 \;\;\geq \frac{\epsilon}{9}}  \\
 \nonumber & + \mathbb{P}\fpr{\frac{1}{2}\left\lVert \widehat{\M{\Sigma}}_{(M)}^{-1/2} \fpr{\bB_{\cdot\,M}\bX_{M\,\cdot} - \widehat{\bB}_{\cdot\,M}\bX_{M\,\cdot}}\right\rVert_{\textrm{F}}^2 \;\;\geq \frac{\epsilon}{9}} \\
    \nonumber &\leq  \mathrm{I}\fpr{\frac{1}{2}\left\lVert \widehat{\M{\Sigma}}_{(M)}^{-1/2} \fpr{\widehat{\bB}_{\cdot\,M}\bX_{M\,\cdot} - \mathbb{E}_y\tpr{\widehat{\bB}_{\cdot\,M}}\bX_{M\,\cdot}}\right\rVert_{\textrm{F}}^2 \;\;\geq \frac{\epsilon}{9}} \\
\nonumber & + \mathrm{I}\fpr{\frac{1}{2}\left \lVert \widehat{\M{\Sigma}}_{(M)}^{-1/2} \bB_{\cdot\,M_{\text{o}}}^0 \bX_{M_{\text{o}}\,\cdot}\fpr{\bH_{(M)} - \M{H}_{(M)(-1)}} \right \rVert_{\textrm{F}}^2 \;\;\geq \frac{\epsilon}{9}} \\
&+  \exp\fpr{-\frac{\epsilon\fpr{n - \abs{M}}}{36} +  \frac{q\abs{M}}{2}} + 2\exp\fpr{-\frac{\fpr{\sqrt{n - \abs{M}}-2\sqrt{q}}^2}{8}}  , \label{eqn: proofEhbigmodels1}
\end{align}
where the first probability is computed by Lemma \ref{Lemma: FiducialDistBhat}. Next, 
{\small
\begin{align}
    \nonumber & \mathbb{P}_y \left(\spr{\left\lVert \widehat{\M{\Sigma}}_{(M)}^{-1/2} \fpr{\widehat{\bB}_{\cdot\,M}\bX_{M\,\cdot} - \mathbb{E}\tpr{\widehat{\bB}_{\cdot\,M}}\bX_{M\,\cdot}}\right\rVert_{\textrm{F}}^2 \;\; < \frac{2\epsilon}{9}} \;\;\bigcap\;\; \right. \\
 \nonumber &\hspace{5 cm} \left. \spr{\left \lVert \widehat{\M{\Sigma}}_{(M)}^{-1/2} \bB_{\cdot\,M_{\text{o}}}^0 \bX_{M_{\text{o}}\,\cdot}\fpr{\bH_{(M)} - \M{H}_{(M)(-1)}} \right \rVert_{\textrm{F}}^2 \;\; < \frac{2\epsilon}{9}} \right)  \\
\nonumber   & \geq 1 -  \mathbb{P}_y\fpr{\frac{1}{2}\left\lVert \widehat{\M{\Sigma}}_{(M)}^{-1/2} \tpr{\widehat{\bB}_{\cdot\,M} - \mathbb{E}\fpr{\widehat{\bB}_{\cdot\,M}}}\bX_{M\,\cdot}\right\rVert_{\textrm{F}}^2 \geq \frac{\epsilon}{9}} \\
& \hspace{5cm} - \mathbb{P}_y\fpr{\frac{1}{2}\left \lVert \widehat{\M{\Sigma}}_{(M)}^{-1/2} \bB_{\cdot\,M_{\text{o}}}^0 \bX_{M_{\text{o}}\,\cdot}\fpr{\bH_{(M)} - \M{H}_{(M)(-1)}} \right \rVert_{\textrm{F}}^2 \geq \frac{\epsilon}{9}}. \label{eqn: proofEhbigmodels2}
\end{align}
}
By Condition~\ref{cond: lowerboundepsilon} and putting $\tau = 1/2$ in Lemma~\ref{Lemma: MinEigenofRSSMatrix},
\begin{align}
    \nonumber &\mathbb{P}_y\fpr{\frac{1}{2}\left \lVert \widehat{\M{\Sigma}}_{(M)}^{-1/2} \bB_{\cdot\,M_{\text{o}}}^0 \bX_{M_{\text{o}}\,\cdot}\fpr{\bH_{(M)} - \M{H}_{(M)(-1)}} \right \rVert_{\textrm{F}}^2 \geq \frac{\epsilon}{9}} \\ 
    \nonumber &\leq \mathbb{P}_y\fpr{\lambda_{\min}\fpr{\widehat{\M{\Sigma}}_{(M)}}\leq \frac{9}{2\epsilon}\left \lVert \bB_{\cdot\,M_{\text{o}}}^0 \bX_{M_{\text{o}}\,\cdot}\fpr{\bH_{(M)} - \M{H}_{(M)(-1)}} \right \rVert_{\textrm{F}}^2} \\
     &\leq \mathbb{P}_y\fpr{\lambda_{\textrm{min}}\fpr{\widehat{\M{\Sigma}}_{(M)}} \leq  \frac{\ubar{\lambda}_v(n-\abs{M})}{2}} 
     \leq e^{-0.04(n-\abs{M})}. \label{eqn: proofEhbigmodels3}
\end{align}
Bounding the first probability in equation~(\ref{eqn: proofEhbigmodels2}) by Lemma~\ref{Lemma: LeastSqEstDist} and combining equation~(\ref{eqn: proofEhbigmodels1}),(\ref{eqn: proofEhbigmodels2}), and (\ref{eqn: proofEhbigmodels3}),
\small{\begin{equation*}
    \mathbb{P}_y\tpr{\mathbb{E}\fpr{h_\epsilon\fpr{\bB_{\cdot\,M}}} \geq \exp\fpr{-\frac{\epsilon\fpr{n - \abs{M}}}{36} +  \frac{q\abs{M}}{2}} + 2\exp\fpr{-\frac{1}{8}\spr{\sqrt{n - \abs{M}}-2\sqrt{q}}^2}} 
    \leq V_{7,n,M} \hspace{50 mm}
\end{equation*} }
where 
\begin{align*}
    V_{7,n,M} &:=  \exp\fpr{-\frac{\epsilon\fpr{n-\abs{M}}\ubar{\lambda}_v}{36\;\bar{\lambda}_v} + \frac{q\abs{M}}{2}}  + 2\exp\fpr{-0.04(n-\abs{M})}
\end{align*}
Finally,
\begin{align*}
        &\mathbb{P}_y\tpr{\bigcup_{\substack{M \not\subset M_{\text{o}} \\ \abs{M} \leq n^\alpha}} \spr{\bY:\mathbb{E}\fpr{h_\epsilon\fpr{\bB_{\cdot\,M}}} \geq \exp\fpr{-\frac{\epsilon\fpr{n - \abs{M}}}{36} +  \frac{q\abs{M}}{2}} + 2\exp\fpr{-\frac{1}{8}\spr{\sqrt{n - \abs{M}}-2\sqrt{q}}^2}}} 
        \\ &\leq   \sum_{j=1}^{n^\alpha} \sum_{\substack{M: \abs{M}=j \\ M \not \subseteq M_{\text{o}}}} \spr{\exp\fpr{-\frac{\epsilon\fpr{n-\abs{M}}\ubar{\lambda}_v}{36\;\bar{\lambda}_v} + \frac{q\abs{M}}{2}} + 2\exp\fpr{-0.04(n-\abs{M})} } \\ 
    &\leq  \sum_{j=1}^{n^\alpha}  \spr{\exp\fpr{-\frac{\epsilon\fpr{n-j}\ubar{\lambda}_v}{36\;\bar{\lambda}_v} + \frac{qj}{2} + j\log p} + 2\exp\fpr{ -0.04(n-j) + j\log p}} \\  
    &\leq    n^\alpha\spr{\exp\fpr{-\frac{\epsilon\fpr{n-n^\alpha}\ubar{\lambda}_v}{36\;\bar{\lambda}_v} + \frac{qn^\alpha}{2} + n^\alpha\log p} + 2\exp\fpr{ -0.04(n-n^{\alpha}) + n^\alpha\log p}}. 
    \end{align*} 
This concludes the proof of the theorem.
\end{proof}

\begin{proof}[Proof of Lemma \ref{Lemma: FiducialDistBhat}]
 Defining $\widetilde{\bB}_{\cdot\,M}=\widehat{\M{\Sigma}}_{(M)}^{-1/2} \fpr{\bB_{\cdot\,M} - \widehat{\bB}_{\cdot\,M}}$ and $\M{\Omega}_{(M)} = \bX_{M\,\cdot}\bX_{M\,\cdot}^\top$, observe that,
\begin{align*}
    \left\lVert \widehat{\M{\Sigma}}_{(M)}^{-1/2} \fpr{\bB_{\cdot\,M}\bX_{M\,\cdot} - \widehat{\bB}_{\cdot\,M}\bX_{M\,\cdot}}\right\rVert_{\textrm{F}}^2 &= \left\lVert \widehat{\M{\Sigma}}_{(M)}^{-1/2} \fpr{\bB_{\cdot\,M} - 
 \widehat{\bB}_{\cdot\,M}}\bX_{M\,\cdot}\right\rVert_{\textrm{F}}^2 \\ 
    &= \tr\fpr{\widetilde{\bB}_{\cdot\,M} \M{\Omega}_{(M)} \widetilde{\bB}_{\cdot\,M}^\top} \\
    &=\left\lVert \widehat{\M{\Sigma}}_{(M)}^{-1/2} \fpr{\bB_{\cdot\,M} - \widehat{\bB}_{\cdot\,M}}\M{\Omega}_{(M)}^{1/2}\right\rVert_{\textrm{F}}^2.
\end{align*}
By the property of Matrix-t distribution, (see Theorem 4.3.5, page 137 of \cite{gupta2018matrix_sub})
\begin{equation*}
    \widehat{\M{\Sigma}}_{(M)}^{-1/2} \fpr{\bB_{\cdot\,M} - \widehat{\bB}_{\cdot\,M}}\M{\Omega}_{(M)}^{1/2} \sim T_{q,M}\fpr{n-\abs{M}-q+1,\M{0},  \M{I}_q, \M{I}_{\abs{M}}}.
\end{equation*}
Additionally, by Theorem $4.2.1$ (page $134$) of \cite{gupta2018matrix_sub},
\begin{equation*}
    \widehat{\M{\Sigma}}_{(M)}^{-1/2} \fpr{\bB_{\cdot\,M} - \widehat{\bB}_{\cdot\,M}}\M{\Omega}_{(M)}^{1/2} \overset{d}{=} \fpr{\M{W}^{-1/2}}^\top\M{Z},
\end{equation*}
where $\M{W} \sim \text{Wishart}_q\fpr{n-\abs{M}, \M{I}_q}$ and $\M{Z} \sim \text{Matrix-Normal}\fpr{\M{0}, \M{I}_q, \M{I}_{\abs{M}}}$. Then,
\begin{align*}
    \left\lVert \widehat{\M{\Sigma}}_{(M)}^{-1/2} \fpr{\bB_{\cdot\,M}\bX_{M\,\cdot} - \widehat{\bB}_{\cdot\,M}\bX_{M\,\cdot}}\right\rVert_{\textrm{F}}^2 &= \tr\fpr{\M{Z}^\top\M{W}^{-1/2}\M{W}^{{-1/2}^\top}\M{Z}}
    \leq \lambda_{\min}^{-1}\fpr{\M{W}} \tr\fpr{\M{Z}\M{Z}^\top},
\end{align*}
where $\lambda_{\min}\fpr{\M{W}} > 0$ with probability $1$ as $n-\abs{M} > q$. Since, $\tr\fpr{\M{Z}\M{Z}^\top} \sim \chi^2_{q\abs{M}}$,
\begin{align*}
     &\mathbb{P}\fpr{\frac{1}{2}\left\lVert \widehat{\M{\Sigma}}_{(M)}^{-1/2} \fpr{\bB_{\cdot\,M}\bX_{M\,\cdot} - \widehat{\bB}_{\cdot\,M}\bX_{M\,\cdot}}\right\rVert_{\textrm{F}}^2 \;\;\geq \frac{\epsilon}{9}} \\
     &\leq \mathbb{P}\fpr{\frac{\tr\fpr{\M{Z}\M{Z}^\top}}{\lambda_{\min}\fpr{\M{W}}}  \;\geq \frac{2\epsilon}{9}} \hspace{70 mm} \\
     &\leq \mathbb{P}\fpr{\frac{\tr\fpr{\M{Z}\M{Z}^\top}}{\lambda_{\min}\fpr{\M{W}}}  \;\geq \frac{2\epsilon}{9} \;,\;  \lambda_{\min}\fpr{\M{W}} \geq \frac{\fpr{n - \abs{M}}}{4}}  \\
& \hspace{2 cm} +   \mathbb{P}\fpr{\frac{\tr\fpr{\M{Z}\M{Z}^\top}}{\lambda_{\min}\fpr{\M{W}}}  \;\geq \frac{2\epsilon}{9} \;\;,\;\;  \lambda_{\min}\fpr{\M{W}} < \frac{\fpr{n - \abs{M}}}{4}} \\
     &\leq  \mathbb{P}\fpr{\tr\fpr{\M{Z}\M{Z}^\top} \;\geq \frac{\epsilon\fpr{n - \abs{M}}}{18}} + \mathbb{P}\fpr{\lambda_{\min}\fpr{\M{W}} < \frac{1}{4}\fpr{n - \abs{M}}} \\
     &\leq \mathbb{P}\fpr{\chi^2_{q\abs{M}} \;\geq \frac{\epsilon\fpr{n - \abs{M}}}{18}} + \mathbb{P}\fpr{\lambda_{\min}\fpr{\M{W}} < \frac{1}{4}\fpr{n - \abs{M}}} \\
     &\leq \exp\fpr{-\frac{\epsilon\fpr{n - \abs{M}}}{72} +  \frac{q\abs{M}}{2}} + 2\exp\fpr{-\frac{\fpr{\sqrt{n - \abs{M}}-2\sqrt{q}}^2}{8}},
\end{align*}
where the first probability in the second to last line is obtained by using the Chernoff's bound for $\chi^2$ distribution and and the second probability by substituting $t = (1/2)\sqrt{n-\abs{M}} - \sqrt{q} > 0$ (by the condition on $\abs{M}$ in the statement of the lemma) in the corollary $5.35$ (page $21$) of \cite{vershynin2010introduction_sub} and noting that any central $\textrm{Wishart}_q\fpr{n-\abs{M}, \M{I}_q}$ matrix is identically distributed as $\widetilde{\M{Z}}^{\top}\widetilde{\M{Z}}$ where $\widetilde{\M{Z}} \sim \textrm{Matrix-Normal}_{n-\abs{M},q}\fpr{0, \M{I}_{n-\abs{M}}, \M{I}_q}$, provided $n-\abs{M} > q$ which is again true by the specified condition in the lemma. This completes the proof. 
\end{proof}

\begin{proof}[Proof of Lemma \ref{Lemma: LeastSqEstDist}]
Defining, $\M{\Omega}_{(M)} = \bX_{M\,\cdot}\bX_{M\,\cdot}^\top$, the quadratic form can be alternatively written as
\begin{align*}
    &\left\lVert \widehat{\M{\Sigma}}_{(M)}^{-1/2} \tpr{\widehat{\bB}_{\cdot\,M}\bX_{M\,\cdot} - \mathbb{E}_y\fpr{\widehat{\bB}_{\cdot\,M}}\bX_{M\,\cdot}}\right\rVert_{\textrm{F}}^2 \\
    &= \tr\tpr{\widehat{\M{\Sigma}}_{(M)}^{-1} \fpr{\widehat{\bB}_{\cdot\,M} - \mathbb{E}_y\fpr{\widehat{\bB}_{\cdot\,M}}}\M{\Omega}_{(M)} \fpr{\widehat{\bB}_{\cdot\,M} - \mathbb{E}_y\fpr{\widehat{\bB}_{\cdot\,M}}}^\top} \\
    &\leq \lambda_{\textrm{min}}^{-1}\fpr{\widehat{\M{\Sigma}}_{(M)}}\bar{\lambda}_v \left\lVert \bAMo^{0^{-1}}\fpr{\widehat{\bB}_{\cdot\,M} - \mathbb{E}_y\fpr{\widehat{\bB}_{\cdot\,M}}}\M{\Omega}_{(M)}^{1/2} \right\rVert_{\textrm{F}}^2. 
\end{align*}
Since, $\bAMo^{0^{-1}}\fpr{\widehat{\bB}_{\cdot\,M} - \mathbb{E}_y\fpr{\widehat{\bB}_{\cdot\,M}}}\M{\Omega}_{(M)}^{1/2} \sim \textrm{Matrix-Normal}_{q,\abs{M}}\fpr{\M{0}, \M{I}_q, \M{I}_{\abs{M}}}$,
\begin{equation*}
\left\lVert \bAMo^{0^{-1}}\fpr{\widehat{\bB}_{\cdot\,M} - \mathbb{E}_y\fpr{\widehat{\bB}_{\cdot\,M}}}\M{\Omega}_{(M)}^{1/2} \right\rVert_{\textrm{F}}^2 \sim \chi^2_{q\abs{M}}.
\end{equation*}
Choosing $\tau = 1/2$ in Lemma~\ref{Lemma: MinEigenofRSSMatrix},
\begin{align*}
    \mathbb{P}_y\fpr{\lambda_{\min}\fpr{\widehat{\M{\Sigma}}_{(M)}} < \ubar{\lambda}_v(n-\abs{M})/2 } \leq \exp\fpr{-0.4(n-\abs{M})}.
\end{align*}
Finally,
\begin{align*}
     &\mathbb{P}_y\fpr{\frac{1}{2}\left\lVert \widehat{\M{A}}_{(M)}^{-1} \tpr{\widehat{\bB}_{\cdot\,M} - \mathbb{E}\fpr{\widehat{\bB}_{\cdot\,M}}}\bX_{M\,\cdot}\right\rVert_{\textrm{F}}^2 \geq \frac{\epsilon}{9}}\\
     &\leq 
     \mathbb{P}_y\fpr{\lambda_{\textrm{min}}^{-1}\fpr{\widehat{\M{\Sigma}}_{(M)}}\bar{\lambda}_v \left\lVert \bVMo^{0^{-1/2}}\fpr{\widehat{\bB}_{\cdot\,M} - \mathbb{E}_y\fpr{\widehat{\bB}_{\cdot\,M}}}\M{\Omega}_{(M)}^{1/2} \right\rVert_{\textrm{F}}^2 \geq \frac{2\epsilon}{9}} \qquad \qquad \qquad \hspace{20 mm} \\
     &\leq \mathbb{P}_y\fpr{\frac{\bar{\lambda}_v\bigchi^2_{q\abs{M}}}{\lambda_{\textrm{min}}\fpr{\widehat{\M{\Sigma}}_{(M)}}}  \geq  \frac{2\epsilon}{9}} \\
     &\leq \mathbb{P}_y\fpr{\frac{\bar{\lambda}_v\bigchi^2_{q\abs{M}}}{\lambda_{\textrm{min}}\fpr{\widehat{\M{\Sigma}}_{(M)}}}  \geq  \frac{2\epsilon}{9}, \;\;\lambda_{\textrm{min}}\fpr{\widehat{\M{\Sigma}}_{(M)}} > \frac{\ubar{\lambda}_v(n-\abs{M})}{2}}  \\
  & \hspace{2 cm} + \mathbb{P}_y\fpr{\lambda_{\textrm{min}}\fpr{\widehat{\M{\Sigma}}_{(M)}} \leq  \frac{\ubar{\lambda}_v(n-\abs{M})}{2}} \\ 
     &\leq \mathbb{P}_y\fpr{\bigchi^2_{q\abs{M}}  \geq  \frac{\epsilon\fpr{n-\abs{M}}\ubar{\lambda}_v}{9\bar{\lambda}_v}} + \mathbb{P}_y\fpr{\lambda_{\textrm{min}}\fpr{\widehat{\M{\Sigma}}_{(M)}} \leq  \frac{\ubar{\lambda}_v(n-\abs{M})}{2}} \\ \\
     &\leq \exp\fpr{-\frac{\epsilon\fpr{n-\abs{M}}\ubar{\lambda}_v}{36\;\bar{\lambda}_v} + \frac{q\abs{M}}{2}}  + \exp\fpr{-0.04(n-\abs{M})},
    \end{align*}
    where the first quantity in the last line is obtained by Chernoff's bound for chi-square distribution. This completes the proof of Lemma \ref{Lemma: LeastSqEstDist}. 
\end{proof}

\begin{proof}[Proof of Theorem \ref{Lemma: Mainresult}]
The statement of the theorem is equivalent to showing that,
\begin{equation}\label{eqn: maintheoremequivform}
   \sum_{\substack{M: \abs{M} \leq n^\alpha \\ M \neq M_{\text{o}}}} \frac{r_\epsilon\fpr{M \mid \M{Y}}}{r_\epsilon\fpr{M_{\text{o}} \mid \M{Y}}} \overset{\mathbb{P}_y}{\longrightarrow} 0,
\end{equation}
as $n \to \infty$ or $n, p \to \infty$.  To show this, observe that the ratio in sum has the form,
\begin{align*}
    \frac{r_\epsilon(M \mid \M{Y})}{r_\epsilon(M_{\text{o}} \mid \M{Y})} &=\pi^{\frac{q(\abs{M}-\abs{M_{\text{o}}})}{2}} \frac{\Gamma_q\fpr{\frac{n-\abs{M}}{2}}}{\Gamma_q\fpr{\frac{n-\abs{M_{\text{o}}}}{2}}} 
   \tpr{\frac{\det \sighat{(M_{\text{o}})}}{\det \sighat{(M)}}}^{\frac{n- \abs{M}- q}{2}} \times \\
& \hspace{5 cm} \fpr{\det \sighat{(M_{\text{o}})}}^{\frac{\abs{M}-\abs{M_{\text{o}}}}{2}}
 \frac{\mathbb{E}(h_\epsilon(\bB_{\cdot\,M}))}{\mathbb{E} (h_\epsilon(\bB_{\cdot\,M_{\text{o}}}))} .
\end{align*}
The multivariate gamma function is defined as the product of univariate gamma functions, and so, using the gamma function inequalities in \cite{jameson2013inequalities}, the ratio of the multivariate gamma functions is bounded by,
\begin{align*}
    \frac{\Gamma_q\fpr{\frac{n-\abs{M}}{2}}}{\Gamma_q\fpr{\frac{n-\abs{M_{\text{o}}}}{2}}} &=
        \prod_{j=1}^q \Gamma\fpr{\frac{n - \abs{M} - j + 1}{2}}\tpr{\Gamma\fpr{\frac{n - \abs{M_{\text{o}}} - j + 1}{2}}}^{-1}  \\
    &\leq \begin{cases}
        \prod_{j=1}^q \fpr{\frac{n - \abs{M_{\text{o}}} - j + 1}{2}} \fpr{\frac{n - \abs{M} - j + 1}{2}}^{\frac{\abs{M_{\text{o}}}-\abs{M}}{2}-1} \quad &\text{if} \abs{M_{\text{o}}} \geq \abs{M} \\ \\
        \prod_{j=1}^q  \fpr{\frac{n - \abs{M} - j + 1}{2}-1}^{\frac{\abs{M_{\text{o}}}-\abs{M}}{2}} \quad &\text{otherwise}
    \end{cases}
\end{align*}
Moreover, for the true model $M_{\text{o}}$, because $\widehat{\M{\Sigma}}_{(M_{\text{o}})} \sim \textrm{Wishart}_q\fpr{n - \abs{M_{\text{o}}},\bVMo^0}$,
\begin{equation*}
    \frac{\det \sighat{(M_{\text{o}})}}{\det \bVMo^{0}} \sim \prod_{j=1}^q \bigchi^2_{n-\abs{M_{\text{o}}}-j+1},
\end{equation*}
by Theorem 3.2.15 of \cite{muirhead2009aspects}. Applying the Chernoff bound for the chi-square distribution and using the sub-additivity property of the probability measure,
\begin{equation*}
    \mathbb{P}_y\fpr{\det \sighat{(M_{\text{o}})} > \prod_{j=1}^q 3\bar{\lambda}_v\fpr{n-\abs{M_{\text{o}}}-j+1}} \leq q\exp\fpr{-\frac{n-\abs{M_{\text{o}}}-q+1}{4}} := V_{5,n}.
\end{equation*}

Expanding the quantity in~(\ref{eqn: maintheoremequivform}), 
\begin{align}
 \nonumber \sum_{\substack{M: \abs{M} \leq n^\alpha \\ M \neq M_{\text{o}}}} \frac{r_\epsilon\fpr{M \mid \M{Y}}}{r_\epsilon\fpr{M_{\text{o}} \vert \M{Y}}} &= \sum_{\substack{M: M \subsetneq M_{\text{o}}}} \frac{r_\epsilon\fpr{M \mid \M{Y}}}{r_\epsilon\fpr{M_{\text{o}} \mid \M{Y}}} + \sum_{\substack{M: \abs{M} \leq n^\alpha \\ M \not \subseteq M_{\text{o}}}} \frac{r_\epsilon\fpr{M \mid \M{Y}}}{r_\epsilon\fpr{M_{\text{o}} \mid \M{Y}}} \\
  &=  \underbrace{\sum_{j=1}^{\abs{M_{\text{o}}}} \sum_{\substack{M: \abs{M}=j \\ M \subsetneq M_{\text{o}}}} \frac{r_\epsilon\fpr{M \mid \M{Y}}}{r_\epsilon\fpr{M_{\text{o}} \mid \M{Y}}}}_{ :=\;\; T_1}  + \underbrace{\sum_{j=1}^{n^\alpha} \sum_{\substack{M: \abs{M}=j \\ M \not \subseteq M_{\text{o}}}} \frac{r_\epsilon\fpr{M \mid \M{Y}}}{r_\epsilon\fpr{M_{\text{o}} \mid \M{Y}}}}_{ :=\;\; T_2}. \label{eqn: maintheoremexpansion} 
\end{align}
Denote the two terms on the right side as $T_1$ and $T_2$, respectively. First consider $T_1$. By Theorem~\ref{Lemma: Ehtruemodel}, with probability exceeding $1-V_{3,n}$, $\mathbb{E}(h_\epsilon(\bB_{\cdot\,M_{\text{o}}}))$ is bounded from below by $1-g_n(M_{\text{o}}, \epsilon)$ with,
$$
g_n(M_{\text{o}}, \epsilon) := \exp\fpr{-\frac{\epsilon\fpr{n - \abs{M_{\text{o}}}}}{36} + \frac{q\abs{M_{\text{o}}}}{2}} + 2\exp\fpr{-\frac{1}{8}\spr{\sqrt{n - \abs{M_{\text{o}}}}-2\sqrt{q}}^2}.
$$
Since the quantity $g_n(M_{\text{o}}, \epsilon)$ vanishes as $n \to \infty$, for sufficiently large $n$, $g_n(M_{\text{o}}, \epsilon) < K$ for some $K \in (0,1)$. Bounding the ratio of the determinants of the residual matrices by Case 1 of Lemma \ref{Lemma: RatioofRSS}, choosing $n$ large enough so that $\mathbb{E}(h_\epsilon(\bB_{\cdot\,M_{\text{o}}})) > 1-K$, and bounding the $\mathbb{E}(h_\epsilon(\bB_{\cdot\,M}))$ for any model $M \not\subseteq M_{\text{o}}$ by $1$,
\begin{align*}
    \frac{r_\epsilon(M \mid \M{Y})}{r_\epsilon(M_{\text{o}} \mid \M{Y})} &\leq \frac{e^{-qn^\alpha\log\abs{M_{\text{o}}}}}{1-K}\prod_{j=1}^q \frac{\fpr{\frac{n - \abs{M_{\text{o}}} - j + 1}{2}} \fpr{3\pi\bar{\lambda}_v \fpr{n-\abs{M_{\text{o}}}-j + 1}}^{\frac{ \abs{M}-\abs{M_{\text{o}}}}{2}}}{\fpr{\frac{n - \abs{M} - j + 1}{2}}^{1-\frac{ \abs{M_{\text{o}}}-\abs{M}}{2}}} \\
    &\leq \frac{e^{-qn^\alpha\log\abs{M_{\text{o}}}}}{1-K}\prod_{j=1}^q \tpr{\frac{n - \abs{M} - j + 1}{6\pi\bar{\lambda}_v\fpr{n - \abs{M_{\text{o}}} - j + 1}}}^{\frac{ \abs{M_{\text{o}}}-\abs{M}}{2}}\fpr{\frac{n - \abs{M_{\text{o}}} - j + 1}{n - \abs{M} - j + 1}} \\
    &\leq \frac{e^{-qn^\alpha\log\abs{M_{\text{o}}}}}{1-K}\prod_{j=1}^q \tpr{\frac{n }{6\pi\bar{\lambda}_v\fpr{n - \abs{M_{\text{o}}} - j + 1}}}^{\frac{ \abs{M_{\text{o}}}-\abs{M}}{2}} \\
     &\leq \frac{1}{1-K}e^{-qn^\alpha\log\abs{M_{\text{o}}} - q\log(\bar{\lambda}_v)(\abs{M_{\text{o}}}-\abs{M})/2},
\end{align*}
where the last inequality holds for $n > \frac{\abs{M_{\text{o}}}+q}{1-1/6\pi}$. This implies that with probability exceeding $1-V_{1,n} - V_{3,n} - V_{5,n}$,
\begin{align*}
    T_1 &= \sum_{j=1}^{\abs{M_{\text{o}}}} \sum_{\substack{M: \abs{M}=j \\ M \subsetneq M_{\text{o}}}} \frac{r_\epsilon\fpr{M \mid \M{Y}}}{r_\epsilon\fpr{M_{\text{o}} \mid \M{Y}}}  \leq \sum_{j=1}^{\abs{M_{\text{o}}}} \binom{\abs{M_{\text{o}}}}{j} \underset{M \subsetneq M_{\text{o}}; \abs{M}=j}{\max}\frac{r_\epsilon(M \mid \M{Y})}{r_\epsilon(M_{\text{o}} \mid \M{Y})} \\
     &\leq\sum_{j=1}^{\abs{M_{\text{o}}}}  \fpr{1-K}^{-1}\exp\fpr{-qn^\alpha\log\abs{M_{\text{o}}} - q\log(\bar{\lambda}_v)(\abs{M_{\text{o}}}-j)/2 + j\log \abs{M_{\text{o}}}} \\
    &\leq \fpr{1-K}^{-1}\abs{M_{\text{o}}}\exp\fpr{-qn^\alpha\log\abs{M_{\text{o}}} + \abs{M_{\text{o}}}\log\abs{M_{\text{o}}} -q\mathrm{I}(\bar{\lambda}_v < 1)\log(\bar{\lambda}_v)\abs{M_{\text{o}}}/2}.
\end{align*}
By Condition~\ref{cond: supelementsB}, $1-V_{1,n}-V_{3,n}-V_{5,n} \to 1$ as $n \to \infty$ or $n, p \to \infty$, and so $T_1 \to 0$ in probability as $n \to \infty$ or $n, p \to \infty$.

Next, consider $T_2$. Note that for models $M$ such that $M \not\subset M_{\text{o}}$ and $|M| \le n^{\alpha}$, By Theorem~\ref{Lemma: Ehbigmodels}, for large $n$, with probability exceeding $1-V_{4,n}$, $\mathbb{E}(h_{\epsilon}(\bB_{\cdot\,M})) \leq \tilde{g}_n(\epsilon,M)$ where,
$$
\tilde{g}_n(\epsilon,M) := \exp\fpr{-\frac{\epsilon\fpr{n - \abs{M}}}{36} +  \frac{q\abs{M}}{2}} + 2\exp\fpr{-\frac{1}{8}\spr{\sqrt{n - \abs{M}}-2\sqrt{q}}^2}.
$$ 
Bounding the ratio of the determinants of the residual matrices as in Case 2 of Theorem \ref{Lemma: RatioofRSS}, and choosing $n$ large enough so that $\mathbb{E}(h_{\epsilon}(\bB_{\cdot\,M_{\text{o}}})) > 1-K$ with high probability, for all $M \not \subseteq M_{\text{o}}$ with $|M| \le n^{\alpha}$,
\begin{align*}
    \frac{r_\epsilon(M \mid \M{Y})}{r_\epsilon(M_{\text{o}} \mid \M{Y})} 
    &\leq \begin{cases}
            \tpr{\frac{n }{6\pi\bar{\lambda}_v\fpr{n - \abs{M_{\text{o}}} - q}}}^{\frac{ q(\abs{M_{\text{o}}}-\abs{M})}{2}}\frac{\tilde{g}_n(\epsilon,M)}{1-K} e^{q\fpr{n^{\alpha}\log\fpr{n-\abs{M_{\text{o}}}}+\abs{M}\log p}} \;\; &\textrm{if} \;\; \abs{M} \leq \abs{M_{\text{o}}} \\ \\
    \fpr{\frac{6\pi\bar{\lambda}_v \fpr{n - \abs{M_{\text{o}}}}}{n-\abs{M}-q-1}}^{q\fpr{\frac{ \abs{M}-\abs{M_{\text{o}}}}{2}}}\frac{\tilde{g}_n(\epsilon,M)}{1-K} e^{q\fpr{n^{\alpha}\log\fpr{n-\abs{M_{\text{o}}}}+\abs{M}\log p}} \;\; &\textrm{if} \;\; \text{otherwise} 
    \end{cases}
\end{align*}
Then,
\[
T_2 = \underbrace{\sum_{j=1}^{\abs{M_{\text{o}}}} \sum_{\substack{M: \abs{M}=j \\ M \not \subseteq M_{\text{o}}}} \frac{r_\epsilon\fpr{M \mid \M{Y}}}{r_\epsilon\fpr{M_{\text{o}} \mid \M{Y}}}}_{ := \;\; T_{21}} \;\;+
\underbrace{\sum_{j=\abs{M_{\text{o}}}+1}^{n^\alpha} \sum_{\substack{M: \abs{M}=j \\ M \not \subseteq M_{\text{o}}}} \frac{r_\epsilon\fpr{M \mid \M{Y}}}{r_\epsilon\fpr{M_{\text{o}} \mid \M{Y}}}}_{ := \;\; T_{22}}
\]
Consider $T_{21}$ and $T_{22}$ separately.  For sufficiently large $n$, using $\tilde{g}_n(\epsilon,M)$ to bound $\mathbb{E}(h_{\epsilon}(\bB_{\cdot\,M}))$, with probability exceeding $1-V_{2,n} - V_{3,n} - V_{4,n} - V_{5,n}$,
\small{\begin{align*}
    T_{21} &\leq \sum_{j=1}^{\abs{M_{\text{o}}}} \binom{p}{j} \underset{M \not\subseteq M_{\text{o}}; \abs{M}=j}{\max}\;\frac{r_\epsilon(M \mid \M{Y})}{r_\epsilon(M_{\text{o}} \mid \M{Y})} \\
    &\leq \frac{2}{1-K}\sum_{j=1}^{\abs{M_{\text{o}}}} \exp\fpr{-\frac{n-j}{8} -q\log(\bar{\lambda}_v)(\abs{M_{\text{o}}}-j)/2  + qn^{\alpha}\log\fpr{n-\abs{M_{\text{o}}}}+ (q+1)j\log p} \\
    &  +  \frac{1}{1-K}\sum_{j=1}^{\abs{M_{\text{o}}}} \exp\fpr{-\frac{\epsilon(n-j)}{36} +  \frac{qj}{2} -q\log(\bar{\lambda}_v)(\abs{M_{\text{o}}}-j)/2  + qn^{\alpha}\log\fpr{n-\abs{M_{\text{o}}}}+ (q+1)j\log p} \\
    &\leq \frac{\abs{M_{\text{o}}}}{1-K} \left\{2 \exp\fpr{-\frac{n-\abs{M_{\text{o}}}}{8} - \frac{q\abs{M_{\text{o}}}}{2}\mathrm{I}(\bar{\lambda}_v < 1)\log(\bar{\lambda}_v) + qn^{\alpha}\log\fpr{n-\abs{M_{\text{o}}}}+ (q+1)\abs{M_{\text{o}}}\log p} \right. \\
    & \left. + \exp\left(-\frac{\epsilon(n-\abs{M_{\text{o}}})}{36} +qn^{\alpha}\log\fpr{n-\abs{M_{\text{o}}}}+ (q+1)\abs{M_{\text{o}}}\log p  + \frac{q\abs{M_{\text{o}}}}{2}\{1-\mathrm{I}(\bar{\lambda}_v < 1)\log(\bar{\lambda}_v)\} \right) \right\}.
\end{align*}}

For $T_{22}$, because  $(n-\abs{M_{\text{o}}}/(n-\abs{M}-q-1)$ converges to $1$ for all $M$ with $\abs{M} \leq n^\alpha$, we can choose $n$ large enough so that $(n-\abs{M_{\text{o}}}/(n-\abs{M}-q-1) < 2$.  Then, again using the bound $\tilde{g}_n(\epsilon,M)$, we obtain that for all $M \not \subseteq M_{\text{o}}$ such that $\abs{M_{\text{o}}} \leq \abs{M} \leq n^\alpha$, with probability exceeding $1-V_{2,n} - V_{3,n} - V_{4,n} - V_{5,n}$,
\small{\begin{align*}
    T_{22} := &\sum_{j=\abs{M_{\text{o}}}+1}^{n^\alpha} \sum_{\substack{M: \abs{M}=j \\ M \not \subseteq M_{\text{o}}}} \frac{r_\epsilon\fpr{M \mid \M{Y}}}{r_\epsilon\fpr{M_{\text{o}} \mid \M{Y}}} \\
    &\leq \sum_{j=1}^{n^{\alpha}} \binom{p}{j} \underset{M \not\subseteq M_{\text{o}}; \abs{M}=j}{\max}\;\frac{r_\epsilon(M \mid \M{Y})}{r_\epsilon(M_{\text{o}} \mid \M{Y})} \\
    &\leq \frac{2}{1-K}\sum_{j=1}^{n^{\alpha}} \exp\fpr{-\frac{n-j}{8} + \frac{qj}{2}\log\fpr{12\pi\bar{\lambda}_v} + qn^{\alpha}\log\fpr{n-\abs{M_{\text{o}}}}+ (q+1)j\log p} \\
    & +  \frac{1}{1-K} \sum_{j=1}^{n^{\alpha}} \exp\fpr{-\frac{\epsilon(n-j)}{36} +  \frac{qj}{2}(1+\log\fpr{12\pi\bar{\lambda}_v}) + qn^{\alpha}\log\fpr{n-\abs{M_{\text{o}}}}+ (q+1)j\log p} \\
    &\leq \frac{2n^\alpha}{1-K}  \exp\left(-\frac{n-n^\alpha}{8} + n^{\alpha}\spr{q\log\fpr{n-\abs{M_{\text{o}}}}+ (q+1)\log p} \right. \\
& \hspace{5 cm} \left. + \frac{q}{2}\log\fpr{12\pi\bar{\lambda}_v}\{n^\alpha\mathrm{I}(\bar{\lambda}_v > \frac{1}{12\pi})+1\} \right) \\
    & \quad + \frac{n^\alpha}{1-K} \exp\left(-\frac{\epsilon(n-n^\alpha)}{36} +qn^{\alpha}\log\fpr{n-\abs{M_{\text{o}}}}+ (q+1)n^\alpha\log p  \right. \\
    & \hspace{50 mm} + \left. \frac{q}{2}\fpr{1+\log\fpr{12\pi\bar{\lambda}_v}}\{n^\alpha\mathrm{I}(\bar{\lambda}_v > 1/12\pi e)+1\} \right).
\end{align*}}

Lastly, by Condition~\ref{cond: supelementsB} since $1-V_{2,n} - V_{3,n} - V_{4,n} - V_{5,n} \to 1$ as $n \to \infty$ or $n, p \to \infty$, $T_{2} = T_{21} + T_{22} \to 0$ in probability as $n \to \infty$ or $n, p \to \infty$. This completes the proof of Theorem~\ref{Lemma: Mainresult}. 

\end{proof}

\begin{acks}{
The authors would like to thank the associate editor and two reviewers for their constructive comments which led to a significantly improved version of the manuscript. The authors would also like to thank Ms. Sukanya Bhattacharyya for providing additional computational resources without which the extensive numerical studies presented in the paper would not have been possible. 

Research reported in this publication was supported by the National Heart, Lung, and
Blood Institute of the National Institutes of Health under Award Number R56HL155373.
The content is solely the responsibility of the authors and does not necessarily represent
the official views of the National Institutes of Health.}
\end{acks}



\bibliographystyle{imsart-number}
\bibliography{bibliography}

\end{document}